\documentclass[reprint,aps,amsmath,amssymb,pra,superscriptaddress,floatfix,10pt,tightenlines]{revtex4-2}
\usepackage{amsmath, amssymb, amsthm}
\usepackage{mathtools}
\usepackage{color}
\usepackage{braket}
\usepackage{comment}
\usepackage{bbm}
\usepackage{bm}
\usepackage[breaklinks=true]{hyperref}
\usepackage{url}
\usepackage{color}

\DeclareMathOperator{\tr}{Tr}

\DeclareMathOperator{\entropy}{H}
\DeclareMathOperator{\relent}{D}
\DeclareMathSymbol{\mhyph}{\mathalpha}{operators}{`-}

\newcommand{\cc}{\mathrm{c}\mhyph\mathrm{c}}
\newcommand{\cq}{\mathrm{c}\mhyph\mathrm{q}}
\newcommand{\qc}{\mathrm{q}\mhyph\mathrm{c}}
\newcommand{\I}{\mathbbm{1}}

\allowdisplaybreaks[4]

\theoremstyle{plain}
\newtheorem{theorem}{Theorem}
\newtheorem{lemma}[theorem]{Lemma}

\newtheorem{proposition}[theorem]{Proposition}
\theoremstyle{definition}
\newtheorem{definition}[theorem]{Definition}

\theoremstyle{remark}
\newtheorem{conjecture}{Conjecture}
\newtheorem{remark}[conjecture]{Remark}

\begin{document}
\author{Kohdai Kuroiwa}
\email{kkuroiwa@uwaterloo.ca}
\affiliation{Institute for Quantum Computing and Department of Physics and Astronomy, University of Waterloo, 200 University Avenue West, Waterloo, Ontario, Canada, N2L 3G1}
\author{Hayata Yamasaki}
\email{hayata.yamasaki@gmail.com}
\affiliation{Institute for Quantum Optics and Quantum Information (IQOQI), Austrian Academy of Sciences, Boltzmanngasse 3, 1090 Vienna, Austria}
\affiliation{Institute for Atomic and Subatomic Physics, Vienna University of Technology, 1020 Vienna, Austria}

\date{\today}
\title{Asymptotically Consistent Measures of General Quantum Resources: Discord, Non-Markovianity, and Non-Gaussianity}

\begin{abstract} 
    Quantum resource theories provide a unified framework to quantitatively analyze inherent quantum properties as resources for quantum information processing. 
    So as to investigate the best way for quantifying resources, desirable axioms for resource quantification have been extensively studied through axiomatic approaches. 
    However, a conventional way of resource quantification by resource measures with such desired axioms may contradict rates of asymptotic transformation between resourceful quantum states due to an approximation in the transformation.
    In this paper, we establish an alternative axiom, \textit{asymptotic consistency} of resource measures, and we investigate \textit{asymptotically consistent resource measures}, which quantify resources without contradicting the rates of the asymptotic resource transformation. 
    We prove that relative entropic measures are consistent with the rates for a broad class of resources, \textit{i.e.}, all convex finite-dimensional resources, 
    \textit{e.g.}, entanglement, coherence, and magic, 
    and even some nonconvex or infinite-dimensional resources such as quantum discord, non-Markovianity, and non-Gaussianity. 
    These results show that consistent resource measures are widely applicable to the quantitative analysis of various inherent quantum-mechanical properties. 
\end{abstract}

\maketitle

\section{Introduction}
Quantum resource theories (QRTs)~\cite{Chitambar2018,Kuroiwa2020} provide a unified framework for quantitatively analyzing quantum properties, such as entanglement~\cite{Horodecki2009} and magic~\cite{Veitch2014, Howard2017,liu2020manybody}, 
which underlies the advantage of quantum information processing over classical information processing. 
General frameworks to reveal the universal properties of quantum resources have been widely studied~\cite{Horodecki2013b, Brandao2015, Korzekwa2019,Liu_CL2019,Vijayan2019,Liu_ZW2019a,Takagi2019b,Takagi2019a,Fang2020,Regula2020,Sparaciari2020,Kuroiwa2020}, 
and quantification of resources is one of the major focuses of QRTs~\cite{Liu2017,Regula2017,Bromley2018,Anshu2018,Uola2019,gonda2019monotones,Regula2021,Lami2021}.
To quantify resources, we use real-valued functions of states called \textit{resource measures}. 
Resource measures can quantify resources without contradicting one-shot convertibility of resources; that is, the resource amount quantified by a resource measure does not increase when we transform a resource by free operations without error.
For example, \textit{the relative entropy of resource}, that is, the relative entropy between a given state and its closest free state, is conventionally used in various QRTs~\cite{Vedral1997,Veitch2014,Baumgratz2014,Horodecki2013,Modi2010,Ibinson2008,Geroni2008,lami2021attainability}.
With various resource measures proposed, desirable properties of resource measures have been extensively studied through axiomatic approaches~\cite{Guifre2000,Horodecki2001,Donald2002,Synak2006,Girard2014,Regula2017,Kuroiwa2020} to seek the best resource measure. 

On the other hand, it is known that resources are not necessarily comparable in terms of the exact convertibility.
For example, in the QRT of magic for qutrits, there are two classes of states impossible to be exactly converted to each other using free operations~\cite{Veitch2014}.  
Moreover, resource measures with conventionally adopted axioms, \textit{i.e.,} asymptotic continuity and additivity, may contradict rates of asymptotic state conversion~\cite{Kuroiwa2020}, where many copies of a given state are converted by free operations into many copies of a target state within a vanishingly small but nonzero error. 
Therefore, it is vital to investigate properties of resource measures associated with the asymptotic state conversion in addition to the exact conversion. 
While these two concepts, resource measures and asymptotic state conversion, were previously discussed separately, 
we proposed in Ref.~\cite{Kuroiwa2020} a concept of \textit{asymptotically consistent resource measures}, or \textit{consistent resource measures} for short, which quantifies resources without contradicting the rates of the asymptotic state conversion as well as the exact conversion. 
Reference~\cite{Kuroiwa2020} also shows that a relative entropic measure called \textit{the regularized relative entropy of resource} serves as a consistent resource measure in QRTs with particular restrictions, namely, convex and finite-dimensional QRTs with a full-rank free state.

However, physically well-motivated resources do not necessarily satisfy these restrictions. 
For example, the sets of states that have no quantum discord, \textit{e.g.}, classical-classical states and classical-quantum states, are not convex~\cite{Bera2017}.
In addition, the set of quantum Markov chains~\cite{Accardi1983} and that of Gaussian states~\cite{Weedbrook2012} are also known to be nonconvex.
Moreover, the Gaussian states are defined on an infinite-dimensional state space.
Therefore, when we regard these properties as quantum resources, the existing technique for proving the consistency of the entropic measure is no longer directly applicable~\cite{Kuroiwa2020,Weis2012}, 
and a substantial breakthrough is needed for further development of general QRTs that cover such nonconvex or infinite-dimensional resources.

In this paper, we investigate the regularized relative entropy of resource as a prospective candidate for a consistent resource measure even for the general classes of resources.
We prove the consistency of the regularized relative entropy of resource in \textit{all} finite-dimensional convex QRTs even without full-rank free states. 
Moreover, as physically well-motivated nonconvex or infinite-dimensional resources, we study discord, non-Markovianity, and non-Gaussianity. 
Even though the proof technique of the previous work cannot be straightforwardly applied due to the nonconvexity and the infinite-dimensionality, 
we analyze the consistency of the regularized relative entropy of resource for these three resources. 
A more detailed overview of our results is at the beginning of Sec.~\ref{sec:results}.
Here, we emphasize that the concept of consistent resource measures accomplishes the fundamental and physically intuitive property that resources do not increase under free operations in both asymptotic and one-shot state conversions.
Even though the regularized relative entropy of resource is a measure asymptotically defined, which may not be extracted with one-shot manipulation, 
the regularized relative entropy still qualifies as a valid resource measure with monotonicity under one-shot state conversion. 

Our results show the existence of a consistent resource measure even for physically important classes of nonconvex or infinite-dimensional resources, namely, discord, non-Markovianity, and non-Gaussianity,
which are not covered in our initial work~\cite{Kuroiwa2020} introducing consistent resource measures. 
Our analysis should be an essential first step for the breakthrough in general QRTs to broadly cover the general class of resources.
We believe that our investigation for consistent resource measures leads to further understandings of quantifications of resources useful for studying a larger class of physically well-motivated quantum phenomena. 

\section{Consistent resource measure}~\label{sec:prelim}
In this section, we provide a brief review of consistent resource measures. 
For more details, see Sections 6.3 and 6.4 of Ref.~\cite{Kuroiwa2020}.
Throughout this paper, we let $\mathcal{D}(\mathcal{H})$ denote the set of states on a quantum system $\mathcal{H}$, and we consider a subset $\mathcal{S}(\mathcal{H}) \subseteq \mathcal{D}(\mathcal{H})$ to be the set of states of interest in QRTs.
In a QRT, free operations are defined as a subclass of quantum operations (linear completely positive and trace-preserving maps) that is closed under composition and tensor product, and includes the identity map and the (partial) trace~\cite{Chitambar2018,Kuroiwa2020}. 
Free states on $\mathcal{H}$ are states in $\mathcal{S}(\mathcal{H})$ into which an arbitrary state can be converted by free operations.
To begin with, we provide a definition of a resource measure. 
\begin{definition}[Resource Measure]
    A resource measure $R_\mathcal{H}$ is a family of real functions from $\mathcal{S}\left(\mathcal{H}\right)$ to $\mathbb{R}$ satisfying \textit{monotonicity}:
    for any states $\phi \in \mathcal{S}(\mathcal{H})$ and $\psi \in \mathcal{S}(\mathcal{H}')$, 
    \begin{equation*}
        \exists \mathcal{N}: \mathrm{free\,\, operation \,\, s.t.\,\,} \mathcal{N}(\phi) = \psi \Rightarrow R_\mathcal{H}(\phi) \geqq R_{\mathcal{H}'}(\psi).
    \end{equation*} 
    We may omit the subscript of $R_\mathcal{H}$ to write $R$ for brevity when the system $\mathcal{H}$ is obvious.
\end{definition}

In Ref.~\cite{Kuroiwa2020}, a concept of \textit{consistency} of a resource measure is introduced, which we investigate in this paper. 
As implied in the definition, a resource measure $R$ quantifies resources consistently with the exact state conversion under the free operations. 
By also considering the consistency with the rate of the asymptotic state conversion, we give the definition of a consistent resource measure.
\begin{definition}[Asymptotically Consistent Resource Measure~\cite{Kuroiwa2020}]~\label{def_consistency}
    For quantum systems $\mathcal{H}$ and $\mathcal{H}'$, a resource measure $R$ is called \textit{asymptotically consistent}, or \textit{consistent} for short, if for any states $\phi \in \mathcal{S}\left(\mathcal{H}\right)$ and $\psi \in \mathcal{S}\left(\mathcal{H}'\right)$, it holds that 
    \begin{equation}\label{ineq_consistency}
        R_{\mathcal{H}'}\left(\psi\right)r\left(\phi \to \psi \right) \leqq R_{\mathcal{H}}\left(\phi\right).
    \end{equation} 
    In Eq.~\eqref{ineq_consistency}, $r\left(\phi \to \psi \right)$ is the rate of the asymptotic state conversion between $\phi$ and $\psi$ definied as 
    \begin{equation*}
        \begin{aligned}
        r\left(\phi\to\psi\right)\coloneqq
        &\inf \Big\{r\geqq 0:\exists\left(\mathcal{N}_n:\mathrm{free\,\, operation}\right)_{n\in\mathbb{N}},\\
        &\quad\liminf_{n\to\infty}\left\|\mathcal{N}_n\left(\phi^{\otimes n}\right)-\psi^{\otimes \left\lceil rn\right\rceil}\right\|_1 = 0\Big\}, 
        \end{aligned}
    \end{equation*}
    where  $\|\cdot\|_1$ is the trace norm, and $\lceil{}\cdots{}\rceil$ is the ceiling function.
\end{definition}
We here note that from the definition~\eqref{ineq_consistency}, consistent resource measures are weakly additive~\cite{Kuroiwa2020}; 
that is, $R(\phi^{\otimes n}) = nR(\phi)$ for all states $\phi$ and all positive integers $n$.
In Ref.~\cite{Kuroiwa2020}, a sufficient condition for the \textit{regularization} $R^{\infty}$ of a resource measure $R$, defined as $R^\infty(\phi) \coloneqq \lim_{n\to\infty} R(\phi^{\otimes n})/n$, to be consistent was provided.
The regularization of a resource measure also serves as a resource measure.
The sufficient condition consists of the following conventionally considered properties for a resource measure:
\begin{enumerate}
    \item \textit{Asymptotic continuity}: 
        For any sequence of positive integers ${\left(n_i\right)}_{i\in\mathbb{N}}$, and any sequences of states ${\left(\phi_{n_i} \in \mathcal{S}\left(\mathcal{H}^{\otimes n_i}\right)\right)}_i$ and ${\left(\psi_{n_i} \in \mathcal{S}\left(\mathcal{H}^{\otimes n_i}\right)\right)}_i$ satisfying $\lim_{i \to \infty} \left\|\phi_{n_i}-\psi_{n_i}\right\|_1 =0$, 
        it holds that $\lim_{i\to\infty}|R_{\mathcal{H}^{\otimes n_i}}(\phi_{n_i}) - R_{\mathcal{H}^{\otimes n_i}}(\psi_{n_i})|/n_i = 0$. 

    \item \textit{Subadditivity}: 
        For any states $\phi$ and $\psi$, it holds that 
        $R\left(\phi\otimes\psi\right)\leqq R\left(\phi\right)+R\left(\psi\right)$.
\end{enumerate}
Note that, in Ref.~\cite{ferrari2021asymptotic}, another sufficient condition for consistency of resource measures is given; 
in this paper, we investigate the sufficient condition shown in the following lemma~\cite{Kuroiwa2020}.
\begin{lemma}[Sufficient Condition for Consistency]~\label{lemma_sufficient_condition}
    The regularization $R^\infty$ of a resource measure $R$ is consistent if $R$ is asymptotically continuous and subadditive. 
\end{lemma}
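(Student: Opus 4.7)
The plan is to show that for every rate $r > r(\phi \to \psi)$ one has $r \cdot R^\infty(\psi) \leq R^\infty(\phi)$, and then take the infimum over such $r$. The three hypotheses will play complementary roles: subadditivity ensures via Fekete's subadditive lemma that the regularization $R^\infty(\cdot)$ is well defined as a genuine limit (equal to the infimum over $n$), monotonicity of $R$ turns any free-operation protocol into an inequality between values of $R$, and asymptotic continuity allows one to replace the protocol output by the target tensor power $\psi^{\otimes m}$ up to an $o(m)$ error.

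Concretely, I would first fix an arbitrary $r > r(\phi \to \psi)$. By the definition of $r(\phi \to \psi)$ as an infimum combined with the liminf condition, one can extract a sequence of free operations $(\mathcal{N}_{n_i})_i$ such that $\|\mathcal{N}_{n_i}(\phi^{\otimes n_i}) - \psi^{\otimes m_i}\|_1 \to 0$, where $m_i \coloneqq \lceil r n_i \rceil$. Writing $\sigma_i \coloneqq \mathcal{N}_{n_i}(\phi^{\otimes n_i}) \in \mathcal{S}(\mathcal{H}'^{\otimes m_i})$, monotonicity of $R$ under $\mathcal{N}_{n_i}$ gives $R(\sigma_i) \leq R(\phi^{\otimes n_i})$, and asymptotic continuity applied to the pairs $(\sigma_i, \psi^{\otimes m_i})$ on $\mathcal{H}'^{\otimes m_i}$ yields
\[
\frac{R(\sigma_i) - R(\psi^{\otimes m_i})}{m_i} \xrightarrow[i \to \infty]{} 0.
\]

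Combining these, since $m_i \to \infty$ and $R(\psi^{\otimes m})/m \to R^\infty(\psi)$, one gets $R(\sigma_i)/m_i \to R^\infty(\psi)$. Dividing the monotonicity bound by $m_i$ and using $n_i/m_i \to 1/r$ gives
\[
R^\infty(\psi) = \lim_{i\to\infty}\frac{R(\sigma_i)}{m_i} \leq \lim_{i\to\infty}\frac{R(\phi^{\otimes n_i})}{n_i}\cdot\frac{n_i}{m_i} = \frac{R^\infty(\phi)}{r},
\]
so $r \cdot R^\infty(\psi) \leq R^\infty(\phi)$. Letting $r \downarrow r(\phi \to \psi)$ delivers the consistency inequality \eqref{ineq_consistency}.

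The step I expect to be the main obstacle is the careful bookkeeping in the asymptotic-continuity application: the hypothesis is stated for pairs of states living on a common tensor power $\mathcal{H}^{\otimes n_i}$, so one must verify that the protocol outputs $\sigma_i$ indeed sit on $\mathcal{H}'^{\otimes m_i}$ (the same system as $\psi^{\otimes m_i}$) and that the normalization is by $m_i$ rather than $n_i$, so that the crucial factor of $1/r$ emerges on the right-hand side. A secondary subtlety is handling the degenerate cases $r(\phi \to \psi) \in \{0,+\infty\}$, which become vacuous once one assumes the standard convention that $R$, and hence $R^\infty$, takes nonnegative values.
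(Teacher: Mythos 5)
Your overall machinery --- monotonicity of $R$ under the protocol, asymptotic continuity applied to the pair $(\sigma_i,\psi^{\otimes m_i})$ on $\mathcal{H}'^{\otimes m_i}$ with normalization by $m_i$, Fekete's lemma from subadditivity to make $R^\infty$ a genuine limit, and the factor $n_i/m_i\to 1/r$ --- is exactly the technique the paper uses in its analogous arguments (Theorem~\ref{finite_theorem1} and the energy-constrained non-Gaussianity theorem). But there is a genuine gap at the very first step: you fix $r > r(\phi\to\psi)$ and claim to ``extract a sequence of free operations $(\mathcal{N}_{n_i})_i$ with $\|\mathcal{N}_{n_i}(\phi^{\otimes n_i})-\psi^{\otimes\lceil rn_i\rceil}\|_1\to 0$.'' No such sequence is guaranteed to exist. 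The achievable set $\{r\geqq 0:\exists(\mathcal{N}_n),\ \liminf_n\|\mathcal{N}_n(\phi^{\otimes n})-\psi^{\otimes\lceil rn\rceil}\|_1=0\}$ is downward closed (producing \emph{more} copies of $\psi$ is harder, and one can always discard copies), not upward closed; the operationally meaningful rate is its supremum, as the paper's own Eq.~\eqref{eq:conversion_rate} in the energy-constrained setting makes explicit (the ``$\inf$'' in Definition~\ref{def_consistency} is inconsistent with that and would render the lemma vacuous, since rate $0$ is always achievable). So a rate strictly \emph{above} $r(\phi\to\psi)$ is precisely a rate at which no asymptotic protocol exists, and even under a literal reading of the infimum, membership of $r$ in the set is not implied by $r$ exceeding the infimum --- you would only get a protocol at some achievable $s<r$, which produces $\psi^{\otimes\lceil sn\rceil}$, strictly fewer copies than the $\psi^{\otimes\lceil rn\rceil}$ your asymptotic-continuity step requires.

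The fix is to reverse the direction of approximation, which is what the paper does: for arbitrary $\delta>0$ take the \emph{achievable} rate $r\coloneqq r(\phi\to\psi)-\delta$, extract the protocol at that rate, run your chain of inequalities verbatim to obtain $r\,R^\infty(\psi)\leqq R^\infty(\phi)$, and then let $\delta\downarrow 0$. Everything downstream of your first step survives this change unaltered, so the error is local but it does invalidate the proof as written: the inequality $r\,R^\infty(\psi)\leqq R^\infty(\phi)$ must be established for rates approaching $r(\phi\to\psi)$ from \emph{below}, where protocols exist, not from above, where they do not.
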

From this lemma, a promising way for constructing a consistent resource measure is to consider a subadditive resource measure and then to check its asymptotic continuity.
As a subadditive resource measure, we here consider the relative entropy of resource $R_\textup{R}$, which is widely studied in various QRTs~\cite{Vedral1997,Veitch2014,Baumgratz2014,Horodecki2013,Modi2010,Ibinson2008,Geroni2008}. 
For our purpose, it may not be suitable to use non-subadditive measures such as robustness-based measures~\cite{Vidal1999,Steiner2003,Piani2016,Howard2017,Regula2017}.
\begin{definition}[Relative Entropy of Resource]
The relative entropy of resource $R_\textup{R}$ is defined as
\begin{equation}\label{def:relint}
    R_\textup{R}(\phi) \coloneqq \min_{\psi:\mathrm{free}} \relent(\phi\|\psi)
\end{equation} 
for all states $\phi$, where $\relent(\cdot\|\cdot)$ is the relative entropy $\relent(\rho||\sigma)\coloneqq \tr(\rho\log_2\rho - \rho\log_2\sigma)$. 
\end{definition}
The regularization of the relative entropy of resource $R^{\infty}_R$ is called \textit{the regularizaed relative entropy of resource}.
Reference~\cite{brandao2010} gives an operational interpretation of the regularlized relative entropy of resource; 
the regularized relative entropy of resource is considered as the optimal rate of asymptotic hypothesis testing where one aims to distinguish a given state and its closest free state.

\section{Main results}~\label{sec:results}
In this section, we investigate the consistency of the regularized relative entropy of resource for general resources of physical importance. 
Before we go through further details, we briefly overview the backgrounds and our contributions. 

In Ref.~\cite{Kuroiwa2020}, it was shown that the regularized relative entropy of resource is consistent in all finite-dimensional convex QRTs with a full-rank free state. 
The proof was based on asymptotic continuity of the relative entropy of resource in such QRTs~\cite{Synak2006,Winter2016}.

Here, we consider relaxing the assumption. 
First, in Sec.~\ref{subsec:finite_convex}, we study the necessity of full-rank free states. 
Indeed, by refining the definition of consistency, we show that the regularized relative entropy of resource is consistent in finite-dimensional convex QRTs even without full-rank free states.
Next, we investigate physically well-motivated nonconvex or infinite-dimensional resources. 
We will overview the physical significance of these resources at the beginning of each section. 
For nonconvex resources, namely, discord (Sec.~\ref{subsec:discord}) and non-Markovianity (Sec.~\ref{subsec:non-Markovianity}), 
despite the existence of a counterexample for asymptotic continuity of the relative entropy of resource~\cite{Weis2012}, 
we prove that the regularized relative entropy of resource is consistent. 
Moreover, we analyze an infinite-dimensional resource, non-Gaussianity, in Sec.~\ref{subsec:non-Gaussianity}. 
Contrary to the statement in the previous work~\cite{Geroni2008}, we prove that the relative entropy of non-Gaussianity is not continuous even with a reasonable energy constraint. 
On the other hand, if we take the convex hull of the set of Gaussian states, we show that the relative entropy of non-Gaussianity is asymptotically continuous under an appropriate energy constraint; 
therefore, the regularized relative entropy can be employed as a consistent resource measure under the energy constraint in this convex but infinite-dimensional QRT. 

\subsection{Finite dimensional convex QRTs}~\label{subsec:finite_convex}
In this section, we show that the regularized relative entropy of resource is indeed consistent in all finite-dimensional convex QRTs even without full-rank free states. 
This result contributes to removing an arguably artificial restriction of imposing full-rankness of free states in considering general QRTs.
Here, the problem is that if no full-rank free state exists, the regularized relative entropy of resource $R_\textup{R}$ in Eq.~\eqref{def:relint} may become infinite, and the definition of consistency may become an indeterminate form. 
To resolve these problems, we consider the inequality~\eqref{ineq_consistency} to be also true 
    when $R(\phi) = \infty$ or
    when $R(\phi) < \infty$, $R(\psi) = \infty$, and $r(\phi\to\psi) = 0$. 
If the set of free states contains a full-rank state, the regularized relative entropy of resource becomes consistent as shown in the previous work. 
As shown in the Supplemental Materials, we show that these two cases cover all the cases where $R$ diverges to infinity. 
Therefore, under this refined definition to take infinities into account, 
the regularized relative entropy of resources can be consistent even in the QRTs without full-rank free states.

\subsection{Discord}~\label{subsec:discord}
In this section, we investigate consistent resource measures for discord. 
Discord is conceptually understood as a form of quantum correlation; it is defined as the difference between total (quantum and classical) correlation and classical correlation~\cite{Bera2017}. 
While entanglement is widely appreciated as a quantum correlation enhancing various quantum information processing tasks~\cite{Horodecki2009}, 
it has been revealed that discord serves as a resource for several tasks even without the existence of entanglement~\cite{Datta2008, Dakic2012, Li2012, MADHOK2013, Gu2012, Pirandola2014, Weedbrook2016}. 
In addition, discord also has operational meanings in several tasks~\cite{Madhok2011, Cavalcanti2011}. 
From such operational significance of discord, it is essential to establish a consistent way for quantitative analysis of discord. 

To analyze discord for this purpose, one may consider a QRT of discord by choosing, as free operations, a class of operations that preserves a set of states with zero discord. 
As explained in Sec.~\ref{sec:prelim}, each choice of free operations determines the corresponding set of free states. 
For discord, there have been three kinds of sets of states conventionally considered as free states.
Here, we only consider the bipartite case for simplicity of presentation, while it is also possible to consider multipartite cases~\cite{Modi2010}.  
Hereafter, we consider two parties $A$ and $B$ with finite-dimensional quantum systems $\mathcal{H}^{A}$ and $\mathcal{H}^{B}$ respectively. 
Then, let $D$ denote the dimension of the composite system $\mathcal{H}^{A}\otimes \mathcal{H}^{B}$.
The first set of free states is the set of classical-classical states
$\cc \coloneqq \left\{\sum_{k}p_{k}\ket{a_k}\bra{a_k}^{A}\otimes\ket{b_k}\bra{b_k}^{B}\right\}$, 
where $\{\ket{a_k}^{A}\}$ is an orthonormal basis of $\mathcal{H}^{A}$ and $\{\ket{b_k}^{B}\}$ is an orthonormal basis of $\mathcal{H}^{B}$. 
Note that these bases are not necessarily fixed to some standard bases in contrast to the QRT of coherence~\cite{Streltsov2017}.
The second one is the set of quantum-classical states
$\qc \coloneqq \left\{\sum_{k}p_{k}\rho_k^{A}\otimes\ket{b_k}\bra{b_k}^{B}\right\}$,
and the last one is the set of classical-quantum states $\cq$, where the roles of $A$ and $B$ in 
$\qc$ are interchanged.
Due to the symmetry of the definitions of $\qc$ and $\cq$, we only consider $\cc$ and $\qc$ in the rest of this section.
Since the bases $\{\ket{a_k}^{A}\}$ and $\{\ket{b_k}^{B}\}$ are not fixed but arbitrary, it is clear that none of these three sets are convex. 
Then, for each of these sets of free states, discord can be analyzed in the framework of QRTs. 
Due to the nonconvexity of these sets, however, these QRTs of discord are nonconvex QRTs.

Based on these free states, we can define the relative entropy of discord as
\begin{equation}
    D^{\mathcal{{F}}}_{\textup{rel}}(\phi^{AB}) \coloneqq \min_{\psi^{AB} \in \mathcal{F}} \relent(\phi^{AB}\|\psi^{AB}),
\end{equation}
where $\mathcal{F}$ is either $\cc$ or $\qc$. 
Indeed, we derive a simple expression of the relative entropic measure for any choice of free states. 
In more detail, for all states $\phi^{AB} \in \mathcal{D}(\mathcal{H}^{A}\otimes\mathcal{H}^{B})$, it holds that
\begin{equation}~\label{eq:relent_discord}
    D^{\mathcal{F}}_{\textup{rel}}(\phi^{AB}) = \min_{\chi^{\mathcal{F}}_{\phi}} \entropy(\chi^{\mathcal{F}}_{\phi}) - \entropy(\phi^{AB}),
\end{equation} 
with the von Neumann entropy $\entropy(\rho)\coloneqq -\tr(\rho \log_2 \rho)$. 
The minimization in Eq.~\eqref{eq:relent_discord} is taken over the free states 
$\chi^{\cc}_{\phi} \coloneqq \sum_{k} (\bra{a_k,b_k})\phi^{AB}(\ket{a_k,b_k})\ket{a_k,b_k}\bra{a_k,b_k}^{AB}$ or 
$\chi^{\qc}_{\phi} \coloneqq \sum_{k} (\mathbbm{1}^{A}\otimes\bra{b_k})\phi^{AB}(\mathbbm{1}^{A}\otimes\ket{b_k}) \otimes\ket{b_k}\bra{b_k}^{B}$
with the identity operator $\mathbbm{1}$. 
See also Ref.~\cite{Modi2010} for derivation of this expression in the case $\mathcal{F} = \cc$.

With this simplified expression of the relative entropy of discord, we prove that this relative entropic measure is asymptotically continuous by repeatedly using the Fannes-Audenaert inequality~\cite{Audenaert2007} on the asymptotic continuity of the von Neumann entropy. 
Indeed, we show that for all states $\phi^{AB},\psi^{AB} \in \mathcal{D}(\mathcal{H}^{A}\otimes\mathcal{H}^{B})$, it holds that
\begin{align}~\label{theorem_discord_continuity}
    &|D^{\mathcal{F}}_{\textup{rel}}(\phi^{AB}) - D^{\mathcal{F}}_{\textup{rel}}(\psi^{AB})| \\
    &\leqq \|\phi^{AB} - \psi^{AB}\|_1 \log_2D + 2h_2\left(\frac{\|\phi^{AB} - \psi^{AB}\|_1}{2}\right),\nonumber
\end{align}
where $h_2$ is the binary entropy function defined as $h_2(x) \coloneqq -x\log_2x - (1-x)\log_2(1-x)$ for $x\in[0,1]$.  
Then, Eq.~\eqref{theorem_discord_continuity} implies asymptotic continuity of the relative entropy of discord. 
The rigorous proof is shown in the Supplemental Materials.
Considering Lemma~\ref{lemma_sufficient_condition} and Eq.~\eqref{theorem_discord_continuity} together, 
we conclude that the regularization of the relative entropy of discord is consistent for all the choices of free states. 

On the other hand, we also prove that another discord measure, \textit{measurement-based quantum discord}~\cite{Ollivier2001},
is subadditive and asymptotically continuous. 
Therefore, we can also employ its regularization as a consistent resource measure for discord. 
The detailed definition and proofs are shown in the Supplemental Materials.

\subsection{Non-Markovianity}~\label{subsec:non-Markovianity}
In this section, we prove that the regularized relative entropy of non-Markovianity is consistent. 
The quantum Markov chain is a quantum extension of the classical Markov chain, originally formulated in Ref.~\cite{Accardi1983}.  
In the classical case, a sequence of random variables $XYZ$ is said to be a Markov chain if $Z$ conditioned on $Y$ is independent of $X$, which is indeed equivalent to $I(X:Z|Y) = 0$, where $I(X:Z|Y)$ is the conditional mutual information~\cite{WYNER1978}. 
Analogously, a quantum state $\phi^{ABC} \in \mathcal{D}(\mathcal{H}^{A}\otimes \mathcal{H}^{B}\otimes \mathcal{H}^{C})$ is said to be a quantum Markov chain if $C$ conditioned on $B$ is independent of $A$~\cite{Accardi1983}, which is indeed equivalent to the condition $I(A:C|B)_{\phi} = 0$~\cite{Hayden2004} in terms of the quantum conditional mutual information~\cite{CERF1998}. 
Non-Markovianity serves as a resource in various tasks. 
For example, while a classical non-Markov chain enables the secret key agreement~\cite{Maurer1999}, 
quantum non-Markovianity is exploited as a resource in the quantum one-time pad~\cite{Sharma2020}, 
which is a protocol to ensure secure communication between two parties. 

The QRT of non-Markovianity has been established in Refs.~\cite{Wakakuwa2017,Wakakuwa2019} to analyze this quantum version of the Markov property in the QRT framework.
Hereafter, we consider the tripartite system $\mathcal{H}^{A} \otimes \mathcal{H}^{B} \otimes \mathcal{H}^{C}$ with dimension $D$ in total.
The set of quantum Markov chains $\mathcal{D}_{\mathrm{Markov}}$ is defined as $\mathcal{D}_{\mathrm{Markov}} \coloneqq \{\psi \in \mathcal{D}(\mathcal{H}^{A}\otimes \mathcal{H}^{B}\otimes \mathcal{H}^{C}): I(A:C|B)_{\psi} = 0\}$. 
Here, note that $\mathcal{D}_{\mathrm{Markov}}$ is not convex~\cite{Wakakuwa2017}; the QRT of non-Markovianity is a nonconvex QRT. 
Then, the relative entropy of non-Markovianity is defined as 
\begin{equation}
    \Delta(\phi^{ABC}) \coloneqq \min_{\psi \in \mathcal{D}_{\mathrm{Markov}}}  \relent(\phi^{ABC}\|\psi^{ABC}).
\end{equation}
We prove that the relative entropy of non-Markovianity is asymptotically continuous. 
We show that for all states $\phi, \psi \in \mathcal{D}(\mathcal{H}^A\otimes \mathcal{H}^B \otimes \mathcal{H}^C)$ satisfying $\|\phi-\psi\|_1/2 \leqq 1/3$,  
it holds that
\begin{align}~\label{theorem_markov_continuity}
    &|\Delta(\phi^{ABC}) - \Delta(\psi)^{ABC}|\\ 
    &\leqq 2(\|\phi^{ABC}-\psi^{ABC}\|_1\log_2 D + h_2(\|\phi^{ABC}-\psi^{ABC}\|_1)). \nonumber
\end{align}
Then, Eq.~\eqref{theorem_markov_continuity} implies that the relative entropy of non-Markovianity is asymptotically continuous. 
Our proof is based on a simple expression of the relative entropy of non-Markovianity, in a similar form to Eq.~\eqref{eq:relent_discord}, as a gap of two von Neumann entropies~\cite{Ibinson2008}. 
Then, we explicitly show asymptotic continuity of the relative entropy of non-Markovianity using this characterization. 
In the proof, 
we further decompose the characterization and repeatedly exploit Fannes-Audenaert inequality~\cite{Audenaert2007}. 
See the Supplemental Materials for more details of the proof.
Therefore, from Lemma~\ref{lemma_sufficient_condition} and Eq.~\eqref{theorem_markov_continuity}, it follows that the regularized relative entropy of non-Markovianity is consistent. 

\subsection{Non-Gaussianity}~\label{subsec:non-Gaussianity}
In this section, we investigate the consistency of the relative entropic measure in the QRT of non-Gaussianity~\cite{Takagi2018,Albarelli2018,Lami2018,Yamasaki2020}. 
Gaussian quantum information~\cite{Weedbrook2012}, based on Gaussian states and Gaussian operations, plays a central role in continuous-variable (CV) quantum information processing 
since Gaussian states and operations are experimentally implementable with high-precision control in quantum-optical setups~\cite{Walls2007}. 
In addition, despite the infinite-dimensionality, the theoretical analysis of Gaussian states is relatively tractable due to the fact that their characterization functions are in Gaussian forms. 
Indeed, various quantum information processing protocols, such as quantum teleportation~\cite{Vaidman1994,Braunstein1998,Ralph1998}, noisy quantum cloning~\cite{Cerf2000,Lindblad2000}, quantum illumination~\cite{Tan2008}, quantum reading~\cite{Pirandola2011}, and quantum key distribution~\cite{Grosshans2002,Grosshans2003,Weedbrook2004,Weedbrook2006,Matsuura2021}, can be implemented by Gaussian states and operations. 
However, it was shown that non-Gaussianity is an essential resource for several tasks including entanglement distillation~\cite{Eisert2002, Giedke2002, Fiuraifmmode2002, Zhang2010}, quantum error correction~\cite{Niset2009}, and universal quantum computation using CV systems~\cite{Lloyd1999,Bartlett2002,Ohliger2010,Menicucci2006,yamasaki2020polylogoverhead}.  
Non-Gaussianity also reveals the implementation cost of CV fault-tolerant quantum computation~\cite{Yamasaki2020, yamasaki2020polylogoverhead}.

Here, we briefly review the basic concepts for the QRT of non-Gaussianity. For further details, see Refs.~\cite{Braunstein2005,Weedbrook2012}.
Consider an $N$-mode bosonic system and define the real quadratic field operators $\hat{x} \coloneqq (\hat{q}_1,\hat{p}_1,\ldots,\hat{q}_N,\hat{p}_N)$. 
A quantum state $\phi$ is described by its Wigner characteristic function $\chi(\xi,\phi) = \tr[\phi \hat{D}(\xi)]$.
Here, $\xi = (\xi_1,\ldots,\xi_{2N}) \in \mathbb{R}^{2N}$ is a real-valued vector, and $\hat{D}(\xi) \coloneqq \exp(i \hat{x}^\mathrm{T}\Omega\xi)$ is the Weyl displacement operator with $\Omega \coloneqq i\bigoplus_{k=1}^N Y$, where $Y$ is the Pauli-$Y$ matrix.

A quantum state $\psi$ is a Gaussian state if its characteristic function $\chi(\xi,\psi)$ is expressed as a Gaussian function whose form is determined only by the mean and covariance matrix of $\hat{x}$ with respect to $\psi$~\cite{Braunstein2005,Weedbrook2012}. 
We define $\mathcal{G}$ as the set of Gaussian states. 
Since the sum of Gaussian functions is not necessarily Gaussian, the set $\mathcal{G}$ is not convex. 
The relative entropy of non-Gaussianity is defined as 
\begin{equation}
    \delta[\phi] \coloneqq \min_{\psi \in \mathcal{G}} \relent(\phi\|\psi).  
\end{equation}
It is known that the relative entropy of non-Gaussianity can be written as 
$\delta[\phi] = \entropy(\phi_{\textup{G}}) - \entropy(\phi)$, where $\phi_{\textup{G}}$ is the Gaussification of $\phi$, that is, the Gaussian state with the same mean and covariance matrix as $\phi$~\cite{Geroni2008}. 
Despite this simple characterization, we cannot directly apply the same strategy for asymptotic continuity as in the QRTs of discord and non-Markovianity because Gaussian states are defined on an infinite-dimensional state space. 

While continuity of the relative entropy of non-Gaussianity is claimed in Ref.~\cite{Geroni2008}, we discover a counterexample showing that the relative entropy of non-Gaussianity is not continuous even under a reasonable energy constraint. 
Note that this discovery may also affect some of the arguments in other existing literature citing Ref.~\cite{Geroni2008}.
Consider a single-mode system with Hamiltonian $H = \hbar\omega a^\dagger a$. 
Take two Fock-diagonal states $\rho = \epsilon \ket{E/\epsilon}\bra{E/\epsilon} + (1-\epsilon)\ket{0}\bra{0}$ and $\sigma = \ket{0}\bra{0}$ with a small positive number $\epsilon$ and fixed positive number $E$.
Then, these states satisfy the energy constraint $\tr(\rho H) \leqq E$ and $\tr(\sigma H) \leqq E$. 
However, while $\|\rho-\sigma\|_1 \approx \epsilon$, we have $|\delta[\rho] - \delta[\sigma]| \approx (E+1)\log_2(E+1) - E\log_2E$ even for infinitesimal $\epsilon$.
Therefore, the relative entropy of non-Gaussianity is not continuous. 
A more detailed discussion and proof are shown in the Supplemental Materials.

We also show that we can avoid this discontinuity of the relative entropy of non-Gaussianity, by considering the convex QRT of non-Gaussianity~\cite{Takagi2018,Albarelli2018}. 
In the convex QRT of non-Gaussianity, we take, as free operations, Gaussian operations that may be conditioned on outcomes of homodyne detections.
That is, we use the convex hull of the set of Gaussian operations as free operations. 
The set of free states is the convex hull of the set of Gaussian states $\mathrm{conv}(\mathcal{G})$.
Then, we consider the following modified version of relative entropy of non-Gaussianity, which we call the relative entropy of convex non-Gaussianity:
\begin{equation}
    \delta_{\textup{conv}}[\phi] \coloneqq \min_{\psi \in \mathrm{conv}(\mathcal{G})} \relent(\phi\|\psi). 
\end{equation}
Here, to avoid discontinuity due to the infinite dimensionality of the state space, we prove asymptotic continuity under an energy constraint. 
Hereafter, let $H$ be the Hamiltonian of a given system. 
We suppose that the smallest eigenvalue of $H$ is fixed to zero. 
Then, let $E\geqq 0$ represent an upper bound of the energy of the system. 
Let $\mathcal{D}_{H,E}(\mathcal{H}) \coloneqq \{\rho \in \mathcal{D}(\mathcal{H}): \tr(H\rho)\leqq E\}$ denote a subset of the set of density operators that contains all states satisfying the energy constraint, and we here take the set of states of interest as $\mathcal{S}(\mathcal{H}) = \mathcal{D}_{H,E}(\mathcal{H})$.
The relative entropy of convex non-Gaussianity $\delta_{\mathrm{conv}}$ is indeed asymptotically continuous in the following sense:
if the Hamiltonian $H$ satisfies the condition from Ref.~\cite{Shirokov2018}
\begin{equation}\label{eq:hamiltonian_hypothesis}
    \lim_{\lambda \to \infty} [\tr(\mathrm{e}^{-\lambda H})]^\lambda = 0, 
\end{equation}
for any sequences $\{\phi_n \in \mathcal{D}_{H_n,nE}(\mathcal{H})\}_n$ and $\{\psi_n \in \mathcal{D}_{H_n,nE}(\mathcal{H}) \}_n$ such that $\lim_{n\to\infty} \|\phi_n - \psi_n\|_1 \to 0$ 
with the Hamiltonian $H_n = H\otimes \I \otimes \cdots \otimes \I + \I \otimes H \otimes \I \otimes \cdots \otimes \I + \cdots + \I\otimes \cdots \otimes \I\otimes H$, 
then it holds that
$\lim_{n\to\infty} |\delta_{\textup{conv}}[\phi_n] - \delta_{\textup{conv}}[\psi_n]|/n = 0$. 

Here, we note that the assumption~\eqref{eq:hamiltonian_hypothesis}, on which asymptotic continuity is based, is satisfied by representative choices of Hamiltonians such as those of harmonic oscillators representing light.
Our proof of the asymptotic continuity is based on a modified version of Proposition 3 in Ref.~\cite{Shirokov2018} showing that
for the Hamiltonian $H$ satisfying Eq.~\eqref{eq:hamiltonian_hypothesis}
and for all states $\phi,\psi \in \mathcal{D}_{H,E}(\mathcal{H})$ such that $\|\phi-\psi\| \leqq \epsilon \leqq 1/2$, 
it holds that $|\delta_{\textup{conv}}[\phi] - \delta_{\textup{conv}}[\psi]| \to 0$ as $\epsilon \to 0$. 
Then, letting $r^{(H,E)}\left(\phi\to\psi\right)$ denote the asymptotic conversion rate under the energy constraint, 
we have the following inequality showing the consistency of $\delta^\infty_{\mathrm{conv}}$ 
\begin{equation}~\label{eq:AC_convex_Gaussian}
    \delta_{\textup{conv}}^{\infty}[\psi]r^{(H,E)}\left(\phi\to\psi\right) \leqq \delta_{\textup{conv}}^{\infty}[\phi].
\end{equation}
See the Supplemental Materials for the rigorous statement and proof.
Therefore, from Eq.~\eqref{eq:AC_convex_Gaussian}, it follows that the regularized relative entropy of convex non-Gaussianity is consistent in terms of the asymptotic state conversion under the energy constraint.

\section{Conclusion}
We investigated the relative entropy of resource for general resources, including discord, non-Markovianity, and non-Gaussianity 
to show the existence of asymptotically consistent resource measures in QRTs of these resources. 
In contrast with the analysis in our initial paper~\cite{Kuroiwa2020}, which showed the consistency of the regularized relative entropy of resource 
by assuming the finite dimensionality, convexity, and existence of full-rank free states, 
we here showed that the regularized relative entropy of resource is consistent in all finite-dimensional convex QRTs even \textit{without assuring the existence of full-rank free states}. 
Furthermore, we proved that the regularized relative entropy of resource serves as a consistent resource measure for some physically well-motivated \textit{nonconvex} quantum resources, namely, discord and non-Markovianity.
For an \textit{infinite-dimensional} resource, non-Gaussianity, we showed that the relative entropy of non-Gaussianity is indeed discontinuous even though the continuity is argued in the previous research~\cite{Geroni2008}. 
At the same time, by considering the convex hull of Gaussian states, we proved that the convex relative entropy of non-Gaussianity is asymptotically continuous under an energy constraint and that its regularization is consistent. 
Thus, we disclosed the consistency of the regularized relative entropy of resource for general resources beyond those covered in the previous work. 
These results pave the way for the quantification of resources consistent with the asymptotic state conversion, which fuels further investigation of resources not exactly comparable with each other. 
The results in this paper establish a theoretical foundation for quantitative studies of a wide variety of quantum resources that may not satisfy restrictive mathematical assumptions such as convexity and finite-dimensionality.

\begin{acknowledgments}
    This work was supported by JSPS Overseas Research Fellowships and JST, PRESTO Grant Number JPMJPR201A, Japan.
    K.\ K.\ was supported by Mike and Ophelia Lazaridis, and research grants by NSERC\@.
\end{acknowledgments}

\bibliography{CRM}
\clearpage
\widetext
\begin{center}
	\textbf{\large Supplemental Materials}
\end{center}
\setcounter{section}{0}
\setcounter{theorem}{0}
\setcounter{equation}{0}
\setcounter{figure}{0}
\setcounter{table}{0}
\setcounter{page}{1}
\renewcommand{\thesection}{S\Roman{section}}
\renewcommand{\theequation}{S\arabic{equation}}
\renewcommand{\thetheorem}{S\arabic{theorem}}
\renewcommand{\thefigure}{S\arabic{figure}}
\renewcommand{\bibnumfmt}[1]{[S#1]}
\renewcommand{\citenumfont}[1]{S#1}
In this supplemental document, we show detailed proofs of the statements asserted in the main paper. 
As mentioned in the main paper, we refer to \textit{asymptotically consistent measures} as just \textit{consistent measures}. 
In Sec.~\ref{SSec_Prelim}, we provide the notation throughout the paper and a fundamental lemma, which we use repeatedly. 
In Sec.~\ref{SSec_convex_finite}, we prove that the regularized relative entropy of resource is consistent in all finite-dimensional convex QRTs beyond the previous work~\cite{Kuroiwa2020} showing the consistency with the assumption of the existence of a full-rank free state. 
In Sec.~\ref{SSec_discord}, we investigate consistent resource measures for discord. 
There are three candidates for the set of free states, namely, classical-classical states, quantum-classical states, and classical-quantum states. 
In Sec.~\ref{SSubsec_relent_discord}, We prove the consistency of the regularized relative entropy of discord for any choice of free states. 
Considering the symmetry between quantum-classical states and classical-quantum states, 
we only show the consistency for classical-classical states (Sec.~\ref{SSubsec_CC}) and quantum-classical states (Sec.~\ref{SSubsec_QC}).
Furthermore, in Sec.~\ref{SSubsec_MBQD}, we investigate another discord measure, \textit{measurement-based quantum discord}, and we prove that this measure is also consistent. 
In Sec.~\ref{SSec_Markov}, we provide the proof for the consistency of the regularized relative entropy of non-Markovianity. 
In Sec.~\ref{SSec_NG}, we investigate the consistency of the regularized relative entropy of non-Gaussianity. 
We discover a counterexample revealing that the relative entropy of non-Gaussianity is not asymptotically continuous even under a reasonable energy constraint (See Sec.~\ref{SSubsec_nonconvex}). 
On the other hand, considering the convex hull of the set of Gaussian states, we show that the regularized relative entropy becomes consistent under an appropriate energy constraint (See Sec.~\ref{SSubsec_convex}). 

\section{Preliminaries}~\label{SSec_Prelim}
In this section, we provide the notation of this paper. 
Throughout this paper, we let $\mathcal{D}(\mathcal{H})$ denote the set of states on a quantum system $\mathcal{H}$.
Let $\entropy(\cdot)$ denote the quantum entropy defined as $\entropy(\phi) \coloneqq -\tr[\phi\log_2\phi]$ for a quantum state $\phi$.
Let $\relent(\cdot\|\cdot)$ denote the quantum relative entropy defined as follows. 
\begin{equation}
    \relent(\phi\|\psi) \coloneqq 
    \begin{cases}
    \tr(\phi\log_2\phi) - \tr(\phi\log_2\psi) & \mathrm{supp}(\phi) \subseteq \mathrm{supp}(\psi) \\
    +\infty & \mathrm{otherwise}
    \end{cases}
\end{equation}
where $\mathrm{supp}(\cdot)$ represents the support of an operator.
We define binary entropy function $h_2(\cdot)$ as
\begin{equation}
    h_2(x) \coloneqq -x\log_2x - (1-x)\log_2(1-x)
\end{equation}
on $x\in[0,1]$.

In this paper, we repeatedly use the following lemma known as Fannes-Audenaert inequality (Lemma~\ref{lemma_Fannes}).
\begin{lemma}[\cite{Audenaert2007}]~\label{lemma_Fannes}
    Let $\mathcal{H}$ be a finite-dimensional quantum system with dimension $D = \mathrm{dim}(\mathcal{H})$
    and let $\phi,\psi \in \mathcal{D}(\mathcal{H})$ be quantum states such that
    \begin{equation}
        \frac{\|\phi-\psi\|_1}{2} \leqq \epsilon
    \end{equation}
    for a fixed $\epsilon \in [0,1]$. Then, it holds that
    \begin{equation}
        |\entropy(\phi) - \entropy(\psi)| \leqq \epsilon\log_2D + h_2(\epsilon).
    \end{equation}
\end{lemma}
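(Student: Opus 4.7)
The plan is to reduce the quantum inequality to a sharp classical bound on the entropy difference of probability distributions with bounded total variation, exploiting the fact that both the von Neumann entropy and (by a unitary-invariance-style argument) the relevant part of the trace distance depend only on spectra.

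First, I would invoke the Lidskii--Wielandt (or Mirsky) eigenvalue inequality: if $\lambda^{\downarrow}(\phi)$ and $\lambda^{\downarrow}(\psi)$ denote the eigenvalues of $\phi$ and $\psi$ listed in decreasing order, then
\[
\sum_i \bigl|\lambda_i^{\downarrow}(\phi) - \lambda_i^{\downarrow}(\psi)\bigr| \;\leqq\; \|\phi - \psi\|_1 \;\leqq\; 2\epsilon.
\]
Since the von Neumann entropy of a state depends only on its spectrum, this reduces the problem to a purely classical statement: for any probability vectors $p, q$ on $D$ outcomes with $\|p - q\|_1 \leqq 2\epsilon$, prove $|\entropy(p) - \entropy(q)| \leqq \epsilon \log_2 D + h_2(\epsilon)$.

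Second, I would attack this classical inequality by an extremization argument. Assuming without loss of generality $\entropy(p) \geqq \entropy(q)$, I would treat $\entropy(p) - \entropy(q)$ as an objective to maximize over pairs $(p,q)$ subject to the fixed trace-distance constraint. A symmetrization (permuting coordinates simultaneously in $p$ and $q$) and a mass-transport / exchange argument based on the strict concavity of $t \mapsto -t \log_2 t$ would show that the supremum is attained at a pair of the canonical form $q^{\ast} = e_1$ (a point mass) and $p^{\ast} = (1-\epsilon, \tfrac{\epsilon}{D-1}, \dots, \tfrac{\epsilon}{D-1})$. A direct computation then gives
\[
\entropy(p^{\ast}) - \entropy(q^{\ast}) \;=\; h_2(\epsilon) + \epsilon \log_2 (D-1) \;\leqq\; \epsilon \log_2 D + h_2(\epsilon),
\]
which is the target bound.

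The main obstacle is the rigorous justification of the reduction to this extremal pair. The trouble is that the feasible set is not convex (the constraint is a norm equality/inequality in which $p$ and $q$ both vary), so Lagrange multipliers must be handled carefully. I would overcome this by a two-step perturbation argument: (i) show that for any maximizer, a coordinate $i$ with $\min(p_i, q_i) > 0$ can be pushed towards $p_i q_i = 0$ without decreasing the objective while preserving $\|p-q\|_1$; (ii) among the resulting pairs with disjoint ``support structure,'' show that the objective is maximized when the smaller-entropy distribution is as concentrated as possible and the larger-entropy distribution spreads its mass uniformly over the complementary coordinates. Iterating (i)--(ii) collapses $q$ to a point mass and forces $p$ into the stated form. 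An alternative route, closer to Audenaert's original argument, is to bypass extremization entirely by establishing the inequality along the linear interpolation $\phi_t = (1-t)\phi + t\psi$ and integrating a bound on $|\tfrac{d}{dt}\entropy(\phi_t)|$ derived from operator concavity of $-t\log_2 t$ together with the Fuchs--van de Graaf relation between trace distance and fidelity.
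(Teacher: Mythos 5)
The paper does not prove this lemma at all: it is the Fannes--Audenaert continuity bound, imported verbatim from Ref.~\cite{Audenaert2007} and used as a black box throughout the Supplemental Material. So there is no in-paper proof to compare yours against; what can be assessed is whether your argument would stand on its own.

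Your first reduction is correct and is in fact a standard route to the quantum statement: by Mirsky's inequality (equivalently, the Lidskii--Wielandt majorization), the $\ell_1$ distance between the ordered spectra is at most $\|\phi-\psi\|_1$, and since $\entropy$ depends only on the spectrum this reduces everything to the sharp classical continuity bound for the Shannon entropy under a total-variation constraint. That classical bound is a true, known theorem, and your extremal pair together with the computation $\entropy(p^{\ast})-\entropy(q^{\ast}) = h_2(\epsilon) + \epsilon\log_2(D-1) \leqq \epsilon\log_2 D + h_2(\epsilon)$ is right. The gap is that the classical extremization --- which you correctly identify as the entire content of the proof --- is only sketched: the two-step perturbation/mass-transport argument is described but not carried out, and, as stated, your extremal claim fails at the edge of the parameter range. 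For $\epsilon > 1 - 1/D$ the supremum of $\entropy(p)-\entropy(q)$ over $\tfrac{1}{2}\|p-q\|_1 \leqq \epsilon$ equals $\log_2 D$, attained in the \emph{interior} of the constraint set (at total variation $1-1/D$, with $p$ uniform), and this strictly exceeds $\epsilon\log_2(D-1)+h_2(\epsilon)$; the lemma's bound $\epsilon\log_2 D + h_2(\epsilon)$ still holds there, but only because of the separate elementary inequality $(1-\epsilon)\log_2 D \leqq h_2(\epsilon)$ for $\epsilon \geqq 1-1/D$, which your argument does not supply. You also silently use monotonicity of $t\mapsto t\log_2(D-1)+h_2(t)$ to pass from the actual trace distance to its upper bound $\epsilon$; that monotonicity holds only for $t \leqq 1-1/D$ and must be paired with the same edge-case check. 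None of this is fatal, but as written the proof is a program rather than a proof; the cleanest repair is to cite the sharp classical bound (or Audenaert's theorem itself, as the paper does) rather than re-deriving it.
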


\section{Finite dimensional convex QRTs}~\label{SSec_convex_finite}
In this section, we argue that the regularized relative entropy of resource $R_\textup{R}$ is consistent in all finite-dimensional convex QRTs even without full-rank free states. 

Before going through the detailed analysis, we here summarize our proof. 
From the previous work~\cite{Kuroiwa2020}, the regularized relative entropy of resource is consistent in finite-dimensional convex QRTs with full-rank free states. 
Hereafter, suppose that the set of free states does not contain full-rank states; that is, 
the support of free states is strictly smaller than the whole system. 
Applying the same discussion as that with a full-rank free state, we show that the relative entropy of resource is asymptotically continuous on the support of free states. (See Theorem~\ref{finite_theorem1}.)
Then, by Sufficient Condition for Consistency (Lemma 3) in the main article, the regularized relative entropy of resource is consistent on the support of free states. 
Next, consider the case where $\phi$ is not in the support of free states. 
Proving that $\phi^{\otimes n}$ is also not in the support of free states (Lemma~\ref{finite_lemma2}), we observe that $R^{\infty}_R(\phi) = \infty$, and $R^{\infty}_R$ is consistent in terms of the refined definition of the consistency shown in the main text.
Finally, suppose that $\phi$ is in the support and $\psi$ is not; that is, $R^{\infty}_R(\phi) < \infty$ and $R^{\infty}_R(\psi) = \infty$. 
By using the same reasoning, we have that $\psi^{\otimes n}$ is not in the support of free states. 
Then, using the fact that free operations do not transfer a state from inside to outside of the support of free states (Lemma~\ref{finite_lemma1}), we show that we have $r(\phi \to \psi) = 0$ (Lemma~\ref{finite_lemma3} and Theorem~\ref{finite_theorem2}) in this case.  
Therefore, even without full-rank free states, the regularization of the relative entropy of resource is consistent in finite-dimensional convex QRTs under the refined definition of the consistent resource measures given in the main text.

Now, we provide detailed proof. 
We first show that the regularized relative entropy of resource is consistent on the support of free states. 
Here, we let $\mathcal{F}(\mathcal{H})\subseteq\mathcal{S}(\mathcal{H})$ denote the set of free states on the system $\mathcal{H}$.
We define the support of $\mathcal{F}(\mathcal{H})$, denoted by $\mathrm{supp}(\mathcal{F}(\mathcal{H}))$, as the smallest Hilbert space containing the supports of all free states in $\mathcal{F}(\mathcal{H})$; 
that is, 
\begin{equation}
    \mathrm{supp}(\mathcal{F}(\mathcal{H})) \coloneqq \mathrm{span}\left\{\ket{\psi} \in \mathcal{H}: \exists \phi \in \mathcal{F}(\mathcal{H})\,\,\mathrm{s.t.}\,\,\ket{\psi} \in \mathrm{supp}(\phi)\right\}. 
\end{equation}
\begin{lemma}~\label{finite_lemma2}
    Let $\mathcal{H}$ be a finite-dimensional quantum system. 
    Then, for any positive integer $n > 0$, $\mathrm{supp}(\mathcal{F}(\mathcal{H}^{\otimes n})) = \mathrm{supp}(\mathcal{F}(\mathcal{H}))^{\otimes n}$.
\end{lemma}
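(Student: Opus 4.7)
The plan is to prove the two inclusions $\mathcal{V}^{\otimes n} \subseteq \mathrm{supp}(\mathcal{F}(\mathcal{H}^{\otimes n}))$ and $\mathrm{supp}(\mathcal{F}(\mathcal{H}^{\otimes n})) \subseteq \mathcal{V}^{\otimes n}$ separately, where I write $\mathcal{V} \coloneqq \mathrm{supp}(\mathcal{F}(\mathcal{H}))$ for brevity. I will rely on three ingredients: (i) tensor products of free states are free, so $\phi_1 \otimes \cdots \otimes \phi_n \in \mathcal{F}(\mathcal{H}^{\otimes n})$ whenever each $\phi_i \in \mathcal{F}(\mathcal{H})$; (ii) free operations send free states to free states, so every marginal $\mathrm{tr}_{\neq i}(\psi)$ of a free state $\psi \in \mathcal{F}(\mathcal{H}^{\otimes n})$ lies in $\mathcal{F}(\mathcal{H})$, since the partial trace is free; and (iii) the standard inclusion $\mathrm{supp}(\rho_{1\cdots n}) \subseteq \mathrm{supp}(\rho_1) \otimes \cdots \otimes \mathrm{supp}(\rho_n)$ for the marginals of any multipartite state, a direct consequence of the Schmidt decomposition applied inductively.

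For the forward inclusion, it suffices to check that each product vector $\ket{e_1} \otimes \cdots \otimes \ket{e_n}$ with $\ket{e_i}$ in the support of some $\phi_i \in \mathcal{F}(\mathcal{H})$ lies in $\mathrm{supp}(\mathcal{F}(\mathcal{H}^{\otimes n}))$, because such vectors linearly span $\mathcal{V}^{\otimes n}$ by the definition of $\mathcal{V}$. This is immediate from (i): the state $\phi_1 \otimes \cdots \otimes \phi_n$ is free on $\mathcal{H}^{\otimes n}$, and its support is $\mathrm{supp}(\phi_1) \otimes \cdots \otimes \mathrm{supp}(\phi_n) \subseteq \mathrm{supp}(\mathcal{F}(\mathcal{H}^{\otimes n}))$. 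For the reverse inclusion, I would take any $\psi \in \mathcal{F}(\mathcal{H}^{\otimes n})$; by (ii) each marginal $\mathrm{tr}_{\neq i}(\psi)$ is in $\mathcal{F}(\mathcal{H})$, so its support sits inside $\mathcal{V}$, and (iii) then gives $\mathrm{supp}(\psi) \subseteq \mathrm{supp}(\mathrm{tr}_{\neq 1}(\psi)) \otimes \cdots \otimes \mathrm{supp}(\mathrm{tr}_{\neq n}(\psi)) \subseteq \mathcal{V}^{\otimes n}$. Taking the span over all such $\psi$ completes the argument.

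No step is genuinely hard; the only mild subtlety is ingredient (i). To justify it from the abstract axioms of Sec.~\ref{sec:prelim}, I would note that for each free state $\phi$ the constant-output channel $\rho \mapsto \mathrm{tr}(\rho)\phi$ factors as a trace followed by the preparation map $\mathbb{C} \to \mathcal{H}$ sending the trivial state to $\phi$, which is free by applying the definition of free states to the trivial one-dimensional input system. Closure of free operations under tensor product then promotes the tensor of these constant-output channels to a free operation with constant output $\phi_1 \otimes \cdots \otimes \phi_n$, establishing that this product state is free.
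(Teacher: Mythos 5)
Your proof is correct, and it takes a somewhat different (and more complete) route than the paper's. The paper argues only the nontrivial inclusion $\mathrm{supp}(\mathcal{F}(\mathcal{H}^{\otimes n})) \subseteq \mathrm{supp}(\mathcal{F}(\mathcal{H}))^{\otimes n}$, and does so by induction on $n$: it expands a free state $\phi_n$ in a basis adapted to $\mathrm{supp}(\mathcal{F}(\mathcal{H}))$, uses that the single-site partial trace $\mathcal{T}_m(\phi_n)$ is free and, by the induction hypothesis, supported in $\mathrm{supp}(\mathcal{F}(\mathcal{H}))^{\otimes (n-1)}$, and then exploits nonnegativity of the diagonal entries of a positive operator to force the coefficients carrying an index outside the free support to vanish. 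That is essentially your ingredient (iii) carried out in coordinates, one tensor factor at a time. You instead pass directly to the marginals $\tr_{\neq i}(\psi)$ and invoke the standard inclusion $\mathrm{supp}(\psi) \subseteq \bigotimes_{i} \mathrm{supp}\bigl(\tr_{\neq i}(\psi)\bigr)$, which eliminates both the induction and the coordinate bookkeeping. You also explicitly prove the converse inclusion $\mathrm{supp}(\mathcal{F}(\mathcal{H}))^{\otimes n} \subseteq \mathrm{supp}(\mathcal{F}(\mathcal{H}^{\otimes n}))$, which is needed for the stated equality but is left implicit in the paper, and you supply the axiom-level justification that a tensor product of free states is free (via free preparation maps composed with the trace and closure under tensor product) --- a point worth making explicit. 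Both arguments ultimately rest on the same two facts, namely that the partial trace is a free operation and that positivity confines the support of a multipartite state to the tensor product of the supports of its marginals; yours simply packages the positivity step as a known lemma rather than re-deriving it entrywise.
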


\begin{proof}
    The proof is by induction. 
    Let $D>0$ be the dimension of the system $\mathcal{H}$ and let $0< L \leqq D$ be the dimension of $\mathrm{supp}(\mathcal{F}(\mathcal{H}))$. 
    Let $\{\ket{1},\ldots,\ket{L}, \ket{L+1}, \ldots, \ket{D}\}$ be an orthonormal basis of $\mathcal{H}$ so that $\{\ket{1},\ldots,\ket{L}\}$ spans $\mathrm{supp}(\mathcal{F}(\mathcal{H}))$.
    For the proof by induction, suppose that $\mathrm{supp}(\mathcal{F}(\mathcal{H}^{\otimes n-1})) = \mathrm{supp}(\mathcal{F}(\mathcal{H}))^{\otimes n-1}$. 
    Here, take any state $\phi_n$ in $\mathcal{F}(\mathcal{H}^{\otimes n})$. 
    This state $\phi_n$ can be expressed as 
    \begin{equation}
        \phi_n \coloneqq \sum_{i_1,j_1,\ldots,i_n,j_n = 1,\ldots,D} \lambda_{i_1j_1,\cdots,i_nj_n}\ket{i_1\cdots i_n}\bra{j_1\cdots j_n}.
    \end{equation}
    Let $\mathcal{T}_{m}$ be the partial trace over $m$th system. 
    Since $\mathcal{T}_{m}$ is a free operation, $\mathcal{T}_{m}(\phi_n) \in \mathcal{F}(\mathcal{H}^{\otimes n-1})$ for all $1\leqq m\leqq n$.
    Since $\mathrm{supp}(\mathcal{F}(\mathcal{H}^{\otimes n-1})) = \mathrm{supp}(\mathcal{F}(\mathcal{H}))^{\otimes n-1}$, 
    $\mathcal{T}_{m}(\phi_n)$ can be expressed only by elements from $\{\ket{s_1,s_2,\ldots,s_{n-1}}:s_1,\ldots,s_{n-1} = 1,\ldots,L\}$.
    Now, take $m=1$. 
    Then, 
    \begin{equation}
        \mathcal{T}_{1}(\phi_n) \coloneqq \sum_{i_2,j_2,\ldots,i_n,j_n = 1,\ldots,D}\left(\sum_{i_1} \lambda_{i_1i_1i_2j_2\cdots i_nj_n}\right) \ket{i_2\cdots i_n}\bra{j_2\cdots j_n}.
    \end{equation}
    Since $\mathcal{T}_{1}(\phi_n)$ must not consist of any state in $\{\ket{L+1},\ldots,\ket{D}\}$, 
    if either one of $i_2, \ldots, i_n$ is in $\{L+1,\ldots,D\}$,
    it holds that 
    \begin{equation}
        \sum_{i_1} \lambda_{i_1i_1i_2i_2\cdots i_ni_n} = 0.
    \end{equation}
    Since $\lambda_{i_1i_1i_2i_2\cdots i_ni_n} \geqq 0$ for all $i_1$, in this case, $\lambda_{i_1i_1i_2i_2\cdots i_ni_n} = 0$ for all $i_1$. 
    Therefore, $\phi_n$ does not contain a state from $\{L+1,\ldots,D\}$ in all systems except the first one. 
    The same argument also applies to arbitrary choice of $m \in \{1,2,\ldots,n\}$. 
    Therefore, we have that $\phi_n$ does not contain a state from $\{L+1,\ldots,D\}$ in all systems.
    Hence, it follows that $\phi_n$ does not contain a state from $\{L+1,\ldots,D\}$, which implies that $\{\ket{s_1,s_2,\ldots,s_{n}}:s_1,\ldots,s_{n} = 1,\ldots,L\}$ spans $\mathcal{F}(\mathcal{H}^{\otimes n})$; 
    that is, $\mathrm{supp}(\mathcal{F}(\mathcal{H}^{\otimes n})) = \mathrm{supp}(\mathcal{F}(\mathcal{H}))^{\otimes n}$.
\end{proof}

\begin{theorem}\label{finite_theorem1}
    Let $\phi$ and $\psi$ be states such that $R_\textup{R}^{\infty}(\phi)<\infty$ and  $0 < R_\textup{R}^{\infty}(\psi) < \infty$. 
    Then, it holds that $r(\phi\to\psi) < \infty$. 
    Moreover, $R_\textup{R}^{\infty}$ is consistent for these states; that is, $R_\textup{R}^{\infty}(\psi)r(\phi\to\psi) \leqq R_\textup{R}^{\infty}(\phi)$. 
\end{theorem}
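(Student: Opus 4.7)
The plan is to reduce the statement to the case of a convex finite-dimensional QRT with a full-rank free state, where consistency of $R_\textup{R}^\infty$ has already been established in Ref.~\cite{Kuroiwa2020}. The first step is to observe that $R_\textup{R}^\infty(\phi)<\infty$ forces $D(\phi^{\otimes n}\|\sigma_n)<\infty$ for some free state $\sigma_n\in\mathcal{F}(\mathcal{H}^{\otimes n})$ at each $n$, hence $\mathrm{supp}(\phi^{\otimes n})\subseteq\mathrm{supp}(\mathcal{F}(\mathcal{H}^{\otimes n}))$; applying Lemma~\ref{finite_lemma2} this collapses to $\mathrm{supp}(\phi)\subseteq\mathrm{supp}(\mathcal{F}(\mathcal{H}))$, and similarly $\mathrm{supp}(\psi)\subseteq\mathrm{supp}(\mathcal{F}(\mathcal{H}'))$. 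Write $\tilde{\mathcal{H}}\coloneqq\mathrm{supp}(\mathcal{F}(\mathcal{H}))$ and $\tilde{\mathcal{K}}\coloneqq\mathrm{supp}(\mathcal{F}(\mathcal{H}'))$, so that $\phi$ and $\psi$ can be viewed as density operators on these subspaces.

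The next step is to verify that the QRT restricted to $\tilde{\mathcal{H}}$ and $\tilde{\mathcal{K}}$ is convex, finite dimensional, and admits a full-rank free state. By definition of the support together with convexity of $\mathcal{F}$, a suitable finite convex combination of free states is itself free and has full support in $\tilde{\mathcal{H}}$, and likewise in $\tilde{\mathcal{K}}$; finite-dimensionality is inherited, and convexity of the restricted free set is immediate. Because the relative entropy $D(\phi\|\sigma)$ is unchanged whether computed in the ambient Hilbert space or in any subspace containing both arguments' supports, the values of $R_\textup{R}(\phi)$ and $R_\textup{R}^\infty(\phi)$ are invariant under this restriction, and likewise for $\psi$. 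Applying the full-rank consistency result of Ref.~\cite{Kuroiwa2020} together with Lemma~\ref{lemma_sufficient_condition} to the restricted QRT then yields
\begin{equation*}
    R_\textup{R}^\infty(\psi)\,\tilde r(\phi\to\psi)\leqq R_\textup{R}^\infty(\phi),
\end{equation*}
where $\tilde r$ is the asymptotic conversion rate obtained when the free operations in Definition~\ref{def_consistency} are restricted to act internally on the support subspaces. The remaining task is to identify $\tilde r(\phi\to\psi)$ with $r(\phi\to\psi)$ computed in the ambient QRT; this is handled by Lemma~\ref{finite_lemma1}, which prevents free operations from mapping states inside $\mathrm{supp}(\mathcal{F})$ to states with any component outside of it, so every asymptotic protocol in the ambient QRT is already supported in the restricted subspaces. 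Once this identification is in place, $r(\phi\to\psi)\leqq R_\textup{R}^\infty(\phi)/R_\textup{R}^\infty(\psi)<\infty$ follows at once from the displayed inequality together with $R_\textup{R}^\infty(\psi)>0$ and $R_\textup{R}^\infty(\phi)<\infty$.

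The main obstacle I anticipate is the clean identification $\tilde r(\phi\to\psi)=r(\phi\to\psi)$. The subtlety is that the definition of $r(\phi\to\psi)$ admits a vanishing trace-norm error, so in principle a sequence $(\mathcal{N}_n)$ of free operations with $\mathcal{N}_n(\phi^{\otimes n})$ close to $\psi^{\otimes\lceil rn\rceil}$ could, a priori, have outputs with small components outside $\tilde{\mathcal{K}}^{\otimes\lceil rn\rceil}$. Lemma~\ref{finite_lemma1} settles exactly this point: since $\phi^{\otimes n}\in\tilde{\mathcal{H}}^{\otimes n}$ by the support inclusion already derived, the image $\mathcal{N}_n(\phi^{\otimes n})$ must sit strictly inside $\tilde{\mathcal{K}}^{\otimes\lceil rn\rceil}$, so the original and restricted rates coincide. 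Once this reduction is pinned down, the remainder of the argument is a direct import of the previous full-rank result, and it also accounts for the refined definition of consistency introduced in the main text for the cases where $R_\textup{R}^\infty$ takes infinite values.
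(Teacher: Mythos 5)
Your proposal is correct and rests on the same key observation as the paper's proof, namely that Lemma~\ref{finite_lemma2} pins $\phi^{\otimes n}$ and $\psi^{\otimes n}$ inside the support of the free states, where a full-rank free state effectively exists. The execution differs, though. The paper never leaves the ambient space: it simply notes that $R_\textup{R}$ is asymptotically continuous on $\mathrm{supp}(\mathcal{F})$ (via the argument of Refs.~\cite{Synak2006,Winter2016}) and then re-runs the $\epsilon$--$\delta$ rate argument by hand, first proving $r(\phi\to\psi)<\infty$ by contradiction with a rate $C>R_\textup{R}^{\infty}(\phi)/R_\textup{R}^{\infty}(\psi)$, and then establishing the consistency inequality for achievable rates $r(\phi\to\psi)-\delta$. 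You instead formally restrict the QRT to $\tilde{\mathcal{H}}$ and $\tilde{\mathcal{K}}$ and import the full-rank theorem as a black box; this buys a shorter argument but obliges you to check two things the paper avoids: that the restricted free operations still constitute a QRT with a genuinely full-rank free state (fine, by convexity and finite-dimensionality), and that $\tilde r=r$, for which you correctly invoke Lemma~\ref{finite_lemma1} (a lemma the paper reserves for Theorem~\ref{finite_theorem2}). One small caveat: your final step derives $r(\phi\to\psi)<\infty$ from the displayed inequality, which implicitly assumes the imported consistency statement is valid even when $r(\phi\to\psi)=\infty$ (with the convention $\infty\cdot{}$positive${}=\infty$). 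That is true of the underlying argument, but it is exactly the case the paper handles by a separate explicit contradiction, so you should either state that convention or reproduce that short contradiction step rather than treating finiteness as an immediate corollary.
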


\begin{proof}
    By Lemma~\ref{finite_lemma2}, we have $R_\textup{R}(\phi), R_\textup{R}(\psi) < \infty$, and thus $\phi$ and $\psi$ are in the support of free states. 
    Here, we note that the relative entropy of resource is asymptotically continuous on the support of free states using the same argument of the proof in Refs.~\cite{Synak2006,Winter2016}. 

    We first prove that $r(\phi\to\psi) < \infty$. The proof is by contradiction. 
    Assume that $r(\phi\to\psi) = \infty$. 
    Then, we can take a positive number 
    \begin{equation}~\label{eq:def_C}
    C > \frac{R_\textup{R}^{\infty}(\phi)}{R_\textup{R}^{\infty}(\psi)}, 
    \end{equation}
    and for an arbitrarily small $\epsilon > 0$, there exists sufficiently large $n$ and a free operation $\mathcal{N}_n$ such that
    \begin{equation}
        \left\|\mathcal{N}_n(\phi^{\otimes n}) -  \psi^{\otimes \lceil Cn \rceil}\right\|_1 \leqq \epsilon. 
    \end{equation} 
    Using asymptotic continuity of $R_\textup{R}$ on the support of free states, we have
    \begin{equation}
        R_\textup{R}(\phi^{\otimes n}) \geqq R_\textup{R}(\psi^{\otimes \lceil Cn \rceil}) - \lceil Cn \rceil\epsilon. 
    \end{equation}
    Then, dividing this inequality by $\lceil Cn \rceil$, we have
    \begin{equation}
        \frac{R_\textup{R}(\phi^{\otimes n})}{Cn} \geqq \frac{R_\textup{R}(\phi^{\otimes n})}{\lceil Cn \rceil} \geqq \frac{R_\textup{R}(\psi^{\otimes \lceil Cn \rceil})}{\lceil Cn \rceil} - \epsilon. 
    \end{equation}
    Taking the limit $n\to\infty$ and $\epsilon \to 0$, we have
    \begin{equation}
        R_\textup{R}^{\infty}(\phi) \geqq CR_\textup{R}^{\infty}(\psi), 
    \end{equation}
    which contradicts the assumption~\eqref{eq:def_C}. 

    Next, we prove the consistency. 
    Since $r(\phi\to\psi) < \infty$, for an arbitrarily small $\delta > 0$, $r \coloneqq r(\phi\to\psi) - \delta$ is an achievable rate. 
    Then, for an arbitrarily small $\epsilon > 0$, there exists sufficiently large $n$ and a free operation $\mathcal{N}_n$ such that
    \begin{equation}
        \left\|\mathcal{N}_n(\phi^{\otimes n}) -  \psi^{\otimes \lceil rn \rceil}\right\|_1 \leqq \epsilon. 
    \end{equation} 
    Using asymptotic continuity of $R_\textup{R}$ on the support of free states, we have
    \begin{equation}
        R_\textup{R}(\phi^{\otimes n}) \geqq R_\textup{R}(\psi^{\otimes \lceil rn \rceil}) - \lceil Cn \rceil\epsilon. 
    \end{equation}
    Then, dividing this inequality by $\lceil rn \rceil$, we have
    \begin{equation}
        \frac{R_\textup{R}(\phi^{\otimes n})}{rn} \geqq \frac{R_\textup{R}(\phi^{\otimes n})}{\lceil rn \rceil} \geqq \frac{R_\textup{R}(\psi^{\otimes \lceil rn \rceil})}{\lceil rn \rceil} - \epsilon. 
    \end{equation}
    Taking the limit $n\to\infty$ and $\epsilon \to 0$, we have
    \begin{equation}
        R_\textup{R}^{\infty}(\phi) \geqq rR_\textup{R}^{\infty}(\psi). 
    \end{equation}
    Since we can take $\delta$ arbitrarily small, $R_\textup{R}$ is consistent. 
\end{proof}

Then, we show that the regularized relative entropy of resource quantifies resources consistently with the asymptotic conversion rate even for resources out of the support of free states. 

\begin{lemma}~\label{finite_lemma1}
    Let $\mathcal{H}$ and $\mathcal{H}'$ be quantum systems. 
    Suppose $\phi \in \mathcal{D}(\mathcal{H})$ is a state such that $\mathrm{supp}(\phi) \subseteq \mathrm{supp}(\mathcal{F}(\mathcal{H}))$. 
    Then, for an arbitrary free operation $\mathcal{N}$ from a system $\mathcal{H}$ to a system $\mathcal{H}'$, it holds that $\mathrm{supp}(\mathcal{N}(\phi)) \subseteq \mathrm{supp}(\mathcal{F}(\mathcal{H}'))$. 
\end{lemma}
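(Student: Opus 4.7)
The plan is to exploit the convexity of the QRT to promote the definition of $\mathrm{supp}(\mathcal{F}(\mathcal{H}))$ from an abstract span into the actual support of a single free state, and then transport this support condition through $\mathcal{N}$ by a positivity (operator-domination) argument.

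First, I would construct a ``maximal-support'' free state $\sigma^\star \in \mathcal{F}(\mathcal{H})$. Because $\mathcal{H}$ is finite-dimensional, so is $\mathrm{supp}(\mathcal{F}(\mathcal{H}))$, and by its definition there exist finitely many free states $\sigma_1,\ldots,\sigma_k \in \mathcal{F}(\mathcal{H})$ whose supports collectively span $\mathrm{supp}(\mathcal{F}(\mathcal{H}))$. Setting $\sigma^\star \coloneqq \tfrac{1}{k}\sum_{i=1}^{k}\sigma_i$, convexity of the set of free states (assumed throughout Sec.~\ref{SSec_convex_finite}) yields $\sigma^\star \in \mathcal{F}(\mathcal{H})$, and the support of a positive convex combination equals the sum of the individual supports, so $\mathrm{supp}(\sigma^\star) = \mathrm{supp}(\mathcal{F}(\mathcal{H}))$.

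Next, since the hypothesis gives $\mathrm{supp}(\phi) \subseteq \mathrm{supp}(\mathcal{F}(\mathcal{H})) = \mathrm{supp}(\sigma^\star)$, a standard spectral argument supplies a constant $\lambda > 0$ with the operator inequality $\phi \leqq \lambda \sigma^\star$ (take, e.g., $\lambda = \|\phi\|_\infty$ divided by the smallest nonzero eigenvalue of $\sigma^\star$). The map $\mathcal{N}$ is completely positive, hence positive and monotone with respect to the operator ordering, so applying $\mathcal{N}$ to both sides yields $\mathcal{N}(\phi) \leqq \lambda\,\mathcal{N}(\sigma^\star)$, which forces $\mathrm{supp}(\mathcal{N}(\phi)) \subseteq \mathrm{supp}(\mathcal{N}(\sigma^\star))$. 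Finally, $\sigma^\star$ is free and $\mathcal{N}$ is a free operation, so $\mathcal{N}(\sigma^\star) \in \mathcal{F}(\mathcal{H}')$, and by the very definition of $\mathrm{supp}(\mathcal{F}(\mathcal{H}'))$ we obtain $\mathrm{supp}(\mathcal{N}(\sigma^\star)) \subseteq \mathrm{supp}(\mathcal{F}(\mathcal{H}'))$. Chaining the two inclusions delivers the claim.

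The only conceptually nontrivial step is the construction of $\sigma^\star$; it is here that \emph{both} convexity and finite-dimensionality are essential, which is consistent with the scope of Sec.~\ref{SSec_convex_finite}. Once $\sigma^\star$ is in hand, the argument reduces to routine facts about complete positivity and the preservation of support under operator domination, and I do not anticipate any further obstacle.
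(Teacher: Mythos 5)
Your proof is correct, and it follows the same overall strategy as the paper: reduce the abstract support condition to a single free state dominating $\phi$, push it through $\mathcal{N}$, and use that the image of a free state under a free operation is free. The paper transports the support inclusion via relative entropy --- it picks a free state $\psi$ with $\relent(\phi\|\psi)<\infty$ and invokes the data-processing inequality to get $\relent(\mathcal{N}(\phi)\|\mathcal{N}(\psi))<\infty$, hence $\mathrm{supp}(\mathcal{N}(\phi))\subseteq\mathrm{supp}(\mathcal{N}(\psi))$ --- whereas you use the operator domination $\phi\leqq\lambda\sigma^\star$ and positivity of $\mathcal{N}$; these are interchangeable mechanisms for the same fact. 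Where your write-up genuinely adds value is the explicit construction of $\sigma^\star$: the paper simply asserts the existence of a free state whose support contains $\mathrm{supp}(\phi)$, but since $\mathrm{supp}(\mathcal{F}(\mathcal{H}))$ is defined as a span over \emph{all} free states, that assertion silently relies on exactly the convexity-plus-finite-dimensionality argument you spell out (a finite uniform mixture of free states whose supports span the whole space). Your version thus makes visible where convexity enters, which the paper's one-line existence claim obscures; both proofs would fail identically in a nonconvex QRT.
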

\begin{proof}
    Since $\mathrm{supp}(\phi) \subseteq \mathrm{supp}(\mathcal{F}(\mathcal{H}))$, 
    there exists a free state $\psi$ such that
    \begin{equation}
        \relent(\phi\|\psi) < \infty.  
    \end{equation}
    Using the monotonicity of the relative entropy, it holds that
    \begin{equation}
        \relent(\mathcal{N}(\phi)\|\mathcal{N}(\psi)) \leqq  \relent(\phi\|\psi) < \infty.
    \end{equation}
    Since $\mathcal{N}(\psi)$ is also a free state, we have $\mathrm{supp}(\mathcal{N}(\phi)) \subseteq \mathrm{supp}(\mathcal{N}(\psi)) \subseteq \mathrm{supp}(\mathcal{F}(\mathcal{H}'))$.
\end{proof}

\begin{lemma}~\label{finite_lemma3}
    Let $\mathcal{H}$ and $\mathcal{H}'$ be quantum systems. 
    Let $\phi \in \mathcal{D}(\mathcal{H})$ such that $\mathrm{supp}(\phi) \subseteq \mathrm{supp}(\mathcal{F}(\mathcal{H}))$, 
    and let $\psi \in \mathcal{D}(\mathcal{H}')$ such that $\mathrm{supp}(\psi) \not\subseteq \mathrm{supp}(\mathcal{F}(\mathcal{H}'))$. 
    Then, the asymptotic conversion rate $r(\phi \to \psi)$ is zero. 
\end{lemma}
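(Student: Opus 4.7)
The plan is to argue by contradiction, using the two preceding lemmas to show that free operations are unable to build up the ``forbidden'' part of $\psi^{\otimes N}$ that lies outside the free support. Suppose $r(\phi\to\psi)>0$. Then there exists an achievable rate $r>0$ and a sequence of free operations $(\mathcal{N}_n)_{n\in\mathbb{N}}$ such that
\begin{equation*}
    \liminf_{n\to\infty}\,\bigl\|\mathcal{N}_n(\phi^{\otimes n})-\psi^{\otimes\lceil rn\rceil}\bigr\|_1=0.
\end{equation*}
By Lemma~\ref{finite_lemma2}, $\mathrm{supp}(\phi^{\otimes n})\subseteq\mathrm{supp}(\mathcal{F}(\mathcal{H}))^{\otimes n}=\mathrm{supp}(\mathcal{F}(\mathcal{H}^{\otimes n}))$, so $\phi^{\otimes n}$ lies in the support of the free states on $\mathcal{H}^{\otimes n}$. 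Applying Lemma~\ref{finite_lemma1}, the image $\mathcal{N}_n(\phi^{\otimes n})$ then lies in $\mathrm{supp}(\mathcal{F}(\mathcal{H}'^{\otimes\lceil rn\rceil}))=\mathrm{supp}(\mathcal{F}(\mathcal{H}'))^{\otimes\lceil rn\rceil}$, again by Lemma~\ref{finite_lemma2}.

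Next, I would exploit a projector witness on $\mathcal{H}'$. Let $P_{\textup{free}}$ denote the orthogonal projector onto $\mathrm{supp}(\mathcal{F}(\mathcal{H}'))$, and define $p_0\coloneqq 1-\tr(P_{\textup{free}}\psi)$. The hypothesis $\mathrm{supp}(\psi)\not\subseteq\mathrm{supp}(\mathcal{F}(\mathcal{H}'))$ forces $p_0>0$ (since $\tr(P_{\textup{free}}\psi)=1$ would imply $P_{\textup{free}}\psi P_{\textup{free}}=\psi$). On the product system, the projector onto the orthogonal complement of the free support is $Q_N\coloneqq \mathbbm{1}^{\otimes N}-P_{\textup{free}}^{\otimes N}$ with $N\coloneqq\lceil rn\rceil$. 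By the conclusion of the previous paragraph, $\tr(Q_N\,\mathcal{N}_n(\phi^{\otimes n}))=0$, whereas $\tr(Q_N\,\psi^{\otimes N})=1-(1-p_0)^N$. Using the standard inequality $\|\rho-\sigma\|_1\geqq|\tr(M\rho)-\tr(M\sigma)|$ valid for any $0\leqq M\leqq\mathbbm{1}$, applied with $M=Q_N$, yields
\begin{equation*}
    \bigl\|\mathcal{N}_n(\phi^{\otimes n})-\psi^{\otimes N}\bigr\|_1 \;\geqq\; 1-(1-p_0)^N.
\end{equation*}

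Since $r>0$ forces $N=\lceil rn\rceil\to\infty$, the right-hand side tends to $1$, contradicting the assumed $\liminf=0$. Hence no positive rate is achievable and $r(\phi\to\psi)=0$. The only subtle points I expect are (i) verifying that $p_0>0$ strictly from the set-theoretic support hypothesis, which reduces to the spectral remark above, and (ii) making sure Lemma~\ref{finite_lemma2} is invoked on \emph{both} sides so that the support of free states on the tensor product actually factorizes as the tensor power of the single-copy free support; this factorization is what allows the clean product bound $(1-p_0)^N$. Beyond these, the argument is essentially a one-line projector-witness estimate and does not require any continuity or convexity input.
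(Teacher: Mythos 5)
Your proof is correct, and it reaches the same contradiction as the paper but by a different route. The paper reduces to a single copy: it composes $\mathcal{N}_n$ with the partial trace $\mathcal{T}$ over all but the first output system, invokes Lemma~\ref{finite_lemma1} to conclude that $\mathcal{T}\circ\mathcal{N}_n(\phi^{\otimes n})$ is supported in $\mathrm{supp}(\mathcal{F}(\mathcal{H}'))$, and then lower-bounds $\|\mathcal{T}\circ\mathcal{N}_n(\phi^{\otimes n})-\psi\|_1$ by the fixed positive constant $D_\psi=\min_\sigma\|\sigma-\psi\|_1$, the minimum being over states $\sigma$ supported in the free support (this uses compactness of that set, fine in finite dimensions). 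You instead stay on all $N=\lceil rn\rceil$ copies and use the projector witness $Q_N=\mathbbm{1}^{\otimes N}-P_{\textup{free}}^{\otimes N}$, which requires Lemma~\ref{finite_lemma2} on the output side to identify $\mathrm{supp}(\mathcal{F}(\mathcal{H}'^{\otimes N}))$ with $\mathrm{supp}(\mathcal{F}(\mathcal{H}'))^{\otimes N}$ (the paper needs Lemma~\ref{finite_lemma2} only implicitly, on the input side, to know $\phi^{\otimes n}$ lies in the free support of $\mathcal{H}^{\otimes n}$ before applying Lemma~\ref{finite_lemma1}). Your version buys a quantitative and strictly stronger conclusion --- the trace distance is at least $1-(1-p_0)^N$, which tends to $1$, and indeed is already at least $p_0>0$ for $N\geqq 1$ --- and it replaces the compactness/minimum-attainment step by the elementary spectral observation that $p_0=1-\tr(P_{\textup{free}}\psi)>0$. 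The paper's version is marginally shorter because the single-copy reduction lets it avoid the tensor-power factorization of the output free support. Both are valid; no gaps in yours.
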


\begin{proof}
    The proof is by contradiction. 
    Suppose that $r(\phi \to \psi) > 0$. 
    Then, for some $r>0$, it holds that for an arbitrarily small $\epsilon$, there exists sufficiently large $n$ and a free operation $\mathcal{N}_n$ such that
    \begin{equation}
        \left\|\mathcal{N}_n(\phi^{\otimes n}) - \psi^{\otimes \lceil rn \rceil}\right\|_1 < \epsilon. 
    \end{equation}
    Let $\mathcal{T}$ denote the partial trace over all systems other than the first one. 
    Then, using the monotonicity of the trace norm, we have
    \begin{equation}
        \left\|\mathcal{T}\circ\mathcal{N}_n(\phi^{\otimes n}) - \psi\right\|_1 \leqq \left\|\mathcal{N}_n(\phi^{\otimes n}) - \psi^{\otimes \lceil rn \rceil}\right\|_1 
    \end{equation}
    for any $n>0$ and any free operation $\mathcal{N}_n$. 
    From Lemma~\ref{finite_lemma1}, $\mathcal{T}\circ\mathcal{N}_n(\phi^{\otimes n})$ is also in the support of $\mathcal{F}(\mathcal{H}')$.
    On the other hand, $\mathrm{supp}(\psi) \not\subseteq \mathrm{supp}(\mathcal{F}(\mathcal{H}'))$. 
    Therefore, $\left\|\mathcal{T}\circ\mathcal{N}_n(\phi^{\otimes n}) - \psi\right\|_1$ can be lower-bounded by 
    \begin{equation}
        D_\psi \coloneqq \min_{\sigma} \|\sigma - \psi\|_1 > 0, 
    \end{equation} 
    where the minimization is taken over all states $\sigma$ in the support of $\mathcal{F}(\mathcal{H}')$.
    If we take $\epsilon$ smaller than $D_\psi$, it always holds that 
    \begin{equation}
        \epsilon \leqq \left\|\mathcal{N}_n(\phi^{\otimes n}) - \psi^{\otimes \lceil rn \rceil}\right\|_1, 
    \end{equation}
    which contradicts to the assumption of $r(\phi\to\psi) > 0$. 
\end{proof}

\begin{theorem}\label{finite_theorem2}
    Let $\mathcal{H}$ and $\mathcal{H}'$ be quantum systems. 
    Let $\phi \in \mathcal{D}(\mathcal{H})$ such that $R_\textup{R}^{\infty}(\phi) < \infty$, 
    and let $\psi \in \mathcal{D}(\mathcal{H}')$ such that $R_\textup{R}^{\infty}(\psi) = \infty$. 
    Then, it holds that $r(\phi\to\psi) = 0$. 
\end{theorem}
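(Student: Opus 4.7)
The plan is to reduce the claim to Lemma~\ref{finite_lemma3} by translating the hypotheses on $R_\textup{R}^\infty$ into hypotheses about the supports of $\phi$ and $\psi$ relative to $\mathrm{supp}(\mathcal{F}(\mathcal{H}))$ and $\mathrm{supp}(\mathcal{F}(\mathcal{H}'))$. The two hypotheses go in opposite directions, so I would handle them separately and then combine.

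First, I would show that $R_\textup{R}^\infty(\phi) < \infty$ forces $\mathrm{supp}(\phi) \subseteq \mathrm{supp}(\mathcal{F}(\mathcal{H}))$. To this end, suppose for contradiction that $\mathrm{supp}(\phi) \not\subseteq \mathrm{supp}(\mathcal{F}(\mathcal{H}))$. Then there is no free state $\sigma \in \mathcal{F}(\mathcal{H})$ with $\mathrm{supp}(\phi) \subseteq \mathrm{supp}(\sigma)$, so by the convention on the relative entropy, $R_\textup{R}(\phi) = \infty$. Applying Lemma~\ref{finite_lemma2} to the $n$-fold tensor power, every free state on $\mathcal{H}^{\otimes n}$ is supported in $\mathrm{supp}(\mathcal{F}(\mathcal{H}))^{\otimes n}$, which does not contain $\mathrm{supp}(\phi^{\otimes n}) = \mathrm{supp}(\phi)^{\otimes n}$. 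Hence $R_\textup{R}(\phi^{\otimes n}) = \infty$ for every $n$, and the regularization is also $\infty$, contradicting the hypothesis.

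Next, I would show that $R_\textup{R}^\infty(\psi) = \infty$ forces $\mathrm{supp}(\psi) \not\subseteq \mathrm{supp}(\mathcal{F}(\mathcal{H}'))$. This is the quick contrapositive of the previous step: if $\mathrm{supp}(\psi) \subseteq \mathrm{supp}(\mathcal{F}(\mathcal{H}'))$, pick any free $\sigma \in \mathcal{F}(\mathcal{H}')$ whose support contains $\mathrm{supp}(\psi)$; then $R_\textup{R}(\psi) \leqq \relent(\psi\|\sigma) < \infty$, and the subadditivity inequality $R_\textup{R}(\psi^{\otimes n}) \leqq n R_\textup{R}(\psi)$ used earlier for the relative entropy of resource gives $R_\textup{R}^\infty(\psi) \leqq R_\textup{R}(\psi) < \infty$, contradicting the hypothesis.

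With both support conditions in hand, Lemma~\ref{finite_lemma3} applies verbatim to $\phi$ and $\psi$ and yields $r(\phi\to\psi) = 0$. I expect the only subtle point is the first step, where one must invoke Lemma~\ref{finite_lemma2} to propagate the non-containment of $\mathrm{supp}(\phi)$ to every tensor power, rather than merely noting $R_\textup{R}(\phi) = \infty$; without this multiplicativity of supports of free states one could in principle imagine a cancellation making the regularization finite. Once that is handled, the remainder is essentially bookkeeping with the definition of the relative entropy and subadditivity.
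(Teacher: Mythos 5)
Your proposal is correct and follows essentially the same route as the paper: translate the finiteness/infiniteness of $R_\textup{R}^{\infty}$ into the support conditions $\mathrm{supp}(\phi) \subseteq \mathrm{supp}(\mathcal{F}(\mathcal{H}))$ and $\mathrm{supp}(\psi) \not\subseteq \mathrm{supp}(\mathcal{F}(\mathcal{H}'))$ via Lemma~\ref{finite_lemma2}, then invoke Lemma~\ref{finite_lemma3}. You simply spell out the two implications (the tensor-power propagation of non-containment and the subadditivity bound) that the paper's three-sentence proof leaves implicit.
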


\begin{proof}
    From Lemma~\ref{finite_lemma2}, since $R_\textup{R}^{\infty}(\phi) < \infty$, we have $\mathrm{supp}(\phi) \subseteq \mathrm{supp}(\mathcal{F}(\mathcal{H}))$. 
    Since $R_\textup{R}^{\infty}(\psi) = \infty$, it holds that $\mathrm{supp}(\psi) \not\subseteq \mathrm{supp}(\mathcal{F}(\mathcal{H}'))$.
    Then, by Lemma~\ref{finite_lemma3}, we have $r(\phi\to\psi) = 0$. 
\end{proof}

\section{Discord}~\label{SSec_discord}
In this section, we investigate consistent resource measures in the resource theory of discord. 
We basically adopt the notation in Ref.~\cite{Bera2017}.
In Sec~\ref{SSubsec_relent_discord}, we study consistency of the regularized relative entropy of discord. 
We first consider classical-classical states as free states in Sec.~\ref{SSubsec_CC}, and we show that the regularized relative entropy of discord is consistent for this choice of free states. 
Then, in Sec.~\ref{SSubsec_QC}, we consider quantum-classical states as free states, and we also show that the regularized relative entropy of discord is consistent in this case.
Note that due to the symmetry, the proof for classical-quantum states is also applicable to quantum-classical states with the quantum and classical registers switched. 
In Sec.~\ref{SSubsec_MBQD}, we investigate another discord measure, \textit{measurement-based quantum discord}, and we prove that this measure is also consistent. 
\subsection{Relative Entropy of Discord}~\label{SSubsec_relent_discord}
In this section, we investigate the relative entropy of discord for two choices of free states, namely, classical-classical states and quantum-classical states. 
We prove that the regularized relative entropy of discord is consistent for both of these two cases. 
\subsubsection{Classical-Classical States}~\label{SSubsec_CC}
Here, we formally define the set of classical-classical states. 
\begin{definition}
    Let $\mathcal{H}^A$ and $\mathcal{H}^B$ be finite-dimensional quantum systems.
    Then, we define a set $\cc$ of classical states on $\mathcal{H}^A\otimes\mathcal{H}^B$ as
    \begin{equation}
        \cc \coloneqq \left\{\sum_{k}p_{k}\ket{a_k}\bra{a_k}^{A}\otimes\ket{b_k}\bra{b_k}^{B}\right\},
    \end{equation}
    where $\ket{a_k}^A$ and $\ket{b_k}^B$ are orthonormal bases of systems $\mathcal{H}^A$ and $\mathcal{H}^B$ respectively, 
    and $p_{k}$ is a probability distribution; that is, $\cc$ is the set of the mixture of local classical states.  
\end{definition}
\begin{remark}
    $\cc$ is not convex.
\end{remark}
The relative entropy of discord for $\cc$ is defined as follows. 
\begin{definition}
    Let $\mathcal{H}^A$ and $\mathcal{H}^B$ be finite-dimensional quantum systems.
    For a quantum state $\phi^{AB} \in \mathcal{D}(\mathcal{H}^A\otimes\mathcal{H}^B)$, 
    define the relative entropy of discord $D^{\cc}_{\textup{rel}}$ as
    \begin{equation}
        D^{\cc}_{\textup{rel}}(\phi^{AB}) \coloneqq \min_{\psi^{AB} \in \cc} \relent(\phi^{AB}\|\psi^{AB}).
    \end{equation}
\end{definition}

Then, we derive a simple characterization of the relative entropy of discord for $\cc$. 
\begin{proposition}\label{Slemma_relent_cc}
    Let $\mathcal{H}^A$ and $\mathcal{H}^B$ be finite-dimensional quantum systems.
    For a quantum state $\phi^{AB}\in \mathcal{D}(\mathcal{H}^A\otimes\mathcal{H}^B)$, 
    it holds that
    \begin{equation}\label{cc_statement}
        D^{\cc}_{\textup{rel}}(\phi^{AB}) = \min_{\ket{a_k}^A,\ket{b_k}^B} \entropy\left(\sum_k(\bra{a_k}\otimes\bra{b_k})\phi^{AB}(\ket{a_k}\otimes\ket{b_k})\ket{a_k}\bra{a_k}^A\otimes\ket{b_k}\bra{b_k}^B\right) - \entropy(\phi^{AB}),
    \end{equation}
    where the minimization is taken over all orthonormal bases of $\mathcal{H}^A$ and $\mathcal{H}^B$. 
\end{proposition}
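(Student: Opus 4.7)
The plan is to reduce the minimization over the full nonconvex set $\cc$ to (i) a simple classical optimization over the probabilities at fixed local bases, followed by (ii) the remaining minimization over product bases. First, I would parameterize an arbitrary element of $\cc$ as $\psi^{AB} = \sum_k p_k \Pi_k$, where $\Pi_k \coloneqq \ket{a_k}\bra{a_k}^A \otimes \ket{b_k}\bra{b_k}^B$ are mutually orthogonal rank-one projectors (with $k$ running over the composite index labeling the product basis $\{\ket{a_i}\otimes\ket{b_j}\}$) and $\{p_k\}$ is a probability distribution. Since the $\Pi_k$ are pairwise orthogonal, the spectral calculus gives $\log_2 \psi^{AB} = \sum_k (\log_2 p_k) \Pi_k$ on the support of $\psi^{AB}$, and I would only need to consider the case $\mathrm{supp}(\phi^{AB}) \subseteq \mathrm{supp}(\psi^{AB})$ since otherwise the relative entropy is $+\infty$ and so cannot realize the minimum.

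Using this, I would expand
\begin{equation*}
    \relent(\phi^{AB}\|\psi^{AB}) = -\entropy(\phi^{AB}) - \sum_k q_k \log_2 p_k,
\end{equation*}
where $q_k \coloneqq \bra{a_k,b_k}\phi^{AB}\ket{a_k,b_k} = \tr(\Pi_k \phi^{AB})$, and verify that $\{q_k\}_k$ is indeed a probability distribution (it sums to $1$ because $\{\Pi_k\}_k$ is a complete set of rank-one projectors on $\mathcal{H}^A\otimes\mathcal{H}^B$). With the bases fixed, the only freedom left is the classical distribution $\{p_k\}$; applying the Gibbs inequality (with the convention $0 \log 0 = 0$, which also takes care of the support issue by allowing the choice $p_k = 0$ whenever $q_k = 0$) yields
\begin{equation*}
    -\sum_k q_k \log_2 p_k \geqq -\sum_k q_k \log_2 q_k,
\end{equation*}
with equality iff $p_k = q_k$. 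The right-hand side is precisely $\entropy(\chi)$ for $\chi \coloneqq \sum_k q_k \Pi_k$, again by orthogonality of the $\Pi_k$.

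Combining these two facts, for every fixed choice of bases the partial minimum over the probabilities is attained at $\psi^{AB} = \sum_k q_k \Pi_k$, and equals $\entropy\bigl(\sum_k q_k \Pi_k\bigr) - \entropy(\phi^{AB})$. Taking the outer infimum over orthonormal bases $\{\ket{a_k}^A\}$ of $\mathcal{H}^A$ and $\{\ket{b_k}^B\}$ of $\mathcal{H}^B$ then yields the claimed expression~\eqref{cc_statement}; the infimum is actually a minimum because the set of product bases is (modulo unphysical phases) a compact manifold and the objective is continuous in the bases. I do not anticipate a serious obstacle: the only subtlety is bookkeeping with the support condition and the convention $0\log 0 = 0$, which is what allows the Gibbs step to go through uniformly, and making sure that the single-index notation in the statement is interpreted as ranging over the composite basis labels on $\mathcal{H}^A\otimes\mathcal{H}^B$.
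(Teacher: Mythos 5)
Your proposal is correct and follows essentially the same route as the paper's proof: fix the product basis, expand $\relent(\phi^{AB}\|\psi^{AB})$ so that only the classical term $-\sum_k q_k\log_2 p_k$ depends on the probabilities, optimize that term via the Gibbs inequality (the paper phrases this as positivity of the classical relative entropy) with the optimum at $p_k=q_k$, and then minimize over the bases. Your additional remarks on the $0\log 0$ convention, the completeness of the projector family $\{\Pi_k\}$ (so that $\sum_k q_k=1$), and the attainment of the infimum by compactness are minor refinements of points the paper leaves implicit.
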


\begin{proof}
    Take a classical-classical state
    \begin{equation}
        \psi^{AB} \coloneqq \sum_k p_k \ket{a_k}\bra{a_k}^A\otimes\ket{b_k}\bra{b_k}^B. 
    \end{equation}
    such that $\mathrm{supp}(\phi^{AB}) \subseteq \mathrm{supp}(\sigma^{AB})$.
    Then, it holds that
    \begin{align}
        \relent(\phi^{AB}\|\psi^{AB})
        &= -\entropy(\phi^{AB}) - \tr[\phi^{AB}\log_2\psi^{AB}]\\
        &= -\entropy(\phi^{AB}) - \tr\left[\phi^{AB}\log_2\left(\sum_k p_k \ket{a_k}\bra{a_k}^A\otimes\ket{b_k}\bra{b_k}^B\right)\right]\\
        &= -\entropy(\phi^{AB}) - \sum_k\tr\left[\phi^{AB}\left(\log_2p_k\ket{a_k}\bra{a_k}^A\otimes\ket{b_k}\bra{b_k}^B\right)\right] \\
        &= -\entropy(\phi^{AB}) - \sum_k\log_2p_k(\bra{a_k}\otimes\bra{b_k})\phi^{AB}(\ket{a_k}\otimes\ket{b_k}).
    \end{align}
    Now, define $q_k \coloneqq (\bra{a_k}\otimes\bra{b_k})\phi^{AB}(\ket{a_k}\otimes\ket{b_k})$.
    Then, we have
    \begin{align}
        \relent(\phi^{AB}\|\psi^{AB})
        &= -\entropy(\phi^{AB}) - \sum_k q_k\log_2p_k\\
        &= -\entropy(\phi^{AB}) -\sum_k q_k\log_2q_k + \left(\sum_k q_k\log_2q_k - \sum_k q_k\log_2p_k\right) \\
        \label{cc_ineq_1}
        &\geqq -\entropy(\phi^{AB}) -\sum_k q_k\log_2q_k \\
        &= -\entropy(\phi^{AB}) + \entropy\left(\sum_k(\bra{a_k}\otimes\bra{b_k})\phi^{AB}(\ket{a_k}\otimes\ket{b_k})\ket{a_k}\bra{a_k}^A\otimes\ket{b_k}\bra{b_k}^B\right).
    \end{align}
    The inequality~\eqref{cc_ineq_1} follows from the positivity of the relative entropy, and it is saturated by taking $p_k = q_k$ for all $k$. 
    Therefore, the minimum of $\mathrm{D}(\phi^{AB}\|\psi^{AB})$ can be achieved by choosing optimal $\ket{a_k}^A$ and $\ket{b_k}^B$, 
    which implies \eqref{cc_statement}.
\end{proof}

Then, applying Proposition~\ref{Slemma_relent_cc}, we prove the asymptotic continuity of $D^{\cc}_{\textup{rel}}$.
\begin{theorem}
    Let $\mathcal{H}^A$ and $\mathcal{H}^B$ be finite-dimensional quantum systems with dimension $D = \mathrm{dim}(\mathcal{H}^A\otimes\mathcal{H}^B)$.
    The relative entropy of discord $D^{\cc}_{\textup{rel}}$ is asymptotically continuous; \textit{i.e.}, 
    it holds that
    \begin{equation}
        |D^{\cc}_{\textup{rel}}(\phi^{AB}) - D^{\cc}_{\textup{rel}}(\psi^{AB})| \leqq \|\phi^{AB} - \psi^{AB}\|_1 \log_2D + 2h_2\left(\frac{\|\phi^{AB} - \psi^{AB}\|_1}{2}\right), 
    \end{equation}
    where $h_2$ is the binary entropy function. 
\end{theorem}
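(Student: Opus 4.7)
The plan is to use the simplified characterization from Proposition~\ref{Slemma_relent_cc} together with contractivity of the trace norm and the Fannes--Audenaert inequality (Lemma~\ref{lemma_Fannes}). Write $\epsilon \coloneqq \|\phi^{AB}-\psi^{AB}\|_1/2$. Without loss of generality assume $D^{\cc}_{\textup{rel}}(\psi^{AB}) \geqq D^{\cc}_{\textup{rel}}(\phi^{AB})$; the reverse case follows by symmetry, which is how the absolute value appears in the final bound.

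First I would let $\{\ket{a_k}^A\},\{\ket{b_k}^B\}$ be the bases achieving the minimum in Proposition~\ref{Slemma_relent_cc} for $\phi^{AB}$, and define the corresponding dephased states
\begin{equation*}
    \chi^{\cc}_\phi \coloneqq \sum_k (\bra{a_k}\otimes\bra{b_k})\phi^{AB}(\ket{a_k}\otimes\ket{b_k})\ket{a_k,b_k}\bra{a_k,b_k}, \qquad
    \chi^{\cc}_\psi \coloneqq \sum_k (\bra{a_k}\otimes\bra{b_k})\psi^{AB}(\ket{a_k}\otimes\ket{b_k})\ket{a_k,b_k}\bra{a_k,b_k}.
\end{equation*}
Since $\chi^{\cc}_\psi \in \cc$ (it is of the prescribed classical-classical form, just not necessarily the minimizer for $\psi^{AB}$), Proposition~\ref{Slemma_relent_cc} gives $D^{\cc}_{\textup{rel}}(\psi^{AB}) \leqq \entropy(\chi^{\cc}_\psi) - \entropy(\psi^{AB})$, whereas $D^{\cc}_{\textup{rel}}(\phi^{AB}) = \entropy(\chi^{\cc}_\phi) - \entropy(\phi^{AB})$. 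Subtracting,
\begin{equation*}
    D^{\cc}_{\textup{rel}}(\psi^{AB}) - D^{\cc}_{\textup{rel}}(\phi^{AB}) \leqq \left[\entropy(\chi^{\cc}_\psi) - \entropy(\chi^{\cc}_\phi)\right] + \left[\entropy(\phi^{AB}) - \entropy(\psi^{AB})\right].
\end{equation*}

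Next, the map $\rho \mapsto \sum_k \bra{a_k,b_k}\rho\ket{a_k,b_k}\,\ket{a_k,b_k}\bra{a_k,b_k}$ is a completely dephasing (hence CPTP) channel applied to both $\phi^{AB}$ and $\psi^{AB}$ with the same Kraus operators, so by monotonicity of the trace norm under CPTP maps,
\begin{equation*}
    \|\chi^{\cc}_\phi - \chi^{\cc}_\psi\|_1 \leqq \|\phi^{AB} - \psi^{AB}\|_1 = 2\epsilon.
\end{equation*}
Since $\chi^{\cc}_\phi, \chi^{\cc}_\psi, \phi^{AB}, \psi^{AB}$ all live in a space of dimension $D$, applying Lemma~\ref{lemma_Fannes} to each of the two bracketed differences yields $\epsilon \log_2 D + h_2(\epsilon)$ per bracket, and summing produces $\|\phi^{AB}-\psi^{AB}\|_1 \log_2 D + 2 h_2(\|\phi^{AB}-\psi^{AB}\|_1/2)$. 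Swapping the roles of $\phi^{AB}$ and $\psi^{AB}$ (i.e. repeating the argument with the optimal bases for $\psi^{AB}$) establishes the same bound for the other sign, completing the proof.

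There is no serious obstacle here: the only subtlety is realizing that a single choice of bases, namely the optimizer for one state, furnishes both a suboptimal bound for the other state via Proposition~\ref{Slemma_relent_cc} and, simultaneously, a single CPTP channel whose contractivity links $\|\chi^{\cc}_\phi - \chi^{\cc}_\psi\|_1$ to $\|\phi^{AB}-\psi^{AB}\|_1$. Once this observation is in place, the statement reduces to two applications of Fannes--Audenaert. The same template will drive the analogous $\qc$ case in Sec.~\ref{SSubsec_QC} and, with a more delicate decomposition, the non-Markovianity bound \eqref{theorem_markov_continuity}.
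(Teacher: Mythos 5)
Your proposal is correct and follows essentially the same route as the paper's proof: both use the characterization of Proposition~\ref{Slemma_relent_cc} with the optimal bases of one state furnishing a suboptimal dephasing for the other, bound $\|\chi^{\cc}_\phi-\chi^{\cc}_\psi\|_1$ by $\|\phi^{AB}-\psi^{AB}\|_1$ (you cite CPTP contractivity where the paper verifies the same inequality by direct computation on the diagonal), and then apply the Fannes--Audenaert inequality twice. The only cosmetic difference is that you dephase in the optimizing bases for $\phi^{AB}$ while the paper uses those for $\psi^{AB}$, with correspondingly swapped without-loss-of-generality assumptions.
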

\begin{proof}
By Proposition~\ref{Slemma_relent_cc}, it holds that
\begin{align*}
    |D^{\cc}_{\textup{rel}}(\phi^{AB}) - D^{\cc}_{\textup{rel}}(\psi^{AB})|
        &\leqq |\entropy(\phi^{AB})-\entropy(\psi^{AB})|\\
        &+ \Bigg|\entropy\left(\sum_k(\bra{a_k}\otimes\bra{b_k})\phi^{AB}(\ket{a_k}\otimes\ket{b_k})\ket{a_k}\bra{a_k}^A\otimes\ket{b_k}\bra{b_k}^B\right)\\
        &\quad- \entropy\left(\sum_k(\bra{a_k}\otimes\bra{b_k})\psi^{AB}(\ket{a_k}\otimes\ket{b_k})\ket{a_k}\bra{a_k}^A\otimes\ket{b_k}\bra{b_k}^B\right)\Bigg|,
\end{align*}
where $\ket{a_k}^A$ and $\ket{b_k}^B$ is optimal orthonormal bases for $\psi$. 
Here, we assume that 
\begin{equation*}
    \begin{aligned}
        \min_{\ket{a_k},\ket{b_k}} &\entropy\left(\sum_k(\bra{a_k}\otimes\bra{b_k})\phi^{AB}(\ket{a_k}\otimes\ket{b_k})\ket{a_k}\bra{a_k}^A\otimes\ket{b_k}\bra{b_k}^B\right) \\
        &\geqq \min_{\ket{a_k},\ket{b_k}} \entropy\left(\sum_k(\bra{a_k}\otimes\bra{b_k})\psi^{AB}(\ket{a_k}\otimes\ket{b_k})\ket{a_k}\bra{a_k}^A\otimes\ket{b_k}\bra{b_k}^B\right).
    \end{aligned}
\end{equation*}
Otherwise, we can switch the roles of $\phi^{AB}$ and $\psi^{AB}$ to apply the same argument. 

Now, it holds that
\begin{align*}
    &\left\|\sum_k(\bra{a_k}\otimes\bra{b_k})\phi^{AB}(\ket{a_k}\otimes\ket{b_k})\ket{a_k}\bra{a_k}^A\otimes\ket{b_k}\bra{b_k}^B - \sum_k(\bra{a_k}\otimes\bra{b_k})\psi^{AB}(\ket{a_k}\otimes\ket{b_k})\ket{a_k}\bra{a_k}^A\otimes\ket{b_k}\bra{b_k}^B\right\|_1\\
    &=\sum_k\left\|\left((\bra{a_k}\otimes\bra{b_k})\left(\phi^{AB} - \psi^{AB}\right)(\ket{a_k}^A\otimes\ket{b_k}^B)\right)\ket{a_k}\bra{a_k}^A\otimes\ket{b_k}\bra{b_k}^B\right\|_1\\
    &=\left\|\sum_k\left((\bra{a_k}\otimes\bra{b_k})\left(\phi^{AB} - \psi^{AB}\right)(\ket{a_k}^A\otimes\ket{b_k}^B)\right)\ket{a_k}\bra{a_k}^A\otimes\ket{b_k}\bra{b_k}^B\right\|_1\\
    &\leqq \|\phi^{AB}-\psi^{AB}\|_1. 
\end{align*}
Therefore, using Lemma~\ref{lemma_Fannes} (Fannes-Audenaert inequality), we have
    \begin{equation}
        |D^{\cc}_{\textup{rel}}(\phi^{AB}) - D^{\cc}_{\textup{rel}}(\psi^{AB})| \leqq \|\phi^{AB} - \psi^{AB}\|_1 \log_2D + 2h_2\left(\frac{\|\phi^{AB} - \psi^{AB}\|_1}{2}\right). 
    \end{equation}
\end{proof}

\subsubsection{Quantum-Classical States}~\label{SSubsec_QC}
First, we formally define the set of classical-classical states. 
\begin{definition}
    Let $\mathcal{H}^A$ and $\mathcal{H}^B$ be finite-dimensional quantum systems.
    Then, we define a set $\qc$ of quantum-classical states on $\mathcal{H}^A\otimes\mathcal{H}^B$ as
    \begin{equation}
        \qc \coloneqq \left\{\sum_{l}p_{l}\rho_l^A\otimes\ket{l}\bra{l}^B\right\},
    \end{equation}
    where and $\ket{l}^B$ are orthonormal bases of systems $\mathcal{H}^A$ and $\mathcal{H}^B$ respectively, 
    and $p_{l}$ is a probability distribution.
\end{definition}
\begin{remark}
    $\qc$ is not convex.
\end{remark}
Then, we define the relative entropy of discord for $\qc$. 
\begin{definition}
    Let $\mathcal{H}^A$ and $\mathcal{H}^B$ be finite-dimensional quantum systems.
    For a quantum state $\phi^{AB} \in \mathcal{D}(\mathcal{H}^A\otimes\mathcal{H}^B)$, 
    define the relative entropy of discord $D^{\qc}_{\textup{rel}}$ as
    \begin{equation}
        D^{\qc}_{\textup{rel}}(\phi^{AB}) \coloneqq \min_{\psi^{AB} \in \qc} \relent(\phi^{AB}\|\psi^{AB}).
    \end{equation}
\end{definition}

Here, we give a simple characterization of the relative entropy of discord for $\qc$. 
\begin{proposition}\label{Slemma_relent_qc}
    Let $\mathcal{H}^A$ and $\mathcal{H}^B$ be finite-dimensional quantum systems.
    For a quantum state $\phi^{AB} \in \mathcal{D}(\mathcal{H}^A\otimes\mathcal{H}^B)$, 
    it holds that
    \begin{equation}\label{qc_statement}
        D^{\qc}_{\textup{rel}}(\phi^{AB}) = \min_{\ket{b_k}^B} \entropy\left(\sum_k(\I^A\otimes\bra{b_k}^B)\phi^{AB}(\I^A\otimes\ket{b_k}^B)\otimes\ket{b_k}\bra{b_k}^B\right) - \entropy(\phi^{AB}),
    \end{equation}
    where the minimization is taken over all orthonormal bases of $\mathcal{H}^B$. 
\end{proposition}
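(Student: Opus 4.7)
The plan is to mimic the argument of Proposition~\ref{Slemma_relent_cc}, but only the $B$-register is forced to be diagonal in a classical basis while the $A$-register carries a free choice of conditional state. The key new feature is that we must split $\log_2\psi^{AB}$ into a ``classical'' piece depending on the probabilities $p_l$ and a ``quantum'' piece depending on the conditional states $\rho_l^A$, and then optimize over both simultaneously.

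Concretely, I would fix an orthonormal basis $\{\ket{l}^B\}$ of $\mathcal{H}^B$, and write an arbitrary quantum-classical state compatible with this basis as $\psi^{AB}=\sum_l p_l\,\rho_l^A\otimes\ket{l}\bra{l}^B$. Assuming $\mathrm{supp}(\phi^{AB})\subseteq\mathrm{supp}(\psi^{AB})$ (otherwise $\relent(\phi^{AB}\|\psi^{AB})=\infty$ and such $\psi$ cannot be the minimizer), the block-diagonal structure in $B$ gives
\begin{equation*}
\log_2\psi^{AB}=\sum_l (\log_2 p_l)\,\I^A\otimes\ket{l}\bra{l}^B + \sum_l (\log_2\rho_l^A)\otimes\ket{l}\bra{l}^B.
\end{equation*}
Introducing the (sub-normalized) post-measurement operators $\sigma_l\coloneqq(\I^A\otimes\bra{l}^B)\phi^{AB}(\I^A\otimes\ket{l}^B)$ with $q_l\coloneqq\tr\sigma_l$ and normalized conditionals $\tilde\rho_l^A\coloneqq\sigma_l/q_l$, a direct computation then yields
\begin{equation*}
\relent(\phi^{AB}\|\psi^{AB})=-\entropy(\phi^{AB})+\sum_l q_l\entropy(\tilde\rho_l^A)+\entropy(q)+\relent(q\|p)+\sum_l q_l\,\relent(\tilde\rho_l^A\|\rho_l^A),
\end{equation*}
where $\relent(q\|p)$ denotes the classical Kullback--Leibler divergence of the distributions $q=(q_l)$ and $p=(p_l)$.

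The final two terms are nonnegative and are simultaneously minimized to zero by the unique choice $p_l=q_l$ and $\rho_l^A=\tilde\rho_l^A$. Recognizing that the remaining expression $\sum_l q_l\entropy(\tilde\rho_l^A)+\entropy(q)$ is exactly the von Neumann entropy of the block-diagonal state $\sum_l\sigma_l\otimes\ket{l}\bra{l}^B$ gives, for this fixed basis,
\begin{equation*}
\min_{\{p_l,\rho_l^A\}}\relent(\phi^{AB}\|\psi^{AB})=\entropy\!\left(\sum_l(\I^A\otimes\bra{l}^B)\phi^{AB}(\I^A\otimes\ket{l}^B)\otimes\ket{l}\bra{l}^B\right)-\entropy(\phi^{AB}).
\end{equation*}
Taking the infimum over all orthonormal bases $\{\ket{b_k}^B\}$ of $\mathcal{H}^B$ then yields \eqref{qc_statement}.

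I do not anticipate a genuine obstacle: the argument is a straightforward generalization of Proposition~\ref{Slemma_relent_cc}, with the only subtlety being careful bookkeeping of the support condition and of the operator logarithm on the $A$-side. In particular, one must check that $\sigma_l$ is supported inside $\mathrm{supp}(\rho_l^A)$ at the optimum so that $\relent(\tilde\rho_l^A\|\rho_l^A)$ is well defined and indeed vanishes at $\rho_l^A=\tilde\rho_l^A$; this is automatic once we restrict the outer minimization in $D^{\qc}_{\textup{rel}}$ to $\psi^{AB}$ whose block supports dominate those of $\phi^{AB}$, which is without loss of generality.
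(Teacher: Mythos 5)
Your proposal is correct and follows essentially the same route as the paper's own proof: split $\log_2\psi^{AB}$ using the block-diagonal structure in $B$, decompose the relative entropy into the block-diagonal entropy term plus a classical Kullback--Leibler term and conditional quantum relative-entropy terms, saturate these nonnegative terms with $p_l=q_l$ and $\rho_l^A=\tilde\rho_l^A$, and then minimize over the basis. Your explicit remark on the support bookkeeping is a welcome clarification but does not change the argument.
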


\begin{proof}
    Take a quantum-classical state
    \begin{equation}
        \psi^{AB} \coloneqq \sum_k p_k \rho^A_k\otimes\ket{b_k}\bra{b_k}^B. 
    \end{equation}
    such that $\mathrm{supp}(\phi^{AB}) \subseteq \mathrm{supp}(\psi^{AB})$.
    Then, it holds that
    \begin{align}
        &\relent(\phi^{AB}\|\psi^{AB})\\
        &= -\entropy(\phi^{AB}) - \tr[\phi^{AB}\log_2\psi^{AB}]\\
        &= -\entropy(\phi^{AB}) - \tr\left[\phi^{AB}\log_2\left(\sum_k p_k \rho^A_k\otimes\ket{b_k}\bra{b_k}^B\right)\right]\\
        &= -\entropy(\phi^{AB}) - \sum_k\tr\left[\phi^{AB}\left(\log_2\rho^A_k\otimes\ket{b_k}\bra{b_k}^B\right)\right] - \sum_k\tr\left[\phi^{AB}\left(\I_{\mathrm{supp}(\rho^A_k)}\otimes\log_2p_k\ket{b_k}\bra{b_k}^B\right)\right]\\
        &= -\entropy(\phi^{AB}) - \sum_k\tr\left[(\I^A\otimes\bra{b_k}^B)\phi^{AB}(\I^A\otimes\ket{b_k}^B)\log_2\rho^A_k\right] - \sum_k \tr[(\I^A\otimes\bra{b_k}^B)\phi^{AB}(\I^A\otimes\ket{b_k}^B)]\log_2p_k.
    \end{align}
    Now, write
    \begin{align}
        &q_k \coloneqq \tr\left[(\I^A\otimes\bra{b_k}^B)\phi^{AB}(\I^A\otimes\ket{b_k}^B)\right],\\
        &\phi^A_k \coloneqq \frac{(\I^A\otimes\bra{b_k}^B)\phi^{AB}(\I^A\otimes\ket{b_k}^B)}{q_k}.
    \end{align}
    Then, we have
    \begin{align}
        \relent(\phi^{AB}\|\psi^{AB})
        &= -\entropy(\phi^{AB}) - \sum_kq_k\tr\left[\phi^A_k\log_2\rho^A_k\right] -\sum_k q_k\log_2p_k\\
        &= -\entropy(\phi^{AB}) -\sum_k q_k\log_2q_k + \left(\sum_k q_k\log_2q_k -\sum_k q_k\log_2p_k\right) + \sum_k q_k \entropy(\phi^A_k) + \sum_kq_k\relent(\phi^A_k\|\rho^A_k)\\
        \label{qc_ineq_1}
        &\geqq -\entropy(\phi^{AB}) -\sum_k q_k\log_2q_k + \sum_k q_k \entropy(\phi^A_k)\\
        &= -\entropy(\phi^{AB}) + \entropy\left(\sum_kq_k\phi^A_k\otimes\ket{b_k}\bra{b_k}^B\right)\\
        \label{qc_ineq_2}
        &= \entropy\left(\sum_k(\I^A\otimes\bra{b_k}^B)\rho(\I^A\otimes\ket{b_k}^B)\otimes\ket{b_k}\bra{b_k}^B\right) - \entropy(\phi^{AB}).
    \end{align}
    The inequality~\eqref{qc_ineq_1} follows from the positivity of the relative entropy, and it is saturated by taking $p_k = q_k$ and $\rho^A_k = \phi^A_k$ for all $k$. 
    Therefore, the minimum of $\relent(\phi^{AB}\|\psi^{AB})$ can be achieved by choosing optimal $\ket{b_k}^B$ in \eqref{qc_ineq_2}, 
    which implies \eqref{qc_statement}.
\end{proof}

Using Proposition~\ref{Slemma_relent_qc}, we prove the asymptotic continuity of $D^{\qc}_{\textup{rel}}$.
\begin{theorem}
    Let $\mathcal{H}^A$ and $\mathcal{H}^B$ be finite-dimensional quantum systems with dimension $D = \mathrm{dim}(\mathcal{H}^A\otimes\mathcal{H}^B)$.
    The relative entropy of discord $D^{\qc}_{\textup{rel}}$ is asymptotically continuous; \textit{i.e.}, 
    it holds that
    \begin{equation}
        |D^{\qc}_{\textup{rel}}(\phi^{AB}) - D^{\qc}_{\textup{rel}}(\psi^{AB})| \leqq \|\phi^{AB} - \psi^{AB}\|_1 \log_2D + 2h_2\left(\frac{\|\phi^{AB} - \psi^{AB}\|_1}{2}\right), 
    \end{equation}
    where $h_2$ is the binary entropy function. 
\end{theorem}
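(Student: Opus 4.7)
The plan is to mimic the strategy from the classical-classical case, exploiting the simple characterization in Proposition~\ref{Slemma_relent_qc} which expresses $D^{\qc}_{\textup{rel}}(\phi^{AB})$ as the gap between an optimized entropy of a ``dephased-on-$B$'' state and $\entropy(\phi^{AB})$. The essential idea is that both terms in this gap are controlled by Fannes-Audenaert.

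First, I would apply Proposition~\ref{Slemma_relent_qc} and the triangle inequality to bound $|D^{\qc}_{\textup{rel}}(\phi^{AB}) - D^{\qc}_{\textup{rel}}(\psi^{AB})|$ by $|\entropy(\phi^{AB}) - \entropy(\psi^{AB})|$ plus the absolute difference of the two minima over orthonormal bases $\{\ket{b_k}^B\}$. Without loss of generality I may assume the minimum for $\phi^{AB}$ is the larger of the two (the reverse case is handled by swapping $\phi^{AB}$ and $\psi^{AB}$). Fixing $\{\ket{b_k}^B\}$ to be the basis that attains the minimum for $\psi^{AB}$ and plugging it into the $\phi^{AB}$ objective only increases the latter, so the second term is bounded by the absolute entropy difference of $\Phi(\phi^{AB})$ and $\Phi(\psi^{AB})$, where
\begin{equation*}
    \Phi(\rho) \coloneqq \sum_k (\I^A\otimes\bra{b_k}^B)\rho(\I^A\otimes\ket{b_k}^B)\otimes\ket{b_k}\bra{b_k}^B.
\end{equation*}

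Next, I would note that $\Phi$ is precisely $\id^A \otimes \Delta_B$, where $\Delta_B$ is the complete dephasing channel on $B$ in the basis $\{\ket{b_k}\}$, and is therefore a CPTP map. By monotonicity of the trace norm under CPTP maps one obtains $\|\Phi(\phi^{AB})-\Phi(\psi^{AB})\|_1 \leqq \|\phi^{AB}-\psi^{AB}\|_1$. Since both $\phi^{AB}, \psi^{AB}$ and their images $\Phi(\phi^{AB}), \Phi(\psi^{AB})$ live on the $D$-dimensional space $\mathcal{H}^A\otimes\mathcal{H}^B$, two applications of Lemma~\ref{lemma_Fannes} with $\epsilon = \|\phi^{AB}-\psi^{AB}\|_1/2$ give the two entropy differences each bounded by $\epsilon\log_2 D + h_2(\epsilon)$; summing yields the claimed inequality.

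There is no real obstacle here: the only point of care is verifying that $\Phi$ is CPTP (immediate from its Kraus form $\{\I^A\otimes\ket{b_k}\bra{b_k}^B\}_k$ extended with the classical register) so that both the data-processing inequality for the trace norm and the dimension bound $D$ in Fannes-Audenaert may be applied without losing factors. The argument is a direct, slightly simpler analog of the $\cc$ proof, since only the $B$-basis is optimized here instead of both local bases.
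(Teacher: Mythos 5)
Your proposal is correct and follows essentially the same route as the paper's proof: apply Proposition~\ref{Slemma_relent_qc}, split by the triangle inequality, evaluate the $\phi$-objective at the optimal basis for $\psi$, contract the trace norm under the dephasing map, and apply Lemma~\ref{lemma_Fannes} twice. The only cosmetic difference is that you invoke the data-processing inequality for the CPTP map $\id^A\otimes\Delta_B$, whereas the paper bounds the same trace norm directly via the orthogonal block structure of the dephased states; both yield the identical bound.
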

\begin{proof}
By Proposition~\ref{Slemma_relent_qc}, it holds that
\begin{align*}
    |D^{\qc}_{\textup{rel}}(\phi^{AB}) - D^{\qc}_{\textup{rel}}(\psi^{AB})|
        &\leqq |\entropy(\phi^{AB})-\entropy(\psi^{AB})| \nonumber \\
        &+ \Bigg|\entropy\left(\sum_k(\I^A\otimes\bra{b_k}^B)\phi^{AB}(\I^A\otimes\ket{b_k}^B)\otimes\ket{b_k}\bra{b_k}^B\right)\\
        &\quad- \entropy\left(\sum_k(\I^A\otimes\bra{b_k}^B)\psi^{AB}(\I^A\otimes\ket{b_k}^B)\otimes\ket{b_k}\bra{b_k}^B\right)\Bigg|,
\end{align*}
where $\ket{b_k}^B$ is an optimal orthonormal basis for $\psi^{AB}$. 
Here, we assume that 
\begin{equation*}
    \min_{\ket{b_k}^B} \entropy\left(\sum_k(\I^A\otimes\bra{b_k}^B)\phi^{AB}(\I^A\otimes\ket{b_k}^B)\otimes\ket{b_k}\bra{b_k}^B\right) \geqq \min_{\ket{b_k}} \entropy\left(\sum_k(\I^A\otimes\bra{b_k}^B)\psi^{AB}(\I^A\otimes\ket{b_k}^B)\otimes\ket{b_k}\bra{b_k}^B\right).    
\end{equation*}
Otherwise, we can switch the roles of $\phi^{AB}$ and $\psi^{AB}$ to apply the same argument. 
Now, it holds that
\begin{align}
    &\left\|\sum_k(\I^A\otimes\bra{b_k}^B)\phi^{AB}(\I^A\otimes\ket{b_k}^B)\otimes\ket{b_k}\bra{b_k}^B - \sum_k(\I^A\otimes\bra{b_k}^B)\psi^{AB}(\I^A\otimes\ket{b_k}^B)\otimes\ket{b_k}\bra{b_k}^B\right\|_1\\
    &=\sum_k\left\|(\I^A\otimes\bra{b_k}^B)\left(\phi^{AB} - \psi^{AB}\right)(\I^A\otimes\ket{b_k}^B)\right\|_1\\
    &= \left\|\sum_k(\I^A\otimes\bra{b_k}^B)\left(\phi^{AB} - \psi^{AB}\right)(\I^A\otimes\ket{b_k}^B)\right\|_1\\
    &\leqq \|\phi^{AB}-\psi^{AB}\|_1. 
\end{align}
Therefore, using Lemma~\ref{lemma_Fannes} (Fannes-Audenaert inequality), we have
    \begin{equation}
        |D^{\qc}_{\textup{rel}}(\phi^{AB}) - D^{\qc}_{\textup{rel}}(\psi^{AB})| \leqq \|\phi^{AB} - \psi^{AB}\|_1 \log_2D + 2h_2\left(\frac{\|\phi^{AB} - \psi^{AB}\|_1}{2}\right). 
    \end{equation}
\end{proof}

\subsection{Measurement-Based Quantum Discord}~\label{SSubsec_MBQD}
In this section, we show that another measure for discord, \textit{measurement-based quantum discord}, also satisfies subadditivity and asymptotic continuity; that is, its regularization is consistent.
\begin{definition}[Measurement-Based Quantum Discord~\cite{Ollivier2001}]
    Let $\mathcal{H}^A$ and $\mathcal{H}^B$ be finite-dimensional quantum systems.
    For a quantum state $\phi^{AB}$, the measurement-based quantum discord for the measurement on $B$ is defined as
    \begin{equation}
        D^{\leftarrow}(\phi^{AB}) \coloneqq I(A:B)_{\phi} - J(A|B)_{\phi}. 
    \end{equation}
    In the above definition, $I(A:B)_{\phi}$ is the quantum mutual information defined as $I(A:B)_{\phi} \coloneqq \entropy(\phi^{A}) + \entropy(\phi^{B}) - \entropy(\phi^{AB})$, and $J(A|B)_{\phi}$ is the classical correlation function~\cite{Henderson2001} of $\phi^{AB}$ defined as
    \begin{equation}
        J(A|B)_{\phi} \coloneqq \entropy(\phi^{A}) - \min_{\{M_k^B\}_k} \entropy(A|\{M_k^B\}_k)_{\phi}
    \end{equation}
    where $\{M_k^B\}_k$ forms a measurement on system $B$; that is, $\sum_k (M_k^B)^\dagger M_k^B = \I^B$ holds, and 
    \begin{equation}
        \entropy(A|\{M_k^B\}_k)_{\phi} \coloneqq \sum_k p_k \entropy(\phi^{A|k})    
    \end{equation}
    is the conditional entropy with measurement $\{M_k^B\}_k$
    with probability $p_k \coloneqq \tr((\I_A \otimes (M_k^B)^\dagger M_k^B)\phi^{AB})$ and  
    post-measurement state $\phi^{A|k} \coloneqq \tr_B((\I^A\otimes M_k^B)\phi^{AB}(\I^A \otimes (M_k^B)^\dagger))/p_k$ for measurement outcome $k$.
    
    In a similar manner, we can define the measurement-based quantum discord for the measurement on $A$ as
    \begin{equation}
        D^{\rightarrow}(\phi^{AB}) \coloneqq I(A:B)_{\phi} - J(B|A)_{\phi}. 
    \end{equation}
\end{definition}

\begin{remark}
    The measurement-based quantum discord is not symmetric; that is, $D^{\leftarrow} \neq D^{\rightarrow}$ in general~\cite{Ollivier2001}. 
\end{remark}
Hereafter, we only consider $D^{\leftarrow}$ as the measurement-based quantum discord. 
The same argument applies to $D^{\rightarrow}$ by interchanging the roles of $A$ and $B$. 
We here show that $D^{\leftarrow}$ is subadditive and asymptotically continuous; in particular, the regularization of $D^{\leftarrow}$ is consistent. 
\begin{proposition}
    The measurement-based quantum discord $D^{\leftarrow}$ is subadditive; that is, for quantum states $\phi^{AB}$ and $\psi^{A'B'}$, it holds that 
    \begin{equation}
        D^{\leftarrow}(\phi^{AB}\otimes\psi^{A'B'}) \leqq D^{\leftarrow}(\phi^{AB}) + D^{\leftarrow}(\psi^{A'B'}). 
    \end{equation}
\end{proposition}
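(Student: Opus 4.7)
The plan is to rewrite $D^{\leftarrow}$ in a form that isolates the one piece that depends on the choice of measurement, and then to exploit the fact that for a product state a product measurement decouples the post-measurement ensembles on $A$ and $A'$. Concretely, using $I(A:B)_\phi = \entropy(\phi^A) + \entropy(\phi^B) - \entropy(\phi^{AB})$ together with the definition of $J(A|B)_\phi$, the $\entropy(\phi^A)$ terms cancel and one obtains
\begin{equation*}
D^{\leftarrow}(\phi^{AB}) = \entropy(\phi^B) - \entropy(\phi^{AB}) + \min_{\{M_k^B\}} \entropy(A|\{M_k^B\})_\phi.
\end{equation*}
Applied to $\phi^{AB}\otimes\psi^{A'B'}$, additivity of the von Neumann entropy on product states immediately gives $\entropy(\phi^B\otimes\psi^{B'}) - \entropy(\phi^{AB}\otimes\psi^{A'B'}) = [\entropy(\phi^B)-\entropy(\phi^{AB})] + [\entropy(\psi^{B'})-\entropy(\psi^{A'B'})]$, so the problem reduces to controlling the minimum conditional entropy term.

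The key step is to upper-bound $\min_{\{L^{BB'}\}} \entropy(AA'|\{L^{BB'}\})_{\phi\otimes\psi}$ by restricting the minimization to product measurements $\{M_k^B\otimes N_l^{B'}\}$; such a restriction can only make the minimum larger, so the resulting quantity is indeed an upper bound on the unrestricted minimum. For a product state acted on by a product Kraus family, the outcome probabilities factorize as $p_k q_l$ and the post-measurement states factorize as $\phi^{A|k}\otimes\psi^{A'|l}$, so additivity of the von Neumann entropy on product states yields
\begin{equation*}
\entropy(AA'|\{M_k^B\otimes N_l^{B'}\})_{\phi\otimes\psi} = \entropy(A|\{M_k^B\})_\phi + \entropy(A'|\{N_l^{B'}\})_\psi,
\end{equation*}
and the two marginal minimizations then decouple.

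Putting the pieces together gives
\begin{equation*}
\min_{\{L^{BB'}\}} \entropy(AA'|\{L^{BB'}\})_{\phi\otimes\psi} \leqq \min_{\{M_k^B\}} \entropy(A|\{M_k^B\})_\phi + \min_{\{N_l^{B'}\}} \entropy(A'|\{N_l^{B'}\})_\psi,
\end{equation*}
and substituting this back into the rewritten expression for $D^{\leftarrow}(\phi^{AB}\otimes\psi^{A'B'})$ reproduces $D^{\leftarrow}(\phi^{AB}) + D^{\leftarrow}(\psi^{A'B'})$ on the right-hand side, which is exactly subadditivity.

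There is no genuine obstacle; the only delicate point is being careful with conventions for the measurement: one must write the post-measurement ensemble using general Kraus operators $M_k^B$ with $\sum_k (M_k^B)^\dagger M_k^B = \I^B$ (as in the definition), verify that the tensor family $\{M_k^B\otimes N_l^{B'}\}$ is also a valid measurement, and check that $p_{kl} = p_k q_l$ and $\phi^{AA'|kl} = \phi^{A|k}\otimes\psi^{A'|l}$ on a product input. Once this routine bookkeeping is in place, the proof is a two-line estimate of ``restrict to product measurements, then use additivity of $H$ on product states.''
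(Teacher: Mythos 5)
Your proposal is correct and follows essentially the same route as the paper: rewrite $D^{\leftarrow}$ as $\entropy(\phi^B)-\entropy(\phi^{AB})+\min_{\{M_k^B\}}\entropy(A|\{M_k^B\})_\phi$, use additivity of the von Neumann entropy on product states for the non-variational terms, and bound the joint minimization by the tensor product of the (optimal) marginal measurements, for which probabilities and post-measurement states factorize. The bookkeeping you flag (validity of the product Kraus family, $p_{kl}=p_kq_l$, $\phi^{AA'|kl}=\phi^{A|k}\otimes\psi^{A'|l}$) is exactly what the paper's proof carries out explicitly.
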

\begin{proof}
    Let $\{M_k^B\}_k$ and $\{N_l^{B'}\}_l$ be optimal measurements achieving $D^{\leftarrow}(\phi^{AB})$ and $D^{\leftarrow}(\psi^{A'B'})$ respectively. 
    Then, it holds that
    \begin{align}
        D^{\leftarrow}(\phi^{AB}\otimes\psi^{A'B'}) 
        &= I(AA':BB')_{\phi\otimes\psi} - J(AA'|BB')_{\phi\otimes\psi}\\
        &= \entropy(\phi^B\otimes\psi^{B'}) - \entropy(\phi^{AB}\otimes\psi^{A'B'}) + \min_{\{O_j^{BB'}\}_j} \entropy(AA'|\{O_j^{BB'}\}_j)_{\phi\otimes\psi}\\
        &= (\entropy(\phi^B) - \entropy(\phi^{AB})) + (\entropy(\psi^{B'}) - \entropy(\psi^{A'B'})) + \min_{\{O_j^{BB'}\}_j} \entropy(AA'|\{O_j^{BB'}\}_j)_{\phi\otimes\psi}. 
    \end{align}
    Here, consider measurement $\{M_k^B\otimes N_l^{B'}\}_{k,l}$. 
    This is a suboptimal mesurement for $ \min_{\{O_j^{BB'}\}_j} \entropy(AA'|\{O_j^{BB'}\}_j')_{\phi\otimes\psi}$. 
    Probability $p_{k,l}$ for measurement outcome $(k,l)$ of this measurement is 
    \begin{align}
        p_{k,l}
        &= \tr((\I^A\otimes M_k^{B}\otimes\I^{A'}\otimes N_l^{B'})(\phi^{AB}\otimes\psi^{A'B'})(\I^A\otimes (M_k^{B})^\dagger\otimes\I^{A'}\otimes (N_l^{B'})^\dagger))\\
        &= \tr((\I^A\otimes M_k^{B})\phi^{AB}(\I^A\otimes (M_k^{B})^\dagger))\times \tr((\I^{A'}\otimes N_l^{B'})\psi^{A'B'}(\I^{A'}\otimes (N_l^{B'})^\dagger))\\
        &\eqqcolon q_{k}\times r_{l}
    \end{align}
    and post-measurement state $\rho^{AA'|k,l}$ for outcome $(k,l)$ is 
    \begin{align}
        \rho^{AA'|k,l} 
        &= \frac{\tr_{BB'}((\I^A\otimes M_k^{B}\otimes\I^{A'}\otimes N_l^{B'})(\phi^{AB}\otimes\psi^{A'B'})(\I^A\otimes (M_k^{B})^\dagger\otimes\I^{A'}\otimes (N_l^{B'})^\dagger))}{p_{k,l}}\\
        &= \frac{\tr_B((\I^A\otimes M_k^{B})\phi^{AB}(\I^A\otimes (M_k^{B})^\dagger))}{q_k} \otimes \frac{\tr_{B'}((\I^{A'}\otimes N_l^{B'})\psi^{A'B'}(\I^{A'}\otimes (N_l^{B'})^\dagger))}{r_l}\\
        &\eqqcolon \phi^{A|k}\otimes\psi^{A'|l}.
    \end{align}
    Then, it holds that 
    \begin{align}
        \min_{\{O_j^{BB'}\}_j} \entropy(AA'|\{O_j^{BB'}\}_j)_{\phi\otimes\psi}
        &\leqq \sum_{k,l} p_{k,l}\entropy(\rho^{AA'|k,l})\\
        &= \sum_{k,l} q_kr_l\entropy(\phi^{A|k}\otimes\psi^{A'|l})\\
        &= \left(\sum_k q_k \entropy(\phi^{A|k})\right) + \left(\sum_l r_l \entropy(\psi^{A'|l})\right).
    \end{align}
    Therefore, it follows that
    \begin{align}
        D^{\leftarrow}(\phi^{AB}\otimes\psi^{A'B'}) 
        &\leqq (\entropy(\phi^B) - \entropy(\phi^{AB})) + (\entropy(\psi^{B'}) - \entropy(\psi^{A'B'})) + \left(\sum_k q_k \entropy(\phi^{A|k})\right) + \left(\sum_l r_l \entropy(\psi^{A'|l})\right)\\
        &=\left(\entropy(\phi^B) - \entropy(\phi^{AB}) + \sum_k q_k \entropy(\phi^{A|k})) \right) + \left(\entropy(\psi^{B'}) - \entropy(\psi^{A'B'}) + \sum_l r_l \entropy(\psi^{A'|l})\right)\\
        \label{eq_MBQD_1}
        &= D^{\leftarrow}(\phi^{AB}) + D^{\leftarrow}(\psi^{A'B'}), 
    \end{align}
    where the last equality~\eqref{eq_MBQD_1} follows from the optimality of $\{M_k^B\}_k$ and $\{N_l^{B'}\}_l$.
\end{proof}

\begin{theorem}
    Let $\mathcal{H}^A$ and $\mathcal{H}^B$ be finite-dimensional quantum systems with dimension $D = \mathrm{dim}(\mathcal{H}^A\otimes\mathcal{H}^B)$.
    The measurement-based quantum discord is asymptotically continuous; that is, for arbitrary states $\phi^{AB}, \psi^{AB}$ on $\mathcal{H}^A\otimes\mathcal{H}^B$, it holds that 
    \begin{equation}
        |D^{\leftarrow}(\phi^{AB}) - D^{\leftarrow}(\psi^{AB})| \leqq 2\|\phi^{AB} - \psi^{AB}\|_1 \log_2D + 4h_2\left(\frac{\|\phi^{AB} - \psi^{AB}\|_1}{2}\right), 
    \end{equation}
    where $h_2$ is the binary entropy function. 
\end{theorem}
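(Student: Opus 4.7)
The plan is to exploit the decomposition
$D^\leftarrow(\phi^{AB}) = \entropy(\phi^B) - \entropy(\phi^{AB}) + \min_{\{M_k^B\}_k}\entropy(A|\{M_k^B\}_k)_\phi$,
which follows directly from the definition $D^\leftarrow = I(A:B) - J(A|B)$ after cancelling $\entropy(\phi^A)$. By the triangle inequality I split $|D^\leftarrow(\phi^{AB})-D^\leftarrow(\psi^{AB})|$ into three absolute differences. For the marginal term $|\entropy(\phi^B)-\entropy(\psi^B)|$ and the joint term $|\entropy(\phi^{AB})-\entropy(\psi^{AB})|$ I would directly apply Lemma~\ref{lemma_Fannes} (Fannes-Audenaert), using $\|\phi^B-\psi^B\|_1 \leqq \|\phi^{AB}-\psi^{AB}\|_1$ for the first.

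For the difference of the minima, I would use the standard optimizer-swap trick: assuming without loss of generality that the $\phi$-minimum is the larger one and letting $\{M_k^\star\}_k$ be an optimal POVM for $\psi$, the difference is upper bounded by $|\entropy(A|\{M_k^\star\}_k)_\phi - \entropy(A|\{M_k^\star\}_k)_\psi|$. This reduces the problem to the continuity of the conditional entropy at a \emph{fixed} measurement. I would then introduce the measurement channel $\Lambda(\rho^{AB}) \coloneqq \sum_k \tr_B[(\I^A\otimes M_k^\star)\rho^{AB}(\I^A\otimes (M_k^\star)^\dagger)] \otimes \ket{k}\bra{k}^K$, which maps $\phi^{AB},\psi^{AB}$ to classical-quantum states on $AK$ with $\|\Lambda(\phi)-\Lambda(\psi)\|_1 \leqq \|\phi-\psi\|_1$ by monotonicity, and identify $\entropy(A|\{M_k^\star\}_k)_\rho = \entropy(A|K)_{\Lambda(\rho)}$. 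Applying the Alicki-Fannes-Winter inequality for conditional entropy of classical-quantum states then yields a bound of the form $\epsilon\log_2 D_A + O(h_2(\epsilon))$ with $\epsilon = \|\phi-\psi\|_1/2$, crucially independent of the number of measurement outcomes $|K|$. Summing the three bounds and using $\log_2 D_A, \log_2 D_B \leqq \log_2 D$ produces the stated inequality.

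The main obstacle is controlling the conditional-entropy piece without incurring a dimension factor in $|K|$. If one were to use only Lemma~\ref{lemma_Fannes} via the chain rule $\entropy(A|K) = \entropy(AK) - \entropy(K)$, the resulting estimate would be of order $\epsilon \log_2(D_A|K|^2) + 2h_2(\epsilon)$, which diverges because a POVM on $\mathcal{H}^B$ may in principle have arbitrarily many outcomes. The Alicki-Fannes-Winter inequality specialized to classical-quantum states is the right tool for avoiding this divergence, since its $\log_2 D_A$ factor replaces the spurious $\log_2|K|^2$. If one insists on staying strictly within the Fannes-Audenaert toolkit used elsewhere in the paper, an alternative is to invoke an extremality argument to restrict the outer minimization to POVMs with at most polynomially many (in $D_B$) outcomes and then apply Lemma~\ref{lemma_Fannes} directly, at the cost of slightly worse numerical constants. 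Either route yields the advertised bound $|D^\leftarrow(\phi^{AB})-D^\leftarrow(\psi^{AB})| \leqq 2\|\phi^{AB}-\psi^{AB}\|_1\log_2 D + 4h_2(\|\phi^{AB}-\psi^{AB}\|_1/2)$.
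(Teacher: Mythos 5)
Your proposal is correct and follows the same architecture as the paper's proof up to the last step: the same decomposition $D^{\leftarrow}(\phi^{AB})=\entropy(\phi^B)-\entropy(\phi^{AB})+\min_{\{M_k^B\}_k}\entropy(A|\{M_k^B\}_k)_{\phi}$, the same triangle-inequality split, the same Fannes--Audenaert treatment of the two unconditional entropy terms, and the same optimizer-swap followed by the introduction of the measurement channel (your $\Lambda$ is the paper's $\Phi$). Where you genuinely diverge is in how the fixed-measurement conditional-entropy term is controlled. The paper writes $\sum_j p_j\entropy(\phi^{A|j})=\entropy(\Phi(\phi^{AB}))-\entropy(\Delta(\phi^{AB}))$, i.e.\ exactly the chain rule $\entropy(A|K)=\entropy(AK)-\entropy(K)$, and then applies Fannes--Audenaert to each of the two pieces with the dimension factor $\log_2 D$; you instead apply the Alicki--Fannes--Winter bound for the conditional entropy of classical--quantum states, which carries a $\log_2 D_A$ factor independent of the number of outcomes $|K|$. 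Your stated reason for avoiding the chain-rule route --- that $\Phi(\phi^{AB})$ and $\Delta(\phi^{AB})$ live on spaces of dimension $D_A|K|$ and $|K|$, which are not bounded by $D$ for a POVM with many outcomes --- is a legitimate observation, and it in fact points at a step the paper glosses over: its application of Fannes--Audenaert with $\log_2 D$ is only justified after one restricts the minimization to POVMs with at most $D_B$ (or, via extremality, $O(D_B^2)$) outcomes, which the paper does not do explicitly. So your route is, if anything, the more watertight one. The only loose end on your side is quantitative: the AFW remainder is $g_2(\epsilon)=(1+\epsilon)h_2\bigl(\epsilon/(1+\epsilon)\bigr)$ rather than a multiple of $h_2(\epsilon)$, and $g_2$ is not dominated by $2h_2$ near $\epsilon=1$, so recovering the exact advertised constants $2\log_2 D$ and $4h_2(\|\phi^{AB}-\psi^{AB}\|_1/2)$ requires either a short case analysis for large trace distance (where the inequality is anyway trivial since $D^{\leftarrow}\leqq\log_2 D$) or acceptance of a marginally different explicit bound; asymptotic continuity, which is all that the consistency argument needs, is unaffected either way.
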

\begin{proof}
    By definition, it holds that 
    \begin{align}
        &|D^{\leftarrow}(\phi^{AB}) - D^{\leftarrow}(\psi^{AB})| \\
        &= \left|\left(\entropy(\phi^B) - \entropy(\phi^{AB}) + \min_{\{M_k^{B}\}_k} \entropy(A|\{M_k^{B}\}_k)_\phi \right) - \left(\entropy(\psi^{B}) - \entropy(\psi^{AB}) + \min_{\{N_l^B\}_l} \entropy(A|\{N_l^B\}_l)_\psi\right)\right|\\
        \label{MBQD_AC_0}
        &\leqq |\entropy(\phi^B) - \entropy(\psi^B)| + |\entropy(\phi^{AB}) - \entropy(\psi^{AB})| + \left|\min_{\{M_k^{B}\}_k} \entropy(A|\{M_k^{B}\}_k)_\phi - \min_{\{N_l^B\}_l} \entropy(A|\{N_l^B\}_l)_\psi\right|. 
    \end{align}
Here, since $\|\phi^{B} - \psi^{B}\|_1 \leqq \|\phi^{AB} - \psi^{AB}\|_1$, from Lemma~\ref{lemma_Fannes} (Fannes-Audenaert inequality), it follows that 
\begin{align}
    \label{MBQD_AC_1}
    |\entropy(\phi^B) - \entropy(\psi^B)| &\leqq \frac{\|\phi^{AB} - \psi^{AB}\|}{2}\log_2D + h_2\left(\frac{\|\phi^{AB} - \psi^{AB}\|}{2}\right), \\
    \label{MBQD_AC_2}
    |\entropy(\phi^{AB}) - \entropy(\psi^{AB})| &\leqq \frac{\|\phi^{AB} - \psi^{AB}\|}{2}\log_2D + h_2\left(\frac{\|\phi^{AB} - \psi^{AB}\|}{2}\right).
\end{align}
Now, let us assume that $\min_{\{M_k^{B}\}_k} \entropy(A|\{M_k^{B}\}_k)_\phi \geqq \min_{\{N_l^B\}_l} \entropy(A|\{N_l^B\}_l)_\psi$. 
Otherwise, we can change the roles of $\phi^{AB}$ and $\psi^{AB}$ to apply the same argument. 
Then, let $\{O_j^{B}\}$ be an optimal measurement to achieve $\min_{\{N_l^B\}_l} \entropy(A|\{N_l^B\}_l)_\psi$. 
Define 
\begin{align}
    p_j &\coloneqq \tr((\I^A\otimes O_j^B)\phi^{AB}(\I^A\otimes (O_j^B)^\dagger)),\\
    q_j &\coloneqq \tr((\I^A\otimes O_j^B)\psi^{AB}(\I^A\otimes (O_j^B)^\dagger)),\\
    \phi^{A|j} &\coloneqq \tr_B((\I^A\otimes O_j^B)\phi^{AB}(\I^A\otimes (O_j^B)^\dagger))/p_j,\\
    \psi^{A|j} &\coloneqq \tr_B((\I^A\otimes O_j^B)\psi^{AB}(\I^A\otimes (O_j^B)^\dagger))/q_j
\end{align}
for each label (measurement outcome) $j$. 
Then, it holds that 
\begin{align}
    &\left|\min_{\{M_k^{B}\}_k} \entropy(A|\{M_k^{B}\}_k)_\phi - \min_{\{N_l^B\}_l} \entropy(A|\{N_l^B\}_l)_\psi\right|\\
    &= \min_{\{M_k^{B}\}_k} \entropy(A|\{M_k^{B}\}_k)_\phi - \min_{\{N_l^B\}_l} \entropy(A|\{N_l^B\}_l)_\psi\\
    &\leqq \sum_j p_j\entropy(\phi^{A|j}) - \sum_j q_j\entropy(\psi^{A|j}).
\end{align} 
We here consider quantum channels $\Phi$ and $\Delta$ on system $\mathcal{H}^A \otimes \mathcal{H}^B$ defined as 
\begin{align}
    \Phi(X^{AB}) &\coloneqq \sum_j \tr_B((\I^A\otimes O_j^B)X^{AB}(\I^A\otimes (O_j^B)^\dagger))\otimes\ket{j}\bra{j},\\
    \Delta(X^{AB}) &\coloneqq \sum_j \tr((\I^A\otimes O_j^B)X^{AB}(\I^A\otimes (O_j^B)^\dagger))\ket{j}\bra{j}
\end{align}
for all operators $X^{AB}$ on $\mathcal{H}^A \otimes \mathcal{H}^{B}$. 
Then, it holds that 
\begin{align}
    \sum_j p_j\entropy(\phi^{A|j}) 
    &= \entropy\left(\sum_j p_j\phi^{A|j} \otimes \ket{j}\bra{j}\right) + \sum_j p_j \log_2p_j\\
    &= \entropy(\Phi(\phi^{AB})) - \entropy(\Delta(\phi^{AB})). 
\end{align}
Similarly, we have 
\begin{equation}
    \sum_j p_j\entropy(\psi^{A|j}) = \entropy(\Phi(\psi^{AB})) - \entropy(\Delta(\psi^{AB})). 
\end{equation}
Therefore, it follows that 
\begin{align}
    \label{MBQD_AC_3}
    &\left|\min_{\{M_k^{B}\}_k} \entropy(A|\{M_k^{B}\}_k)_\phi - \min_{\{N_l^B\}_l} \entropy(A|\{N_l^B\}_l)_\psi\right|\\
    &\leqq \left|\entropy(\Phi(\phi^{AB})) - \entropy(\Delta(\phi^{AB})) - \left(\entropy(\Phi(\psi^{AB})) - \entropy(\Delta(\psi^{AB}))\right)\right|\\
    \label{MBQD_AC_4}
    &\leqq \left|\entropy(\Phi(\phi^{AB})) -\entropy(\Phi(\psi^{AB}))\right| + \left|\entropy(\Delta(\phi^{AB})) - \entropy(\Delta(\psi^{AB}))\right|.
\end{align}
Since it holds that $\|\Phi(\phi^{AB}) - \Phi(\psi^{AB})\|_1 \leqq \|\phi^{AB}-\psi^{AB}\|_1$, from Lemma~\ref{lemma_Fannes} (the Fannes-Audenaert inequality), 
we have 
\begin{equation}
    \label{MBQD_AC_5}
    \left|\entropy(\Phi(\phi^{AB})) - \entropy(\Phi(\psi^{AB}))\right|
    \leqq \frac{\|\phi^{AB} - \psi^{AB}\|_1}{2}\log_2D + h_2\left(\frac{\|\phi^{AB}-\psi^{AB}\|_1}{2}\right). 
\end{equation}
In a similar manner, since it holds that $\|\Delta(\phi^{AB}) - \Delta(\psi^{AB})\|_1 \leqq \|\phi^{AB}-\psi^{AB}\|_1$, from Lemma~\ref{lemma_Fannes} (the Fannes-Audenaert inequality), 
we have 
\begin{equation}
    \label{MBQD_AC_6}
    \left|\entropy(\Delta(\phi^{AB})) - \entropy(\Delta(\psi^{AB}))\right| 
    \leqq \frac{\|\phi^{AB} - \psi^{AB}\|_1}{2}\log_2D + h_2\left(\frac{\|\phi^{AB}-\psi^{AB}\|_1}{2}\right). 
\end{equation}
Combining Eqs.~\eqref{MBQD_AC_3}-\eqref{MBQD_AC_6}, we have
\begin{equation}\label{MBQD_AC_7}
    \left|\min_{\{M_k^{B}\}_k} \entropy(A|\{M_k^{B}\}_k)_\phi - \min_{\{N_l^B\}_l} \entropy(A|\{N_l^B\}_l)_\psi\right| \leqq \|\phi^{AB} - \psi^{AB}\|_1\log_2D + 2h_2\left(\frac{\|\phi^{AB}-\psi^{AB}\|_1}{2}\right). 
\end{equation}
Therefore, by Eqs.~\eqref{MBQD_AC_0}, \eqref{MBQD_AC_1}, \eqref{MBQD_AC_2}, and \eqref{MBQD_AC_7}, 
it holds that
\begin{equation}
    |D^{\leftarrow}(\phi^{AB}) - D^{\leftarrow}(\psi^{AB})| \leqq 2\|\phi^{AB} - \psi^{AB}\|_1\log_2D + 4h_2\left(\frac{\|\phi^{AB}-\psi^{AB}\|_1}{2}\right).
\end{equation}
\end{proof}

\section{Non-Markovianity}~\label{SSec_Markov}
In this section, we investigate a consistent resource measure in the resource theory of non-Markovianity. 
We adopt the notation in Ref.~\cite{Ibinson2008}.
A Markov state is defined as a state whose quantum conditional mutual information is zero. 
\begin{definition}
    Let $\mathcal{H}^A$, $\mathcal{H}^B$ and $\mathcal{H}^C$ be finite-dimensional quantum systems. 
    Define the set of quantum Markov state $\mathcal{D}_{\mathrm{Markov}}$ as
    \begin{equation}
        \mathcal{D}_{\mathrm{Markov}} \coloneqq \left\{\psi^{ABC} \in \mathcal{D}\left(\mathcal{H}^A\otimes\mathcal{H}^B\otimes\mathcal{H}^C\right): I(A:C|B)_\psi = 0\right\}. 
    \end{equation}
\end{definition}
\begin{remark}
    $\mathcal{D}_{\mathrm{Markov}}$ is not convex.
\end{remark}
Here, we define the relative entropy of non-Markovianity. 
\begin{definition}
    Let $\mathcal{H}^A$, $\mathcal{H}^B$ and $\mathcal{H}^C$ be finite-dimensional quantum systems. 
    For a quantum state $\phi^{ABC} \in \mathcal{D}\left(\mathcal{H}^A\otimes\mathcal{H}^B\otimes\mathcal{H}^C\right)$, 
    define the relative entropy of non-Markovianity $\Delta$ as
    \begin{equation}
        \Delta(\phi^{ABC}) \coloneqq \min_{\psi^{ABC} \in \mathcal{D}_{\mathrm{Markov}}} \relent(\phi^{ABC}\|\psi^{ABC}).
    \end{equation}
\end{definition}
The following lemma shows a simple characterization of the relative entropy of non-Markovianity.
\begin{lemma}[\cite{Ibinson2008}]~\label{Slemma_decomp_markov}
    Let $\mathcal{H}^A$, $\mathcal{H}^B$ and $\mathcal{H}^C$ be finite-dimensional quantum systems. 
    For all states $\phi^{ABC} \in \mathcal{D}(\mathcal{H}^{A}\otimes\mathcal{H}^{B}\otimes\mathcal{H})$, it holds that
        \begin{equation}
            \Delta(\phi^{ABC}) = \min_{b_j^L, b_j^R} \left[\entropy\left(\bigoplus_j q_j\omega(\phi)^{Ab_j^L}_j\otimes\omega(\phi)^{b_j^RC}_j\right)\right] - \entropy(\phi^{ABC}), 
        \end{equation}
        where 
        the minimum is taken over decompositions $\mathcal{H}^B = \bigoplus_j \mathcal{H}_{b_j^L} \otimes \mathcal{H}_{b_j^R}$, 
        \begin{align}
            &\omega(\phi)^{Ab_j^L}_j \coloneqq \tr_{b_j^RC}\omega(\phi)^{Ab_j^Lb_j^RC}_j,\\
            &\omega(\phi)^{b_j^RC}_j \coloneqq \tr_{Ab_j^L}\omega(\phi)^{Ab_j^Lb_j^RC}_j,
        \end{align}
        and
        \begin{align}
            &q(\phi)_j \coloneqq \tr[(\I^{AC}\otimes P_j)\phi(\I^{AC}\otimes P_j)],\\
            &\omega(\phi)^{Ab_j^Lb_j^RC}_j \coloneqq \frac{(\I^{AC}\otimes P_j)\phi(\I^{AC}\otimes P_j)}{q(\phi)_j},
        \end{align}
        with the projector $P_j$ onto $\mathcal{H}_{b_j^L} \otimes \mathcal{H}_{b_j^R}$.
\end{lemma}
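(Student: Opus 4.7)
The plan is to reduce the claim to the Hayden--Jozsa--Petz--Winter structure theorem for quantum Markov chains and then carry out a direct minimization of $\relent(\phi\|\sigma)$ over the resulting parametrization. Concretely, I would first invoke the structure theorem: a state $\sigma^{ABC}$ satisfies $I(A{:}C|B)_\sigma=0$ if and only if there is a decomposition $\mathcal{H}^B=\bigoplus_j \mathcal{H}_{b_j^L}\otimes\mathcal{H}_{b_j^R}$, a probability distribution $\{s_j\}$, and states $\sigma_j^{Ab_j^L}\in\mathcal{D}(\mathcal{H}^A\otimes\mathcal{H}_{b_j^L})$, $\sigma_j^{b_j^R C}\in\mathcal{D}(\mathcal{H}_{b_j^R}\otimes\mathcal{H}^C)$ such that $\sigma^{ABC}=\bigoplus_j s_j\,\sigma_j^{Ab_j^L}\otimes\sigma_j^{b_j^R C}$. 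This turns the minimization defining $\Delta(\phi^{ABC})$ into an outer minimization over decompositions of $\mathcal{H}^B$ and an inner minimization over $(s_j,\sigma_j^{Ab_j^L},\sigma_j^{b_j^R C})$.

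For the inner minimization with a fixed decomposition, I would expand $\relent(\phi^{ABC}\|\sigma^{ABC})=-\entropy(\phi^{ABC})-\tr(\phi^{ABC}\log_2\sigma^{ABC})$. Because $\log_2\sigma^{ABC}$ is block-diagonal with respect to the projectors $P_j$, the cross term splits as
\begin{align*}
-\tr(\phi^{ABC}\log_2\sigma^{ABC}) &= -\sum_j q(\phi)_j\log_2 s_j \\
&\quad -\sum_j q(\phi)_j\tr\!\left[\omega(\phi)^{Ab_j^L}_j\log_2\sigma_j^{Ab_j^L}\right] \\
&\quad -\sum_j q(\phi)_j\tr\!\left[\omega(\phi)^{b_j^R C}_j\log_2\sigma_j^{b_j^R C}\right].
\end{align*}
Adding and subtracting the corresponding entropies of $\{q(\phi)_j\}$, $\omega(\phi)^{Ab_j^L}_j$, and $\omega(\phi)^{b_j^R C}_j$, the expression reorganizes into
\[
\entropy\!\left(\bigoplus_j q_j\omega(\phi)^{Ab_j^L}_j\otimes\omega(\phi)^{b_j^R C}_j\right)-\entropy(\phi^{ABC})+\relent(q\|s)+\sum_j q_j\relent(\omega_j^{Ab_j^L}\|\sigma_j^{Ab_j^L})+\sum_j q_j\relent(\omega_j^{b_j^R C}\|\sigma_j^{b_j^R C}),
\]
using additivity of the entropy across the direct sum. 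By non-negativity of relative entropy, the last three terms vanish precisely when $s_j=q(\phi)_j$, $\sigma_j^{Ab_j^L}=\omega(\phi)^{Ab_j^L}_j$, $\sigma_j^{b_j^R C}=\omega(\phi)^{b_j^R C}_j$, delivering the desired formula upon the outer minimization over decompositions.

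The main obstacle is the support condition: the identity $\tr(\phi^{ABC}(\I^{AC}\otimes P_j))=q(\phi)_j$ and the accompanying block-decomposition of $\phi^{ABC}$ via the $P_j$'s are only meaningful when $\phi^{ABC}$ is block-diagonal with respect to the chosen $\bigoplus_j\mathcal{H}_{b_j^L}\otimes\mathcal{H}_{b_j^R}$; otherwise the Markov candidate $\sigma$ has $\mathrm{supp}(\phi)\not\subseteq\mathrm{supp}(\sigma)$ and $\relent(\phi\|\sigma)=+\infty$, so those decompositions cannot attain the infimum. Verifying that the infimum is indeed achieved on a decomposition that is compatible with $\phi^{ABC}$ in this sense (and that, on any such decomposition, the block-diagonal structure of $\phi^{ABC}$ is inherited automatically from the cross-term computation) is the technical heart of the argument. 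Once this is settled, additivity of the von Neumann entropy over the direct sum makes the final formula immediate.
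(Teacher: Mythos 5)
The paper offers no proof of this lemma — it is imported verbatim from the cited reference [Ibinson2008] — and your argument is essentially the proof given there: the Hayden--Jozsa--Petz--Winter structure theorem to parametrize Markov states, a block-diagonal expansion of the cross term $-\tr(\phi\log_2\sigma)$, and saturation of the three nonnegative relative-entropy penalties at $s_j=q(\phi)_j$, $\sigma_j^{Ab_j^L}=\omega(\phi)_j^{Ab_j^L}$, $\sigma_j^{b_j^RC}=\omega(\phi)_j^{b_j^RC}$. Up to the last paragraph, this is correct and complete.

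The step you leave open — the ``support condition'' you call the technical heart — is not actually an obstacle, and your description of it rests on a misconception: you do \emph{not} need $\phi^{ABC}$ to be block-diagonal with respect to the chosen decomposition, and it is false that incompatible decompositions force $\relent(\phi\|\sigma)=+\infty$. Two observations close the gap. First, since $\log_2\sigma$ is block-diagonal with respect to the $P_j$, one has $\tr(\phi\log_2\sigma)=\sum_j\tr(P_j\phi P_j\log_2\sigma)$ for \emph{arbitrary} $\phi$; the cross term automatically sees only the pinched state $\sum_j P_j\phi P_j$, so your expansion is valid with no compatibility assumption, and $q(\phi)_j$, $\omega(\phi)_j$ are well defined regardless. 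Second, the support inclusion needed for finiteness of the optimal candidate follows from the pinching argument: if $\bra{v}P_j\phi P_j\ket{v}=0$ for all $j$ then $\phi^{1/2}P_j\ket{v}=0$ for all $j$ and hence $\phi^{1/2}\ket{v}=0$, so $\mathrm{supp}(\phi)\subseteq\mathrm{supp}\bigl(\sum_j P_j\phi P_j\bigr)=\bigoplus_j\mathrm{supp}(P_j\phi P_j)\subseteq\bigoplus_j\mathrm{supp}\bigl(\omega(\phi)_j^{Ab_j^L}\bigr)\otimes\mathrm{supp}\bigl(\omega(\phi)_j^{b_j^RC}\bigr)=\mathrm{supp}(\sigma)$, the middle inclusion being the standard fact $\mathrm{supp}(\rho^{XY})\subseteq\mathrm{supp}(\rho^{X})\otimes\mathrm{supp}(\rho^{Y})$. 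Thus every decomposition of $\mathcal{H}^B$ yields a finite candidate value attained exactly as you computed, and the outer minimum is attained because in finite dimensions the set of Markov states is compact and $\relent(\phi\|\cdot)$ is lower semicontinuous. With these two remarks inserted, your proof is complete.
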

Then, using Lemma~\ref{Slemma_decomp_markov}, we prove the asymptotic continuity of the relative entropy of non-Markovianity. 
\begin{theorem}
    Let $\mathcal{H}^A$, $\mathcal{H}^B$ and $\mathcal{H}^C$ be 
    finite-dimensional quantum systems. 
    Define $D = \mathrm{dim}(\mathcal{H}^A\otimes \mathcal{H}^B \otimes \mathcal{H}^C)$. 
    Let $\phi^{ABC}, \psi^{ABC} \in \mathcal{D}(\mathcal{H}^A\otimes \mathcal{H}^B \otimes \mathcal{H}^C)$ be quantum states. 
    Then, the relative entropy of non-Markovianity $\Delta$ is 
    asymptotically continuous; \textit{i.e.}, 
    if 
    \begin{equation}\label{S_ineq_markov_assumption}
        \frac{\|\phi^{ABC}-\psi^{ABC}\|_1}{2} \leqq \frac{1}{3}, 
    \end{equation}
    it holds that
    \begin{equation}
        |\Delta(\phi^{ABC}) - \Delta(\psi^{ABC})| \leqq 2(\|\phi^{ABC}-\psi^{ABC}\|_1\log_2 D + h_2(\|\phi^{ABC}-\psi^{ABC}\|_1)). 
    \end{equation}
\end{theorem}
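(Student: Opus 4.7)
The plan is to combine the decomposition formula from Lemma~\ref{Slemma_decomp_markov} with two applications of the Fannes--Audenaert inequality (Lemma~\ref{lemma_Fannes}). Without loss of generality I would assume $\Delta(\phi^{ABC}) \geqq \Delta(\psi^{ABC})$ (otherwise interchange the roles of the two states), and let $\mathcal{H}^B = \bigoplus_j \mathcal{H}_{b_j^L} \otimes \mathcal{H}_{b_j^R}$ denote the decomposition that attains the minimum in Lemma~\ref{Slemma_decomp_markov} for $\psi^{ABC}$. Using this same decomposition as a \emph{suboptimal} choice for $\phi^{ABC}$ gives an upper bound on $\Delta(\phi^{ABC})$, so writing $\Lambda(\rho) \coloneqq \bigoplus_j q(\rho)_j\, \omega(\rho)_j^{Ab_j^L} \otimes \omega(\rho)_j^{b_j^R C}$, the triangle inequality yields
\begin{equation*}
    |\Delta(\phi^{ABC}) - \Delta(\psi^{ABC})| \leqq |\entropy(\Lambda(\phi)) - \entropy(\Lambda(\psi))| + |\entropy(\phi^{ABC}) - \entropy(\psi^{ABC})|.
\end{equation*}
The second term is immediately bounded by Lemma~\ref{lemma_Fannes}.

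For the first term, the strategy is to establish a Lipschitz-type bound
\begin{equation*}
    \|\Lambda(\phi) - \Lambda(\psi)\|_1 \leqq 2\,\|\phi^{ABC} - \psi^{ABC}\|_1,
\end{equation*}
after which applying Lemma~\ref{lemma_Fannes} with $\epsilon = \|\phi^{ABC}-\psi^{ABC}\|_1$ gives the matching entropy bound. To derive this Lipschitz bound I would decompose the map $\Lambda$ into two stages. First, the block-pinching $\rho \mapsto \sum_j (\I^{AC} \otimes P_j)\rho(\I^{AC} \otimes P_j)$ is a CPTP channel and is therefore nonexpansive in trace norm. Second, within each block one replaces $\omega_j^{Ab_j^L b_j^R C}$ by the tensor product of its marginals $\omega_j^{Ab_j^L} \otimes \omega_j^{b_j^R C}$; this is not a channel, but using the standard inequality $\|\alpha\otimes\beta - \alpha'\otimes\beta'\|_1 \leqq \|\alpha-\alpha'\|_1 + \|\beta-\beta'\|_1$ together with monotonicity of trace distance under partial trace keeps each block's contribution under control, and summing over blocks produces the constant $2$.

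The main obstacle is the block-wise bookkeeping when the weights $q(\phi)_j$ and $q(\psi)_j$ disagree: one has to introduce a telescoping intermediate term such as $q(\phi)_j\,\omega(\psi)_j^{Ab_j^L}\otimes\omega(\psi)_j^{b_j^R C}$, apply the triangle inequality, and argue that the discrepancy in weights collects into $\sum_j|q(\phi)_j - q(\psi)_j|$, which is itself bounded by $\|\phi^{ABC}-\psi^{ABC}\|_1$ via the data-processing inequality for the measurement channel $\rho \mapsto \sum_j q(\rho)_j\ket{j}\bra{j}$. The subtle part is showing that, after this manipulation, the bound does not blow up when some $q_j$ are small, so that the final constant remains $2$ rather than something dimension-dependent. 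The hypothesis $\|\phi^{ABC}-\psi^{ABC}\|_1/2 \leqq 1/3$ enters precisely at this last step: it ensures $\epsilon = \|\phi^{ABC}-\psi^{ABC}\|_1 \leqq 2/3 < 1$ so that Lemma~\ref{lemma_Fannes} is applicable to $\Lambda(\phi),\Lambda(\psi)$ with this enlarged $\epsilon$, and moreover places $h_2$ in a regime where $h_2(\|\phi^{ABC}-\psi^{ABC}\|_1/2) \leqq h_2(\|\phi^{ABC}-\psi^{ABC}\|_1)$, allowing the two Fannes--Audenaert contributions to be combined into the clean bound $2(\|\phi^{ABC}-\psi^{ABC}\|_1\log_2 D + h_2(\|\phi^{ABC}-\psi^{ABC}\|_1))$.
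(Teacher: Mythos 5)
Your overall strategy coincides with the paper's: invoke Lemma~\ref{Slemma_decomp_markov}, fix the optimal decomposition for the state with the smaller minimum, use it suboptimally for the other state, split by the triangle inequality, and control each entropy difference with Lemma~\ref{lemma_Fannes}. The one genuine gap is your claimed Lipschitz bound $\|\Lambda(\phi)-\Lambda(\psi)\|_1\leqq 2\|\phi^{ABC}-\psi^{ABC}\|_1$. You correctly identify the weight mismatch between $q(\phi)_j$ and $q(\psi)_j$ as the obstacle, but you do not resolve it, and the telescoping you sketch does not yield the constant $2$. Concretely, writing $q_j\alpha_j\otimes\beta_j-q'_j\alpha'_j\otimes\beta'_j=(q_j\alpha_j-q'_j\alpha'_j)\otimes\beta_j+q'_j\alpha'_j\otimes(\beta_j-\beta'_j)$, the first summand contributes at most $\|\phi^{ABC}-\psi^{ABC}\|_1$ after pinching and partial trace, but the second requires bounding $\sum_j q'_j\|\beta_j-\beta'_j\|_1$, and since $\beta_j$ is normalized by $q_j$ rather than $q'_j$ one picks up an extra $\sum_j|q_j-q'_j|$, giving $2\|\phi^{ABC}-\psi^{ABC}\|_1$ for this summand alone and hence $3\|\phi^{ABC}-\psi^{ABC}\|_1$ in total. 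This factor $3$ is exactly what the paper obtains (via a three-term telescoping through $q(\psi)_j\,\omega(\psi)_j^{Ab_j^L}\otimes\omega(\phi)_j^{b_j^RC}$ and $q(\phi)_j\,\omega(\psi)_j^{Ab_j^L}\otimes\omega(\phi)_j^{b_j^RC}$), and it also explains the precise form of hypothesis~\eqref{S_ineq_markov_assumption}: one needs $\tfrac{3}{2}\|\phi^{ABC}-\psi^{ABC}\|_1\leqq 1$ for Lemma~\ref{lemma_Fannes} to apply to $\Lambda(\phi),\Lambda(\psi)$, which is $\|\phi^{ABC}-\psi^{ABC}\|_1/2\leqq 1/3$; under your constant $2$ the weaker condition $\|\phi^{ABC}-\psi^{ABC}\|_1/2\leqq 1/2$ would suffice, so the stated hypothesis itself signals that the intended constant is $3$.

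The good news is that the theorem still follows with the constant $3$: the two Fannes--Audenaert contributions are $\tfrac{\epsilon}{2}\log_2 D+h_2(\tfrac{\epsilon}{2})$ and $\tfrac{3\epsilon}{2}\log_2 D+h_2(\tfrac{3\epsilon}{2})$ with $\epsilon=\|\phi^{ABC}-\psi^{ABC}\|_1$, and concavity of $h_2$ gives $h_2(\tfrac{\epsilon}{2})+h_2(\tfrac{3\epsilon}{2})\leqq 2h_2(\epsilon)$, which combines everything into the stated bound $2(\epsilon\log_2 D+h_2(\epsilon))$. So you should either supply a proof of the constant $2$ (which does not follow from the manipulations you describe) or, more simply, accept the constant $3$ and close the argument with the concavity step as the paper does.
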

    \begin{proof}
        From Lemma \ref{Slemma_decomp_markov}, it holds that
        \begin{align}
            &|\Delta(\phi^{ABC}) - \Delta(\psi^{ABC})| \\
            &\leqq |\entropy(\phi) - \entropy(\psi)| + \left|\entropy\left(\bigoplus_j q(\phi)_j\omega(\phi)^{Ab_j^L}_j\otimes\omega(\phi)^{b_j^RC}_j\right) - \entropy\left(\bigoplus_j q(\psi)_j\omega(\psi)^{Ab_j^L}_j\otimes\omega(\psi)^{b_j^RC}_j\right)\right|,
        \end{align}
        where $\omega$ specifies the optimal decomposition for $\psi$. 
        Here, we assume that
        \begin{equation*} 
        \min_{b_j^L,b_j^R} \entropy\left(\bigoplus_j q(\phi)_j\omega(\phi)^{Ab_j^L}_j\otimes\omega(\phi)^{b_j^RC}_j\right) \geqq \min_{b_j^L,b_j^R} \entropy\left(\bigoplus_j q(\psi)_j\omega(\psi)^{Ab_j^L}_j\otimes\omega(\psi)^{b_j^RC}_j\right).
        \end{equation*}
        Otherwise, we can replace $\phi^{ABC}$ and $\psi^{ABC}$, and apply the same argument. 
        Now, 
        \begin{align}
            &\left\|\bigoplus_j q(\phi)_j\omega(\phi)^{Ab_j^L}_j\otimes\omega(\phi)^{b_j^RC}_j - \bigoplus_j q(\psi)_j\omega(\psi)^{Ab_j^L}_j\otimes\omega(\psi)^{b_j^RC}_j\right\|_1\\
            &\leqq \left\|\bigoplus_j q(\phi)_j\omega(\phi)^{Ab_j^L}_j\otimes\omega(\phi)^{b_j^RC}_j - \bigoplus_j q(\psi)_j\omega(\psi)^{Ab_j^L}_j\otimes\omega(\phi)^{b_j^RC}_j\right\|_1\\
            &\quad + \left\|\bigoplus_j q(\psi)_j\omega(\psi)^{Ab_j^L}_j\otimes\omega(\phi)^{b_j^RC}_j - \bigoplus_j q(\phi)_j\omega(\psi)^{Ab_j^L}_j\otimes\omega(\phi)^{b_j^RC}_j\right\|_1\\
            &\quad + \left\|\bigoplus_j q(\phi)_j\omega(\psi)^{Ab_j^L}_j\otimes\omega(\phi)^{b_j^RC}_j - \bigoplus_j q(\psi)_j\omega(\psi)^{Ab_j^L}_j\otimes\omega(\psi)^{b_j^RC}_j\right\|_1\\
            &= \sum_j \left\| q(\phi)_j\omega(\phi)^{Ab_j^L}_j -  q(\psi)_j\omega(\psi)^{Ab_j^L}_j\right\|_1
            + \sum_j \left| q(\psi)_j -  q(\phi)_j\right|_1
            + \sum_j \left\|q(\phi)_j\omega(\phi)^{b_j^RC}_j - q(\psi)_j\omega(\psi)^{b_j^RC}_j\right\|_1\\
            &\leqq 3\sum_j\left\|(\I^{AC}\otimes P_j)(\phi^{ABC} - \psi^{ABC})(\I^{AC}\otimes P_j)\right\|_1\\
            &\leqq 3\left\|\bigoplus_j (\I^{AC}\otimes P_j)(\phi^{ABC} - \psi^{ABC})(\I^{AC}\otimes P_j)\right\|_1\\
            &\leqq 3\left\|\phi^{ABC} - \psi^{ABC}\right\|_1.
        \end{align}
        Therefore, if $\|\phi^{ABC}-\psi^{ABC}\|_1/2 \leqq 1/3$, 
        \begin{equation}\label{S_ineq_markov1}
            \frac{1}{2}\left\|\bigoplus_j q(\phi)_j\omega(\phi)^{Ab_j^L}_j\otimes\omega(\phi)^{b_j^RC}_j - \bigoplus_j q(\psi)_j\omega(\psi)^{Ab_j^L}_j\otimes\omega(\psi)^{b_j^RC}_j\right\|_1\leqq \frac{3}{2}\|\phi^{ABC}-\psi^{ABC}\|_1 \leqq 1, 
        \end{equation}
        and from Lemma~\ref{lemma_Fannes} (Fannes-Audenaert inequality), we have
        \begin{align}
            &|\Delta(\phi^{ABC}) - \Delta(\psi^{ABC})|\\
            &\leqq \frac{\|\phi^{ABC}-\psi^{ABC}\|_1}{2}\log_2D + h_2\left(\frac{\|\phi^{ABC}-\psi^{ABC}\|_1}{2}\right) + \frac{3\|\phi^{ABC}-\psi^{ABC}\|_1}{2}\log_2D + h_2\left(\frac{3\|\phi^{ABC}-\psi^{ABC}\|_1}{2}\right)\\
            &\leqq 2(\|\phi^{ABC}-\psi^{ABC}\|_1\log_2D + h_2(\|\phi^{ABC}-\psi^{ABC}\|_1)), 
        \end{align}
        where the last inequality follows from the concavity of $h_2$.
    \end{proof}

    \begin{remark}
        One may remove assumption~\eqref{S_ineq_markov_assumption} in the following way. 
        If we do not assume Eq.~\eqref{S_ineq_markov_assumption}, instead of Eq.~\eqref{S_ineq_markov1}, we have 
        \begin{equation}
            \frac{1}{2}\left\|\bigoplus_j q(\phi)_j\omega(\phi)^{Ab_j^L}_j\otimes\omega(\phi)^{b_j^RC}_j - \bigoplus_j q(\psi)_j\omega(\psi)^{Ab_j^L}_j\otimes\omega(\psi)^{b_j^RC}_j\right\|_1\leqq \min\left\{\frac{3}{2}\|\phi^{ABC}-\psi^{ABC}\|_1, 1\right\}.
        \end{equation}
        This leads to
        \begin{equation}
            |\Delta(\phi^{ABC}) - \Delta(\psi^{ABC})| 
            \leqq \left\{
            \begin{array}{cc}
            2(\|\phi^{ABC}-\psi^{ABC}\|_1\log_2D + h_2(\|\phi^{ABC}-\psi^{ABC}\|_1)) & \left(\|\phi^{ABC}-\psi^{ABC}\|_1 \leqq \frac{2}{3}\right),\\
            \frac{\|\phi^{ABC}-\psi^{ABC}\|_1}{2}\log_2D + h_2\left(\frac{\|\phi^{ABC}-\psi^{ABC}\|_1}{2}\right) + \log_2D & \left(\|\phi^{ABC}-\psi^{ABC}\|_1 > \frac{2}{3}\right).
            \end{array}
            \right.
        \end{equation}
        However, we here adopt assumption~\eqref{S_ineq_markov_assumption} for simplicity of presentation, which does not affect the argument of asymptotic continuity. 
    \end{remark}

    \section{Non-Gaussianity}~\label{SSec_NG}
    In this section, we investigate a consistent resource measure in the resource theory of non-Gaussianity. 
    In the following, we consider an infinite-dimensional Hilbert space. 
    Here, we recall the basic concepts of Gaussian quantum information~\cite{Braunstein2005,Weedbrook2012}.
    Consider an $N$-mode bosonic system and define real quadratic field operators $\hat{x} \coloneqq (\hat{q}_1,\hat{p}_1,\ldots,\hat{q}_N,\hat{p}_N)$. 
    A quantum state $\phi$ is described by its Wigner characteristic function $\chi(\xi,\phi) = \tr[\phi \hat{D}(\xi)]$.
    Here, $\xi = (\xi_1,\ldots,\xi_{2N}) \in \mathbb{R}^{2N}$ is a real-valued vector, and $\hat{D}(\xi) \coloneqq \exp(i \hat{x}^\mathrm{T}\Omega\xi)$ is the Weyl displacement operator with $\Omega \coloneqq i\bigoplus_{k=1}^N Y$, where $Y$ is the Pauli-$Y$ matrix.
    Then, a quantum state $\psi$ is a Gaussian state if its characteristic function $\chi(\xi,\psi)$ is expressed as a Gaussian function whose form is determined only by the mean and covariance matrix of $\hat{x}$ with respect to $\psi$
    \begin{equation}
        \chi(\xi,\psi) = \exp\left(-\frac{1}{2}\xi^{\mathrm{T}}(\Omega\Lambda\Omega^{\mathrm{T}})\xi - i(\Omega\bar{x})^{\mathrm{T}}\xi\right), 
    \end{equation}
    where $\bar{x}$ is the mean of $\hat{x}$ with respect to the state $\psi$ and $\Lambda$ is its covariance matrix defined as $\bar{x}_{i} \coloneqq \braket{\hat{x}_i} \coloneqq \tr(\hat{x}_i\phi)$ and $\Lambda{ij} \coloneqq \braket{\hat{x}_i\hat{x}_j + \hat{x}_j\hat{x}_i}/2 - \braket{\hat{x}_i}\braket{\hat{x}_j}$ for all $i,j \in \{1,2,\ldots,2n\}$.

    \subsection{Nonconvex}~\label{SSubsec_nonconvex}
    We first provide the definition of the relative entropy of non-Gaussianity. 
    Note that since the sum of two Gaussian functions is not necessarily Gaussian, the set of Gaussian states $\mathcal{G}$ is not convex. 
    \begin{definition}  
           Let $\mathcal{H}$ be an infinite-dimensional Hilbert space. 
           Let $\phi \in \mathcal{D}(\mathcal{H})$ be a state. 
           Define the relative entropy of non-Gaussianity $\delta$ as
           \begin{equation}
            \delta[\phi] \coloneqq \min_{\psi \in \mathcal{G}} \relent(\phi\|\psi),
           \end{equation}
           where $\mathcal{G}$ is the set of the Gaussian states. 
    \end{definition}
    It is known that the relative entropy of non-Gaussianity can be written as
    \begin{equation}
        \delta[\phi] \coloneqq \entropy(\phi_{\textup{G}}) - \entropy(\phi), 
    \end{equation}
    where $\phi_{\textup{G}} \in \mathcal{G}$ is the Gaussification of $\phi$, which has the same mean and covariance matrix as $\phi$~\cite{Geroni2008}. 
    
    In the following theorem, we show that the relative entropy of non-Gaussianity is not continuous even under a reasonable energy constraint by a counterexample. 
    \begin{theorem}
        Let $\mathcal{H}$ be a single-mode space. 
        The relative entropy of $\delta[\cdot]$ is not continuous; that is, 
        there exists $\epsilon>0$ such that for any $0<\epsilon'\leq1$, 
        there exist states $\rho, \sigma \in \mathcal{D}(\mathcal{H})$ 
        with energy constraint $\tr(H\rho), \tr(H\sigma) \leqq E $ for a given Hamiltonian $H = \hbar\omega a^\dagger a$ and some positive real number $E$ 
        such that
        \begin{align}
            &\frac{1}{2}\|\rho - \sigma\|_1 \leqq \epsilon',\\
            &|\delta[\rho] - \delta[\sigma]| > \epsilon. 
        \end{align} 
    \end{theorem}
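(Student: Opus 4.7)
The plan is to exhibit an explicit Fock-diagonal counterexample, following the sketch in the main text, and to leverage the known representation $\delta[\phi]=\entropy(\phi_{\mathrm G})-\entropy(\phi)$ where $\phi_{\mathrm G}$ is the Gaussification of $\phi$. Concretely, I would fix $E>0$, take $\sigma=\ket{0}\bra{0}$ (the vacuum), and for each small parameter $\epsilon\in(0,1)$ define a two-level Fock mixture of the form
\begin{equation*}
\rho_\epsilon=\epsilon\,\ket{n_\epsilon}\bra{n_\epsilon}+(1-\epsilon)\,\ket{0}\bra{0},\qquad n_\epsilon=\left\lceil\frac{E}{\epsilon\hbar\omega}\right\rceil.
\end{equation*}
The role of $\epsilon$ is to make $\rho_\epsilon$ trace-norm close to the vacuum while pumping a large amount of energy into a single Fock layer. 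The target counterexample will be $(\rho,\sigma)=(\rho_\epsilon,\sigma)$ for $\epsilon$ chosen suitably small compared to a prescribed $\epsilon'$.

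The proof then proceeds in four short steps. First, I would verify the energy constraint: a direct computation gives $\tr(H\rho_\epsilon)=\epsilon\hbar\omega n_\epsilon\leqq E+\hbar\omega$ (the $\hbar\omega$ coming from the ceiling), and $\tr(H\sigma)=0$, so both states lie in the prescribed energy shell after trivially rescaling $E$. Second, since $\rho_\epsilon-\sigma=\epsilon(\ket{n_\epsilon}\bra{n_\epsilon}-\ket{0}\bra{0})$ and the two Fock projectors are orthogonal, $\|\rho_\epsilon-\sigma\|_1=2\epsilon$, which can be made smaller than any prescribed $\epsilon'$. Third, because $\sigma$ is Gaussian, $\delta[\sigma]=0$. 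For $\rho_\epsilon$ I would use the Fock-diagonality to read off $\entropy(\rho_\epsilon)=h_2(\epsilon)\to 0$, and compute the first two quadrature moments: $\braket{\hat q}_{\rho_\epsilon}=\braket{\hat p}_{\rho_\epsilon}=0$ and $\braket{\hat q^2}_{\rho_\epsilon}=\braket{\hat p^2}_{\rho_\epsilon}=\epsilon n_\epsilon+\tfrac12\to E/\hbar\omega+\tfrac12$, with $\braket{\hat q\hat p+\hat p\hat q}=0$. Consequently the Gaussification $(\rho_\epsilon)_{\mathrm G}$ is the thermal state of mean photon number $\bar n\approx E/\hbar\omega$, whose entropy is $g(\bar n)\coloneqq(\bar n+1)\log_2(\bar n+1)-\bar n\log_2\bar n$.

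Fourth, putting the pieces together yields
\begin{equation*}
\delta[\rho_\epsilon]=\entropy((\rho_\epsilon)_{\mathrm G})-\entropy(\rho_\epsilon)=g(\bar n_\epsilon)-h_2(\epsilon)\xrightarrow[\epsilon\to0]{}g(E/\hbar\omega),
\end{equation*}
so $|\delta[\rho_\epsilon]-\delta[\sigma]|\to g(E/\hbar\omega)>0$ while $\|\rho_\epsilon-\sigma\|_1\to 0$. Choosing any $\epsilon<g(E/\hbar\omega)/2$ and $\epsilon$ small enough in terms of $\epsilon'$ provides the required pair for every $\epsilon'\in(0,1]$, establishing discontinuity. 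I do not expect any serious obstacle: the only mildly delicate point is the bookkeeping around the ceiling $n_\epsilon=\lceil E/(\epsilon\hbar\omega)\rceil$, which produces an $O(\epsilon)$ correction to the second moment and hence a vanishing correction to $g(\bar n_\epsilon)$ by continuity of $g$. The essential mechanism is that the vanishing weight $\epsilon$ suppresses the von Neumann entropy $\entropy(\rho_\epsilon)$ but not the covariance matrix, so the Gaussification remains a highly mixed thermal state independent of $\epsilon$, producing a bounded-away-from-zero gap in $\delta$.
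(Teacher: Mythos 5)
Your proposal is correct and follows essentially the same route as the paper's proof: the identical two-level Fock counterexample $\alpha\ket{m}\bra{m}+(1-\alpha)\ket{0}\bra{0}$ versus the vacuum, with the gap evaluated via the thermal-state Gaussification of Fock-diagonal states and the entropy formula $(\bar n+1)\log_2(\bar n+1)-\bar n\log_2\bar n$. The only cosmetic differences are your use of a ceiling rather than the paper's floor for the Fock index (forcing the harmless replacement of $E$ by $E+\hbar\omega$) and a limiting argument in place of the paper's explicit monotonicity analysis of $g(\alpha)=f(E/\hbar\omega-\alpha)-h_2(\alpha)$.
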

    \begin{proof}
        Let $0<\alpha\leqq 1$ be a real number. 
        Define an integer
        \begin{equation}
            m \coloneqq \left\lfloor\frac{E}{\alpha\hbar\omega} \right\rfloor,
        \end{equation}
        where $\lfloor{}\cdots{}\rfloor$ is the floor function.
        Then, define $\rho$ and $\omega$ as
        \begin{align}
            &\rho \coloneqq \alpha\ket{m}\bra{m} + \left(1-\alpha\right)\ket{0}\bra{0},\\
            &\sigma \coloneqq \ket{0}\bra{0}. 
        \end{align}
        Then, it holds that
        \begin{equation}
            \frac{1}{2}\|\rho - \sigma \|_1 = \alpha. 
        \end{equation}
        Moreover, it holds that
        \begin{align}
            &\tr(H\rho) = \alpha\hbar\omega m \leqq E,\\
            &\tr(H\sigma) = 0 \leqq E. 
        \end{align}
        Since the Gaussification $\omega_{G} \in \mathcal{G}$ of a Fock-diagonal state
        \begin{equation}
            \omega = \sum_{l=0}^\infty p_l\ket{l}\bra{l}
        \end{equation}
        is given as
        \begin{equation}
            \omega_{\textup{G}} = \sum_{l=0}^\infty \frac{n^l}{(n+1)^{l+1}}\ket{l}\bra{l}
        \end{equation}
        with $n = \tr{a^\dagger a \omega} = \sum_{l=0}^\infty lp_l$~\cite{Marian2013a,Marian2013b}, it holds that
        \begin{align}
            \delta(\omega)
            &= \entropy(\omega_{\textup{G}}) - \entropy(\omega)\\
            &= -\sum_{l=0}^\infty \frac{n^l}{(n+1)^{l+1}}\log_2 \frac{n^l}{(n+1)^{l+1}} - \entropy(\omega)\\
            &= (n+1)\log_2(n+1) - n\log_2n -\entropy(\omega)\\
            &= (n+1)\log_2(n+1) - n\log_2n + \sum_{l=0}^\infty p_l\log_2p_l. 
        \end{align}
        Therefore, it holds that
        \begin{equation}
            |\delta[\rho] - \delta[\sigma]| = \left(\alpha m+1\right) \log_2\left(\alpha m+1\right) - \alpha m\log_2(\alpha m) - h_2\left(\alpha\right)
        \end{equation}
        Define a function $f(x)\coloneqq (x+1)\log_2(x+1) - x\log_2x$ so that 
        we can write
        \begin{equation}
            |\delta[\rho] - \delta[\sigma]| = f(\alpha m) - h_2(\alpha). 
        \end{equation}
        Since $f'(x) = \log_2(1+1/x) > 0$, $f$ is an increasing function. 
        In addition, it holds that
        \begin{equation}
            \alpha m = \alpha\left\lfloor \frac{E}{\alpha \hbar\omega} \right\rfloor > \frac{E}{\hbar \omega} -\alpha. 
        \end{equation}
        Therefore, it holds that
        \begin{equation}
            |\delta[\rho] - \delta[\sigma]| > f\left(\frac{E}{\hbar \omega} -\alpha\right) - h_2(\alpha) \eqqcolon g(\alpha). 
        \end{equation}
        Since $f$ is increasing, $g(0) = f(E/\hbar\omega) > f(0) = 0$. 
        Therefore, since $g$ is continuous, there must be a real number $0 < \alpha_0\leqq 1/2$ such that 
        $g(\alpha_0) > 0$. 

        Fix such an $\alpha_0$. 
        We prove that we can take $\epsilon \coloneqq g(\alpha_0)$.
        For a given $0 < \epsilon' \leqq 1$, define two states $\rho$ and $\sigma$ as
        \begin{align}
            &\rho \coloneqq 
            \left \{
            \begin{array}{ll}
            \epsilon'\ket{m}\bra{m} + \left(1-\epsilon'\right)\ket{0}\bra{0} & (0<\epsilon'\leqq \alpha_0)\\
            \alpha_0\ket{m}\bra{m} + \left(1-\alpha_0\right)\ket{0}\bra{0} & (\epsilon' \geqq \alpha_0)
            \end{array}
            \right.\\
            &\sigma \coloneqq \ket{0}\bra{0}. 
        \end{align}
        Then, it holds that
        \begin{equation}
            \frac{1}{2}\left\|\rho - \sigma\right\|_1 \leqq \epsilon'. 
        \end{equation} 
        Since $f$ is increasing and since $h_2(x)$ is increasing in $0<x\leqq 1/2$, 
        $g(\alpha) = f\left(E/\hbar\omega -\alpha\right) - h_2(\alpha)$ is decreasing in $0<\alpha\leqq 1/2$.
        Therefore, it holds that
        \begin{equation}
            |\delta[\rho] - \delta[\sigma]| > f\left(\frac{E}{\hbar \omega} -\alpha_0\right) - h_2(\alpha_0) = g(\alpha_0) = \epsilon, 
        \end{equation}
        which implies that $\delta$ is discontinuous. 
    \end{proof}

\subsection{Convex}~\label{SSubsec_convex}
Here, we adopt another framework of the QRT of non-Gaussianity where we consider the convex hull of the set of Gaussian states as the set of free states~\cite{Takagi2018,Albarelli2018}. 
We consider the following modified version of relative entropy of non-Gaussianity, 
which we call the relative entropy of convex non-Gaussianity. 
\begin{definition}  
    Let $\mathcal{H}$ be an infinite-dimensional Hilbert space. 
    Let $\phi \in \mathcal{D}(\mathcal{H})$ be a state. 
    Define the relative entropy of non-Gaussianity $\delta_{\mathrm{conv}}$ as
    \begin{equation}
        \delta_{\textup{conv}}[\phi] \coloneqq \inf_{\psi \in \mathrm{conv}(\mathcal{G})} \relent(\phi\|\psi), 
    \end{equation}
    where the infimum can be replaced with the minimum under the energy constraint that we discuss in the following.
\end{definition}
Here, we prove that $\delta_{\textup{conv}}$ is asymptotically continuous if a given Hamiltonian satisfies a certain condition (See Eq.~\eqref{eq:Gaussian_condition}). 
Hereafter, we suppose that the smallest eigenvalue of the Hamiltonian $H$ is fixed to zero. 
To discuss asymptotic continuity under energy constraint, we consider the following subset of density operators.
\begin{definition}
    Let $H$ be a Hamiltonian of the system $\mathcal{H}$. 
    Let $E$ be a positive number corresponding to an upper bound of the energy of the system such that $E \geqq 0$.
    Then, we define $\mathcal{D}_{H,E}(\mathcal{H}) \subset \mathcal{D}(\mathcal{H})$ as a subset of the set of density operators that contains all density operators $\rho$ satisfying $\tr(H\rho) \leqq E$; 
    that is, $\mathcal{D}_{H,E}(\mathcal{H}) \coloneqq \{\rho \in \mathcal{D}(\mathcal{H}): \tr(H\rho) \leqq E\}$. 
\end{definition}
The core of the proof is the uniform continuity of $\delta_{\textup{conv}}$, which is shown in Ref.~\cite{Shirokov2018} in a general form. 
One of the main components of the proof is the following Gibbs state defined with a Hamiltonian $H$ and an energy $E$
\begin{equation}
    \gamma(H,E) \coloneqq \frac{\mathrm{e}^{-\beta(H,E)H}}{Z(H,E)}, 
\end{equation}
where $Z(H,E)\coloneqq \tr(\mathrm{e}^{-\beta(H,E)H})$ and $\beta(H,E)$ is the solusion of $\tr[\mathrm{e}^{-\beta H}(H-E)] = 0$. 
To show asymptotic continuity of $\delta_{\mathrm{conv}}$, we assume a certain condition given in the following lemma. 
\begin{lemma}[\cite{Shirokov2018}]
    Let $H$ be a Hamiltonian of the system $\mathcal{H}$. 
    Let $E$ be a positive number corresponding to an upper bound of the energy of the system such that $E \geqq 0$.
    Then, $H$ satisfies the condition 
    \begin{equation}\label{eq:Gaussian_condition}
        \lim_{\lambda \to 0} \left[\tr(\mathrm{e}^{-\lambda H})\right]^\lambda = 0, 
    \end{equation}
   if and only if it holds that $H(\gamma(H,E)) = o(\sqrt{E})$. 
\end{lemma}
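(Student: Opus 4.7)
The plan is to establish the equivalence by expressing both conditions in a common thermodynamic parameterization and exploiting the Legendre duality between the log-partition function and the entropy of the Gibbs state. First, I would write the von~Neumann entropy of $\gamma(H,E) = e^{-\beta H}/Z$ in closed form: a direct computation using $\tr(H\gamma(H,E)) = E$ and $\beta = \beta(H,E)$, $Z = Z(H,E)$ gives the identity
\begin{equation*}
    \entropy(\gamma(H,E)) = \beta(H,E)\, E + \log Z(H,E).
\end{equation*}
This recasts the asymptotic condition $\entropy(\gamma(H,E)) = o(\sqrt{E})$ as a joint growth condition on $\beta(H,E)E$ and $\log Z(H,E)$ as $E \to \infty$.

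Next, I would parameterize everything by $\lambda = \beta(H,E)$ and use the standard thermodynamic identity $E = -\partial_\lambda \log Z(H,\lambda)$, obtained by differentiating $Z(H,\lambda) = \tr(e^{-\lambda H})$ with respect to $\lambda$. Observing that
\begin{equation*}
    [\tr(e^{-\lambda H})]^\lambda = \exp\!\bigl(\lambda \log Z(H,\lambda)\bigr),
\end{equation*}
the spectral condition~\eqref{eq:Gaussian_condition} translates into an explicit asymptotic statement about $\lambda \log Z(H,\lambda)$ in the appropriate limit. Combined with the Legendre-type relation $\entropy(\gamma(H,E)) = \beta E + \log Z$ and the implicit dependence $\lambda = \beta(H,E)$, both sides of the claimed equivalence become statements about the same object: the rate at which $\log Z(H,\lambda)$ grows as $\lambda$ tends to its relevant endpoint.

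The core of the argument is then to match the rate $o(\sqrt{E})$ on the entropy side with the vanishing of $[\tr(e^{-\lambda H})]^\lambda$ on the partition-function side. For one direction, I would use the convexity and monotonicity of $\log Z(H,\lambda)$ in $\lambda$ to bound $\log Z(H,\lambda)$ by $\entropy(\gamma(H,E))$ evaluated at the conjugate energy, and then invoke the hypothesis $\entropy(\gamma(H,E)) = o(\sqrt{E})$ together with the implicit equation $E = -\partial_\lambda \log Z$ to get the desired decay of $\lambda\log Z(H,\lambda)$. For the converse, I would start from the decay hypothesis and invert the Legendre transform, using convexity again to propagate the growth bound on $\log Z(H,\lambda)$ into the required $o(\sqrt{E})$ bound on $\beta E + \log Z$.

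The main obstacle is precisely this rate-matching step: the exponent $\sqrt{E}$ is not arbitrary but is matched to the specific form $[\tr(e^{-\lambda H})]^\lambda$ through the thermodynamic change of variables, and one must verify that no logarithmic correction leaks through. This requires careful control of the inverse function $E \mapsto \beta(H,E)$ near the endpoint, likely via monotonicity of $\beta(H,E)$ and uniform bounds on $\partial_\lambda^2 \log Z$. Since the statement is cited from Ref.~\cite{Shirokov2018}, a complete derivation would follow the detailed convexity estimates developed there, and the role of the present excerpt is only to invoke this equivalence as a hypothesis enabling the asymptotic continuity of $\delta_{\textup{conv}}$.
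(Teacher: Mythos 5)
The paper offers no proof of this lemma: it is imported verbatim from Ref.~\cite{Shirokov2018}, so there is no internal argument to compare yours against. Your sketch does identify the machinery on which Shirokov's actual proof runs --- the identity $\entropy(\gamma(H,E)) = \beta(H,E)\,E + \log Z(H,E)$, the dual relation $E = -\partial_\lambda \log Z$, and the rewriting $[\tr(\mathrm{e}^{-\lambda H})]^\lambda = \exp(\lambda \log Z(\lambda))$ --- so the skeleton is the right one. But as a proof it is incomplete in exactly the place where all the content lives: the ``rate-matching step'' that converts the behaviour of $\lambda\log Z(\lambda)$ at the endpoint into $\beta E + \log Z = o(\sqrt{E})$, and back, is named as the main obstacle and then deferred to the convexity estimates of Ref.~\cite{Shirokov2018}. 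Since the lemma \emph{is} that equivalence, deferring that step leaves nothing actually proved; in particular you never pin down which endpoint of $\lambda$ is conjugate to $E\to\infty$ (it is $\lambda\to 0^+$ for an unbounded spectrum), nor why the exponent is exactly $\sqrt{E}$ rather than some other power.

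There is also a concrete payoff you missed by not pushing your own change of variables through. With the ground energy fixed to zero, $Z(\lambda) = \tr(\mathrm{e}^{-\lambda H}) \geqq 1$, hence $[\tr(\mathrm{e}^{-\lambda H})]^\lambda = \exp(\lambda\log Z(\lambda)) \geqq 1$ for every $\lambda > 0$. The condition as printed --- that this quantity tends to $0$ --- is therefore unsatisfiable by any admissible Hamiltonian (and the main text and the supplement even disagree about whether the limit is taken as $\lambda\to\infty$ or $\lambda\to 0$). The correct form of the cited result is $\lim_{\lambda\to 0^+}[\tr(\mathrm{e}^{-\lambda H})]^\lambda = 1$, equivalently $\lambda\log Z(\lambda)\to 0$, which is what the harmonic-oscillator Hamiltonian invoked elsewhere in the paper actually satisfies. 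Your first translation step, carried out honestly, would have exposed this immediately; any completed version of your argument must be run against the corrected condition, not the one quoted in the statement.
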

In the following proposition, we reformulate the statement of Proposition 3 in Ref.~\cite{Shirokov2018} for $\delta_{\textup{conv}}$. 
\begin{proposition}\label{non-Gaussianity_uniform_continuity}
    Let $H$ be a Hamiltonian of the system $\mathcal{H}$. 
    Let $E$ be a positive constant corresponding to an upper bound of the energy of the system.
    Suppose that the condition~\eqref{eq:Gaussian_condition} is satisfied. 
    Then, the relative entropy of convex non-Gaussianity $\delta_{\textup{conv}}$ is uniformly continuous on $\mathcal{D}_{H,E}(\mathcal{H})$; that is, 
    it holds that
    \begin{equation}
        |\delta_{\textup{conv}}[\phi] - \delta_{\textup{conv}}[\psi]| \leqq \sqrt{2\epsilon}\entropy\left(\gamma(H,E/\epsilon)\right) + g_2(\sqrt{2\epsilon})
    \end{equation}
    for any $\phi,\psi \in \mathcal{D}_{H,E}(\mathcal{H})$ such that $\|\phi-\psi\|_1 \leqq \epsilon \leqq 1/2$, where $g_2(x) \coloneqq (x+1)\log_2(x+1) - x\log_2x = (x+1)h_2(x/(x+1))$ defined in $x\in[0,1]$.
\end{proposition}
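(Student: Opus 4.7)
The plan is to recognize that this statement is a direct specialization of Proposition~3 in Ref.~\cite{Shirokov2018}, which gives a uniform continuity bound for the relative entropy distance to an \emph{arbitrary} closed convex set of states on an energy-constrained input set. Since $\mathrm{conv}(\mathcal{G})$ is by definition convex, the main task is to verify that the hypotheses of Shirokov's result apply to the set $\mathrm{conv}(\mathcal{G})$ under the energy constraint, and then to transcribe the resulting bound into the form stated in the proposition.

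First, I would reduce the minimization to an energy-constrained one. For any $\phi \in \mathcal{D}_{H,E}(\mathcal{H})$, any free state $\psi \in \mathrm{conv}(\mathcal{G})$ that is a candidate minimizer of $\relent(\phi\|\psi)$ can, by the usual argument (divergence of the relative entropy when the test state's energy far exceeds the input's), be assumed to satisfy $\tr(H\psi) \leqq E'$ for some $E'$ of the same order as $E$; joint convexity of $\relent(\cdot\|\cdot)$ together with convexity of $\mathrm{conv}(\mathcal{G}) \cap \mathcal{D}_{H,E'}(\mathcal{H})$ then makes $\delta_{\textup{conv}}$ a convex functional on $\mathcal{D}_{H,E}(\mathcal{H})$. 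Next, I would apply the coupling construction of Shirokov/Winter: given $\phi,\psi$ with $\tfrac{1}{2}\|\phi-\psi\|_1 \leqq \epsilon$, write $\phi = (1-t)\omega + t\phi'$ and $\psi = (1-t)\omega + t\psi'$ with $t = \sqrt{2\epsilon}$ and $\omega$ a common component built from the positive and negative parts of $\phi-\psi$. Joint convexity of $\relent$, together with the convexity of $\mathrm{conv}(\mathcal{G})$ just noted, yields
\begin{equation*}
    \delta_{\textup{conv}}[\phi] \leqq (1-t)\delta_{\textup{conv}}[\omega] + t\,\delta_{\textup{conv}}[\phi'],
\end{equation*}
and symmetrically, whence $|\delta_{\textup{conv}}[\phi]-\delta_{\textup{conv}}[\psi]|$ is controlled by $t$ times the entropies of the auxiliary states $\phi'$ and $\psi'$.

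Finally, I would bound these auxiliary entropies by the Gibbs entropy $\entropy(\gamma(H,E/\epsilon))$. After the rescaling implicit in the coupling, the auxiliary states $\phi',\psi'$ satisfy an energy constraint of order $E/t^2 = E/(2\epsilon)$, so Gibbs' variational principle bounds $\entropy(\phi'),\entropy(\psi') \leqq \entropy(\gamma(H,E/\epsilon))$. The remaining $g_2(\sqrt{2\epsilon})$ term arises from the binary-entropy contribution in the coupling decomposition (the Fannes-type correction in infinite dimension, replacing $h_2$ by $g_2$). Assembling these inequalities produces the claimed bound.

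The main obstacle will be the fact that $\mathrm{conv}(\mathcal{G})$ is not a closed subset of $\mathcal{D}(\mathcal{H})$ in the trace-norm topology, so a priori the infimum in $\delta_{\textup{conv}}$ need not be attained and joint convexity must be used with care (e.g.\ passing to minimizing sequences and exploiting lower semicontinuity of the relative entropy). A secondary subtlety is checking that the energy constraint on the minimizer can indeed be imposed uniformly in $\phi \in \mathcal{D}_{H,E}(\mathcal{H})$, which is precisely where the hypothesis~\eqref{eq:Gaussian_condition} is used via the estimate $\entropy(\gamma(H,E/\epsilon)) = o(\sqrt{E/\epsilon})$, ensuring that the right-hand side of the claimed inequality remains finite and tends to zero as $\epsilon \to 0$.
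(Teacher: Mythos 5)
Your proposal is correct and takes essentially the same route as the paper: the paper gives no independent proof of this proposition but simply reformulates Proposition~3 of Ref.~\cite{Shirokov2018} for the convex set $\mathrm{conv}(\mathcal{G})$, exactly as you propose. Your additional sketch of the Alicki--Fannes--Winter coupling, the energy bound $E/(2\epsilon)$ on the auxiliary states, the Gibbs-entropy estimate, and the role of condition~\eqref{eq:Gaussian_condition} in making $\entropy(\gamma(H,E/\epsilon))$ finite with $\sqrt{\epsilon}\,\entropy(\gamma(H,E/\epsilon))\to 0$ is a faithful account of the internals of the cited result, which the paper itself does not reproduce.
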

Using this proposition, we prove asymptotic continuity in the following theorem,
\begin{theorem}~\label{non-Gaussianity_asymptotic_continuity}
    Let $H$ be a Hamiltonian satisfying the condition~\eqref{eq:Gaussian_condition}. 
    Let $E$ be a positive constant. 
    Then, the relative entropy of convex non-Gaussianity $\delta_{\textup{conv}}$ is asymptotically continuous on $\mathcal{D}_{H,E}(\mathcal{H})$; that is, 
    for any sequences $\{\phi_n \in \mathcal{D}_{H_n,nE}(\mathcal{H})\}_n$ and $\{\psi_n \in \mathcal{D}_{H_n,nE}(\mathcal{H}) \}_n$ such that $\lim_{n\to\infty} \|\phi_n - \psi_n\|_1 \to 0$ 
    where the Hamiltonian $H_n = H\otimes \I \otimes \cdots \otimes \I + \I \otimes H \otimes \I \otimes \cdots \otimes \I + \cdots + \I\otimes \cdots \otimes \I\otimes H$ satisfies the condition~\eqref{eq:Gaussian_condition}, 
    it holds that
    \begin{equation}
        \lim_{n\to\infty} \frac{|\delta_{\textup{conv}}[\phi_n] - \delta_{\textup{conv}}[\psi_n]|}{n} = 0. 
    \end{equation}
\end{theorem}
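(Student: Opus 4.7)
The plan is to reduce the theorem to a direct application of the preceding uniform-continuity result (Proposition~\ref{non-Gaussianity_uniform_continuity}) by exploiting the tensor-product structure of the additive Hamiltonian $H_n$. Given sequences $\{\phi_n\},\{\psi_n\}\subset\mathcal{D}_{H_n,nE}(\mathcal{H}^{\otimes n})$ with $\epsilon_n\coloneqq\|\phi_n-\psi_n\|_1\to 0$, I would simply set $\epsilon=\epsilon_n$ in the proposition, which is legitimate once $n$ is large enough that $\epsilon_n\leqq 1/2$, obtaining
\begin{equation*}
    |\delta_{\textup{conv}}[\phi_n]-\delta_{\textup{conv}}[\psi_n]|\leqq\sqrt{2\epsilon_n}\,\entropy\!\left(\gamma(H_n,nE/\epsilon_n)\right)+g_2(\sqrt{2\epsilon_n}).
\end{equation*}

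The next step is the central observation that the Gibbs state of $H_n$ factorizes. Since $H_n=\sum_{i=1}^n H^{(i)}$ is a sum of commuting single-site copies of $H$, for every $\beta\geqq 0$ one has $\mathrm{e}^{-\beta H_n}/\tr(\mathrm{e}^{-\beta H_n})=\bigl(\mathrm{e}^{-\beta H}/\tr(\mathrm{e}^{-\beta H})\bigr)^{\otimes n}$, and the energy $\tr(H_n\,\gamma(H,\beta)^{\otimes n})=n\tr(H\,\gamma(H,\beta))$ scales linearly in $n$. Matching the total energy $nE/\epsilon_n$ therefore forces the single-site energy to be $E/\epsilon_n$, so that $\gamma(H_n,nE/\epsilon_n)=\gamma(H,E/\epsilon_n)^{\otimes n}$ and $\entropy(\gamma(H_n,nE/\epsilon_n))=n\,\entropy(\gamma(H,E/\epsilon_n))$. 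Dividing the previous inequality by $n$ then gives
\begin{equation*}
    \frac{|\delta_{\textup{conv}}[\phi_n]-\delta_{\textup{conv}}[\psi_n]|}{n}\leqq\sqrt{2\epsilon_n}\,\entropy\!\left(\gamma(H,E/\epsilon_n)\right)+\frac{g_2(\sqrt{2\epsilon_n})}{n}.
\end{equation*}

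Finally, I would invoke the preceding lemma: condition~\eqref{eq:Gaussian_condition} for $H$ is equivalent to $\entropy(\gamma(H,E'))=o(\sqrt{E'})$ as $E'\to\infty$. Since $E/\epsilon_n\to\infty$, writing $\sqrt{2\epsilon_n}\,\entropy(\gamma(H,E/\epsilon_n))=\sqrt{2E}\cdot\bigl[\entropy(\gamma(H,E/\epsilon_n))/\sqrt{E/\epsilon_n}\bigr]$ shows that the first term tends to zero, while $g_2(\sqrt{2\epsilon_n})/n\to 0$ is immediate from continuity of $g_2$ at zero. Both terms vanish, yielding the claim. Note that the hypothesis that $H_n$ also satisfies~\eqref{eq:Gaussian_condition} is used implicitly to ensure the proposition applies at the $n$-mode level, but the key quantitative input is the single-mode version of the lemma, propagated through the factorization.

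The only nontrivial point is the Gibbs-state factorization, so the main obstacle is essentially bookkeeping: one must verify that the inverse-temperature parameter implicitly determined by an energy bound $nE/\epsilon_n$ on $H_n$ agrees with the one determined by $E/\epsilon_n$ on $H$. Since $\beta(H,E')$ is defined by $\tr[\mathrm{e}^{-\beta H}(H-E')]=0$ and the analogous equation for $H_n$ separates across tensor factors, this identification is immediate, and no further subtlety arises.
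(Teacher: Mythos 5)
Your proof follows essentially the same route as the paper's: apply the uniform-continuity bound of Proposition~\ref{non-Gaussianity_uniform_continuity} at the $n$-mode level, factorize the Gibbs state as $\gamma(H_n,nE/\epsilon)=\gamma(H,E/\epsilon)^{\otimes n}$ so that $\entropy(\gamma(H_n,nE/\epsilon))=n\,\entropy(\gamma(H,E/\epsilon))$, and divide by $n$. You are in fact slightly more explicit than the paper, which stops at the resulting inequality and leaves implicit the final step you spell out, namely that condition~\eqref{eq:Gaussian_condition} yields $\entropy(\gamma(H,E'))=o(\sqrt{E'})$ and hence $\sqrt{2\epsilon}\,\entropy(\gamma(H,E/\epsilon))\to 0$ as $\epsilon\to 0$.
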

\begin{proof}
    Using Proposition~\ref{non-Gaussianity_uniform_continuity}, it holds that
    \begin{equation}
        |\delta_{\textup{conv}}[\phi_n] - \delta_{\textup{conv}}[\psi_n]| \leqq \sqrt{2\epsilon}\entropy\left(\gamma(H_n,nE/\epsilon)\right) + g_2(\sqrt{2\epsilon}). 
    \end{equation}
    Since $\entropy\left(\gamma(H_n,nE/\epsilon)\right) = \entropy\left(\gamma(H,E/\epsilon)^{\otimes n}\right) = n\entropy\left(\gamma(H,E/\epsilon)\right)$, we have
    \begin{equation}
        \frac{|\delta_{\textup{conv}}[\phi_n] - \delta_{\textup{conv}}[\psi_n]|}{n} \leqq \sqrt{2\epsilon}\entropy\left(\gamma(H,E/\epsilon)\right) + \frac{1}{n}g_2(\sqrt{2\epsilon}). 
    \end{equation}   
\end{proof}
\begin{remark}[Tighter bound for asymptotic continuity]
In fact, a tighter (and almost optimal) uniform-continuity bound was proposed in Ref.~\cite{Shirokov2020} (Theorem 1), 
and we may use this bound instead of Proposition~\ref{non-Gaussianity_uniform_continuity}. 
However, since the assumptions needed to derive the bound are more complicated for technical reasons, 
we adopt the bound in Proposition~\ref{non-Gaussianity_uniform_continuity} in this paper. 
One may rewrite the statements of Proposition~\ref{non-Gaussianity_uniform_continuity} and Theorem~\ref{non-Gaussianity_asymptotic_continuity} using the tighter bound 
with this modification of the assumptions, but we here use the bound in Ref.~\cite{Shirokov2020} for simplicity of analysis.
\end{remark}
Here, we consider the energy-constrained version of the state conversion 
and show that the regularized relative entropy of convex non-Gaussianity 
\begin{equation}
    \delta_{\textup{conv}}^\infty[\phi] \coloneqq \lim_{n\to\infty}\frac{\delta_{\textup{conv}}[\phi^{\otimes n}]}{n}
\end{equation}
becomes a consistent resource measure under the energy-constrained state conversion. 
\begin{definition}
    Let $H_1$ and $H_2$ be Hamiltonians of the system $\mathcal{H}_1$ and $\mathcal{H}_2$ respectively. 
    Let $E_1$ and $E_2$ be some positive constants. 
    Let $\mathcal{O}^{(\mathrm{EC})}(\mathcal{H}_1^{(H_1,E_1)}\to\mathcal{H}_2^{(H_2,E_2)})$ denote the energy-constrained subset of the free operations; that is,
    an opeartion $\mathcal{N} \in \mathcal{O}^{(\mathrm{EC})}(\mathcal{H}_1^{(H_1,E_1)}\to\mathcal{H}_2^{(H_2,E_2)})$ is a map from $\mathcal{D}_{H_1,E_1}(\mathcal{H}_1)$ to $\mathcal{D}_{H_2,E_2}(\mathcal{H}_2)$.   
    
    Then, consider a quantum system $\mathcal{H}$ with a Hamiltonian $H$. Let $E$ be a positive constant corresponding to an upper bound of the energy of the system. 
    For the $m$-fold quantum system $\mathcal{H}^{\otimes m}$, we adopt the following Hamiltonian $H_m \coloneqq H\otimes \I \otimes \cdots \otimes \I + \I \otimes H \otimes \I \otimes \cdots \otimes \I + \cdots + \I \otimes \cdots \otimes \I \otimes H$.
    Define the set of achievable rates under the energy-constrained free operations as
    \begin{equation}
        \label{eq:conversion_rate_set}
      \begin{aligned}
      &\mathcal{R}^{(H,E)}\left(\phi\to\psi\right)\\
      &\coloneqq
      \Big\{r\geqq 0:\exists\left(\mathcal{N}_n\in\mathcal{O}^{(\mathrm{CE})}\left((\mathcal{H}^{\otimes n})^{(H_n,nE)}\to(\mathcal{H}^{\otimes \left\lceil rn\right\rceil})^{(H_{\lceil rn \rceil},\lceil rn \rceil E)}\right):n\in\mathbb{N}\right),
      \liminf_{n\to\infty}\left\|\mathcal{N}_n\left(\phi^{\otimes n}\right)-\psi^{\otimes \left\lceil rn\right\rceil}\right\|_1 = 0\Big\},
      \end{aligned}
    \end{equation}
    where $\phi^{\otimes 0}\coloneqq 1$. 
    With respect to this achievable rate set, an asymptotic state conversion rate $r^{H,E}\left(\phi\to\psi\right)$ is defined as
    \begin{equation}
      \label{eq:conversion_rate}
      r^{(H,E)}\left(\phi\to\psi\right)\coloneqq\sup\mathcal{R}^{(H,E)}\left(\phi\to\psi\right).
    \end{equation}
\end{definition}
\begin{theorem}
    Let $H$ be a Hamiltonian satisfying the condition~\eqref{eq:Gaussian_condition}. 
    Let $E$ be a positive constant. 
    For states $\phi,\psi \in \mathcal{D}_{H,E}(\mathcal{H})$, it holds that 
    \begin{equation}
        \delta_{\textup{conv}}^{\infty}[\psi]r^{(H,E)}\left(\phi\to\psi\right) \leqq \delta_{\textup{conv}}^{\infty}[\phi].
    \end{equation}
\end{theorem}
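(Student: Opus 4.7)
The plan is to mimic the standard proof of the sufficient-condition lemma (Lemma~\ref{lemma_sufficient_condition}), but with the energy constraint bookkept everywhere so that the energy-constrained asymptotic continuity (Theorem~\ref{non-Gaussianity_asymptotic_continuity}) can be invoked at the decisive step. First I would note two preliminary facts that make the regularization well-defined and usable: (i) \emph{monotonicity} of $\delta_{\textup{conv}}$ under free operations, which follows because any free operation $\mathcal{N}$ sends $\mathrm{conv}(\mathcal{G})$ into itself, combined with the data-processing inequality for the relative entropy; and (ii) \emph{subadditivity} $\delta_{\textup{conv}}[\phi\otimes\phi']\leqq\delta_{\textup{conv}}[\phi]+\delta_{\textup{conv}}[\phi']$, which holds because if $\tau\in\mathrm{conv}(\mathcal{G})$ and $\tau'\in\mathrm{conv}(\mathcal{G})$ then $\tau\otimes\tau'$ is again a convex combination of Gaussian states (tensor products of Gaussian states are Gaussian). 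Fekete's lemma then guarantees that $\delta_{\textup{conv}}^{\infty}[\phi]=\lim_{n\to\infty}\delta_{\textup{conv}}[\phi^{\otimes n}]/n$ exists in $[0,\infty]$.

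Next I would fix an arbitrary achievable rate $r$ with $0<r<r^{(H,E)}(\phi\to\psi)$; by definition of the supremum over $\mathcal{R}^{(H,E)}(\phi\to\psi)$ there exists a sequence of energy-constrained free operations $\mathcal{N}_n\in\mathcal{O}^{(\mathrm{EC})}\bigl((\mathcal{H}^{\otimes n})^{(H_n,nE)}\to(\mathcal{H}^{\otimes\lceil rn\rceil})^{(H_{\lceil rn\rceil},\lceil rn\rceil E)}\bigr)$ such that $\|\mathcal{N}_n(\phi^{\otimes n})-\psi^{\otimes\lceil rn\rceil}\|_1\to 0$. By construction the output $\mathcal{N}_n(\phi^{\otimes n})$ lies in $\mathcal{D}_{H_{\lceil rn\rceil},\lceil rn\rceil E}(\mathcal{H}^{\otimes\lceil rn\rceil})$, and $\psi^{\otimes\lceil rn\rceil}$ satisfies $\tr(H_{\lceil rn\rceil}\psi^{\otimes\lceil rn\rceil})=\lceil rn\rceil\tr(H\psi)\leqq\lceil rn\rceil E$, so both sequences fit the hypothesis of Theorem~\ref{non-Gaussianity_asymptotic_continuity} on the $\lceil rn\rceil$-fold system. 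Applying that theorem gives
\begin{equation*}
    \frac{\bigl|\delta_{\textup{conv}}[\mathcal{N}_n(\phi^{\otimes n})]-\delta_{\textup{conv}}[\psi^{\otimes\lceil rn\rceil}]\bigr|}{\lceil rn\rceil}\xrightarrow[n\to\infty]{}0.
\end{equation*}

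Combining monotonicity $\delta_{\textup{conv}}[\mathcal{N}_n(\phi^{\otimes n})]\leqq\delta_{\textup{conv}}[\phi^{\otimes n}]$ with the display above, I would obtain
\begin{equation*}
    \frac{\delta_{\textup{conv}}[\psi^{\otimes\lceil rn\rceil}]}{\lceil rn\rceil}\leqq\frac{\delta_{\textup{conv}}[\phi^{\otimes n}]}{n}\cdot\frac{n}{\lceil rn\rceil}+o(1).
\end{equation*}
Letting $n\to\infty$ and using $\lceil rn\rceil/n\to r$ yields $\delta_{\textup{conv}}^{\infty}[\psi]\leqq\delta_{\textup{conv}}^{\infty}[\phi]/r$, i.e.\ $r\,\delta_{\textup{conv}}^{\infty}[\psi]\leqq\delta_{\textup{conv}}^{\infty}[\phi]$. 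Finally, taking the supremum over all $r<r^{(H,E)}(\phi\to\psi)$ delivers the claim. The degenerate cases $r^{(H,E)}(\phi\to\psi)=0$ and $\delta_{\textup{conv}}^{\infty}[\phi]=\infty$ are trivial; the case $\delta_{\textup{conv}}^{\infty}[\psi]=\infty$ with $r^{(H,E)}(\phi\to\psi)>0$ forces $\delta_{\textup{conv}}^{\infty}[\phi]=\infty$ by the same inequality.

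The main obstacle I anticipate is purely technical rather than conceptual: making sure that Theorem~\ref{non-Gaussianity_asymptotic_continuity} genuinely applies with $m=\lceil rn\rceil$ in place of $n$, i.e.\ that the Hamiltonian $H_{\lceil rn\rceil}$ inherits the hypothesis~\eqref{eq:hamiltonian_hypothesis} from $H$ (which it does, since the single-mode Gibbs partition function factorizes over tensor factors), and that the energy budget $\lceil rn\rceil E$ aligns exactly with the ``per-copy $E$'' form assumed in that theorem. Everything else is the familiar monotonicity-plus-asymptotic-continuity argument, so the real work is the verification that the energy-constrained conversion setting is compatible with the continuity bound inherited from Ref.~\cite{Shirokov2018}.
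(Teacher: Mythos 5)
Your proposal is correct and follows essentially the same route as the paper's proof: monotonicity of $\delta_{\textup{conv}}$ under the energy-constrained free operations, the definition of $r^{(H,E)}(\phi\to\psi)$ to extract a converting sequence, the energy-constrained (asymptotic/uniform) continuity on the $\lceil rn\rceil$-fold system, and a limit in $n$ followed by optimizing over $r$. The only cosmetic difference is that the paper plugs in the explicit uniform-continuity bound $\sqrt{2\epsilon}\,\entropy(\gamma(H,E/\epsilon))+g_2(\sqrt{2\epsilon})$ from Proposition~\ref{non-Gaussianity_uniform_continuity} directly, whereas you invoke the derived asymptotic-continuity statement; the extra remarks on subadditivity and Fekete's lemma are a harmless (and reasonable) addition.
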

\begin{proof}
    Fix a positive number $\epsilon > 0$. 
    For an arbitrary positive number $\epsilon'>0$, take $r \coloneqq r^{(H,E)}\left(\phi\to\psi\right)-\epsilon'$.
    There exists a sequence of the energy-constrained free operations $\left(\mathcal{N}_n:n\in\mathbb{N}\right)$ such that
    \begin{equation}
        \left\|\mathcal{N}_n(\phi^{\otimes n}) - \psi^{\otimes \lceil rn \rceil}\right\|_1 < \epsilon. 
    \end{equation}
    holds for an infinitely large number of $n$. 
    Therefore, it holds that
    \begin{equation}
        \begin{aligned}
            \frac{\delta_{\textup{conv}}[\phi^{\otimes n}]}{n}
            &\geqq \frac{\delta_{\textup{conv}}[\mathcal{N}(\phi^{\otimes n})]}{n}\\
            &\geqq \frac{\delta_{\textup{conv}}[\psi^{\otimes{\lceil rn \rceil}}]}{n} - \sqrt{2\epsilon}\entropy\left(\gamma(H,E/\epsilon)\right) - \frac{1}{n}g_2(\sqrt{2\epsilon})\\
            &= \frac{\lceil rn \rceil}{n} \frac{\delta_{\textup{conv}}[\psi^{\otimes{\lceil rn \rceil}}]}{\lceil rn \rceil} - \sqrt{2\epsilon}\entropy\left(\gamma(H,E/\epsilon)\right) - \frac{1}{n}g_2(\sqrt{2\epsilon})\\
            &\geqq r\frac{\delta_{\textup{conv}}[\psi^{\otimes{\lceil rn \rceil}}]}{\lceil rn \rceil} - \sqrt{2\epsilon}\entropy\left(\gamma(H,E/\epsilon)\right) - \frac{1}{n}g_2(\sqrt{2\epsilon}).
        \end{aligned}
    \end{equation}
    Since we can take arbitrarily small $\epsilon$ and $\epsilon'$, it holds that
    \begin{equation}
        \frac{\delta_{\textup{conv}}[\phi^{\otimes n}]}{n} \geqq  r^{(H,E)}\left(\phi\to\psi\right)\frac{\delta_{\textup{conv}}[\psi^{\otimes{\lceil rn \rceil}}]}{\lceil rn \rceil}
    \end{equation}
    for an infinitely large subset of $n$. Taking limit $n \to \infty$, we have
    \begin{equation}
        \delta_{\textup{conv}}^{\infty}[\phi] \geqq  r^{(H,E)}\left(\phi\to\psi\right)\delta_{\textup{conv}}^{\infty}[\psi]. 
    \end{equation}
\end{proof}
\end{document}